\newtheorem{theorem}{Theorem}[section]
\newtheorem{assumption}[theorem]{Assumption}
\newtheorem{lemma}[theorem]{Lemma}
\newtheorem{corollary}[theorem]{Corollary}
\newtheorem{proposition}[theorem]{Proposition}
\newtheorem{definition}[theorem]{Definition}
\newtheorem{remark}[theorem]{Remark}
\newcommand{\<}{\langle}
\renewcommand{\>}{\rangle}
\newcommand{\new}{{\text{new}}}
\newcommand{\extra}{{\text{extra}}}
\newcommand{\Span}{{\text{range}}}
\newcommand{\rank}{{\text{rank}}}
\newcommand{\Ber}{{\text{Ber}}}
\newcommand{\Unif}{{\text{Unif}}}
\newcommand{\Succ}{{\text{Success}}}
\newcommand{\trace}{{\text{trace}}}
\newcommand{\old}{\text{old}}
\newcommand{\sgn}{\text{sgn}}
\newcommand{\R}{\mathbb{R}}
\newcommand{\bOmega}{\bar{\Omega}}
\newcommand{\tL}{\tilde{\Lb}}
\newcommand{\tS}{\tilde{\Sb}}
\newcommand{\tX}{\tilde{\X}}
\newcommand{\N}{\mathcal{N}}
\newcommand{\E}{\mathbb{E}}
\newcommand{\I}{\mathcal{I}}
\newcommand{\Pc}{\mathcal{P}}
\newcommand{\Pb}{\mathbb{P}}
\newcommand{\PO}{\mathcal{P}_{\Omega}}
\newcommand{\PT}{\mathcal{P}_{T_\new}}
\newcommand{\PGp}{\mathcal{P}_{\Gamma^\perp}}
\newcommand{\PP}{\mathcal{P}_\Pi}
\newcommand{\PPp}{\mathcal{P}_{\Pi^\perp}}
\newcommand{\POp}{\mathcal{P}_{\Omega^\perp}}
\newcommand{\POj}{\mathcal{P}_{\bOmega_j}}
\newcommand{\POzero}{\mathcal{P}_{\Omega_0}}
\newcommand{\A}{\mathbf{A}}
\newcommand{\B}{\mathbf{B}}
\newcommand{\D}{\mathbf{D}}
\newcommand{\Eb}{\mathbf{E}}
\newcommand{\X}{\mathbf{X}}
\newcommand{\Y}{\mathbf{Y}}
\newcommand{\Z}{\mathbf{Z}}
\newcommand{\Sb}{\mathbf{S}}
\newcommand{\Lb}{\mathbf{L}}
\newcommand{\M}{\mathbf{M}}
\newcommand{\G}{\mathbf{G}}
\newcommand{\Hb}{\mathbf{H}}
\newcommand{\Pbf}{\mathbf{P}}
\newcommand{\Q}{\mathbf{Q}}
\newcommand{\U}{\mathbf{U}}
\newcommand{\V}{\mathbf{V}}
\newcommand{\W}{\mathbf{W}}
\newcommand{\Rb}{\mathbf{R}}
\newcommand{\F}{\mathbf{F}}
\newcommand{\Sigmab}{\mathbf{\Sigma}}
\newcommand{\Ib}{\mathbf{I}}
\newcommand{\ab}{\mathbf{a}}
\newcommand{\e}{\mathbf{e}}
\newcommand{\s}{\mathbf{s}}
\newcommand{\x}{\mathbf{x}}
\newcommand{\y}{\mathbf{y}}
\newcommand{\lb}{\bm{\ell}}
\newcommand{\zero}{\mathbf{0}}
\begin{document}

%\title{Modified-PCP: Modifying PCP for Sparse plus Low-rank Recovery with Partial Subspace Knowledge}
\title{Robust PCA with Partial Subspace Knowledge}

\author{
  \IEEEauthorblockN{Jinchun Zhan and Namrata Vaswani \\}
  \IEEEauthorblockA{ECE dept, Iowa State University, Ames, Iowa, USA\\
    Email: jzhan@iastate.edu, namrata@iastate.edu}
\thanks{A shorter version of this paper appears in the proceedings of ISIT 2014 \cite{modpcp_isit}. This work was supported in part by NSF grant CCF-1117125}
}
\maketitle

\begin{abstract}
In recent work, robust Principal Components Analysis (PCA) has been posed as a problem of recovering a low-rank matrix $\Lb$ and a sparse matrix $\Sb$ from their sum, $\M:= \Lb + \Sb$ and a provably exact convex optimization solution called PCP has been proposed. This work studies the following problem. Suppose that we have partial knowledge about the column space of the low rank matrix $\Lb$. Can we use this information to improve the PCP solution, i.e. allow recovery under weaker assumptions? We propose here a simple but useful modification of the PCP idea, called modified-PCP, that allows us to use this knowledge. We derive its correctness result which shows that, when the available subspace knowledge is accurate, modified-PCP indeed requires significantly weaker incoherence assumptions than PCP. Extensive simulations are also used to illustrate this. Comparisons with PCP and other existing work are shown for a stylized real application as well. Finally, we explain how this problem naturally occurs in many applications involving time series data, i.e. in what is called the online or recursive robust PCA problem. A corollary for this case is also given.
%e.g. in separating a video sequence into foreground and background layers, in which the subspace spanned by the background images is not fixed but changes over time and the changes are gradual. , where the subspace spanned by the columns of the low-rank matrix changes with column indices (time)
\end{abstract}

\section{Introduction}
Principal Components Analysis (PCA) is a widely used dimension reduction technique that finds a small number of orthogonal
basis vectors, called principal components, along which most of the variability of the dataset lies. Accurately computing the principal components in the presence of outliers is called robust PCA. Outlier is a loosely defined term that refers to any corruption that is not small compared to the true data vector and that occurs occasionally. As suggested
in \cite{error_correction_PCP_l1}, an outlier can be nicely modeled as a sparse vector.
The robust PCA problem occurs in various applications ranging from video analysis to recommender system design in the presence of outliers, e.g. for Netflix movies, to anomaly detection in dynamic networks \cite{rpca}. In recent work, Candes et al and Chandrasekharan et al \cite{rpca,rpca2} posed the robust PCA problem as one of separating a low-rank matrix $\Lb$ (true data matrix)  and a sparse matrix $\Sb$ (outliers' matrix) from their sum, $\M:= \Lb + \Sb$. They showed that by solving the following convex optimization program
\begin{equation}
\label{eq:pcp}
  \begin{array}{ll}
    \text{minimize}_{\tL,\tS}   & \quad \|\tL\|_* + \lambda \|\tS\|_1\\
    \text{subject to} & \quad \tL + \tS =  \M
  \end{array}
\end{equation}
it is possible to recover $\Lb$ and $\Sb$ exactly with high probability (w.h.p.) under mild assumptions. In \cite{rpca}, they called it principal components' pursuit (PCP).
Here $\|\tL\|_*$ denotes the nuclear norm of $\tL$ and $\|\tS\|_1$ denotes the $\ell_1$ norm of $\tS$ reshaped as a long vector. %{\color{blue}} %$\|\cdot\|_*$ means the nuclear norm and $\|\cdot\|_1$ means the $\ell_1$ norm,
This was among the first recovery guarantees for a practical (polynomial complexity) robust PCA algorithm. Since then, the batch robust PCA problem, or what is now also often called the sparse+low-rank recovery problem, has been studied extensively but theoretically and empirically, e.g. see \cite{error_correction_PCP_l1,novel_m_estimator,mccoy_tropp11, outlier_pursuit,RMD,noisy_undersampled_yuan,agarwal2012noisy,seidel2013prost,xu2013gosus,bouwmans2014robust}.%rpca_zhang
%compressivePCP,rpca_reduced,ranksparSanghavi,

%Applications in video analysis were also shown.
%\cite{rpca,rpca2, novel_m_estimator,mccoy_tropp11, outlier_pursuit,rpca_tropp, linear_inverse_prob, rpca_reduced, rpca_Giannakis,rpca_zhang,compressivePCP,noisy_undersampled_yuan,ranksparSanghavi,troppcondemix,agarwal2012noisy,baingana2013dynamic, mardani2013decentralized, mardani2013robust,mardani2012recovery,mardani2012network,mateos_anomaly,mateos_anomaly2,rrpcp_perf}. ??? decide tomorrow what all to delete - all not needed.

{\em Contribution: }
In this work we study the following problem. Suppose that we have a partial estimate of the column space of the low rank matrix $\Lb$. How can we use this information to improve the PCP solution, i.e. allow recovery under weaker assumptions? We propose here a simple but useful modification of the PCP idea, called {\em modified-PCP}, that allows us to use this knowledge. We derive its correctness result (Theorem \ref{maintheorem}) that provides explicit bounds on the various constants and on the matrix size that are needed to ensure exact recovery with high probability. Our result is used to argue that, as long as the  available subspace knowledge is accurate, modified-PCP requires significantly weaker incoherence assumptions than PCP. To prove the result, we use the overall proof approach of \cite{rpca} with some changes (explained in Sec \ref{proof_maintheorem}). By ``accurate" subspace knowledge, we mean that the number of missed directions and the number of extra directions in the available subspace knowledge is small compared to the rank of $\Lb$.%{\cb}
%We adapt the  proof techniques of \cite{rpca} to get this result. The key new ideas needed are summarized in the first paragraph of Sec \ref{proof_maintheorem}.

An important problem where partial subspace knowledge is available is in online or recursive robust PCA for sequentially arriving time series data, e.g. for video based foreground and background separation.  Video background sequences are well modeled as forming a low-rank but dense matrix because they change slowly over time and the changes are typically global. Foreground is a sparse image consisting of one or more moving objects. As explained in \cite{rrpcp_perf}, in this case, the subspace spanned by a set of consecutive columns of $\Lb$ does not remain fixed, but instead changes gradually over time. Also, often an initial short sequence of low-rank only data (without outliers) is available, e.g. in video analysis, it is easy to get an initial background-only sequence. For this application, modified-PCP can be used to design a piecewise batch solution that will be faster and will require weaker assumptions for exact recovery than PCP. This is made precise in Corollary \ref{cor}.% and the discussion below it.
%The subspace spanned by this background-only data can serve as the ``partial subspace knowledge" for the next set of $\alpha$ video frames that contain both background and foreground (sparse part).

We also show extensive simulation comparisons and some real data comparisons of modified-PCP with PCP and with other existing robust PCA solutions from literature. The implementation requires a fast algorithm for solving the modified-PCP program. We develop this  by modifying the Inexact Augmented Lagrange Multiplier Method of \cite{alm} and using the idea of \cite{hale2008fixed,cai2010singular} for the sparse recovery step.% The real data comparisons are for a face reconstruction / recognition application in the presence of outliers, e.g. eye-glasses or occlusions, that is also discussed in \cite{rpca}. %In this case, outlier-free training data consisting of face images taken under a few illumination conditions is used to obtain a partial subspace estimate. The test data consists of face images under different lighting conditions and with outliers.% for the test data matrix

%???? move later Modified-PCP can be interpreted as adapting the idea of modified-CS \cite{isitmodcs} to the robust PCA problem. Modified-CS solves the problem of sparse recovery with partial support knowledge.  Its solution idea is to try to find the vector that is sparsest on the complement set of the known support subject to the data constraint. Modified-PCP applies a similar idea to the  vector of singular values of the low rank matrix.

{\bf Notation. }
For a matrix $\X$, we denote by $\X^*$ the transpose of $\X$; denote by $\|\X\|_{\infty}$ the $\ell_{\infty}$ norm of $\X$ reshaped as a long vector, i.e., $\max_{i,j} |\X_{ij}|$; denote by $\|\X\|$ the operator norm or 2-norm; denote by $\|\X\|_F$ the Frobenius norm. %denote by $\|\X\|_*$ the nuclear norm; denote by $\|\X\|_1$ the $\ell_1$ norm of $\X$ reshaped as a long vector.

%Let $\Omega$ denote the support of $S$, i.e., $\Omega=\{(i,j): S(i,j)\neq 0\}$; let $\PO \X$ denote the projection of $\X$ onto $\Omega$ and $\POp \X$ denote the projection of $\X$ onto $\Omega^c$;
Let $\I$ denote the identity operator, i.e., $\I(\Y)=\Y$ for any matrix $\Y$. Let $\|\mathcal{A}\|$ denote the operator norm of operator $\mathcal{A}$, i.e., $\|\mathcal{A}\| = \sup_{\{\|\X\|_F = 1\}} \|\mathcal{A} \X\|_F$; let $\langle \X, \Y\rangle$ denote the Euclidean inner product between two matrices, i.e., trace($\X^*\Y$); let sgn$(\X)$ denote the entrywise sign of $\X$.

We let ${\cal P}_\Theta$ denote the orthogonal projection onto a linear subspace $\Theta$ of matrices.
We use $\Omega$ to denote the support set of $\Sb$, i.e., $\Omega=\{(i,j): \Sb(i,j)\neq 0\}$. As is done in \cite{rpca}, we also use $\Omega$ to denote the subspace spanned by the matrices supported on the set $\Omega$ (i.e. matrices whose entries are zero on the complement of the set $\Omega$).
For a matrix $\X$, we use $\PO \X$ to denote projection onto the subspace $\Omega$, i.e., $(\PO \X)_{ij}=\X_{ij},$ if $(i,j)\in \Omega$, and $(\PO \X)_{ij}=0,$ if $(i,j)\notin \Omega$. %{\color{blue}} %We also use $\Omega$ to denote the subspace spanned by all matrices supported on $\Omega$.
By $\Omega \sim \Ber(\rho)$ we mean that any matrix index $(i,j)$ has probability $\rho$ of being in the support independent of all others.
%, i.e., $\Pb((i,j)\in \Omega)=\rho$ for all $(i,j)$. %$\Omega$ is generated by Bernoulli distribution with parameter $\rho$.??? We use $\Omega$ to denote the support set of the sparse matrix $\Sb$.
%Thus $\PO \X$ zeros out the entries of $\X$ not supported on $\Omega$ and $\POp \X$ zeros out the entries of $\X$ supported on $\Omega$.of matrices

Given two matrices $\B$ and $\B_2$, $[\B \ \B_2]$ constructs a new matrix by concatenating matrices $\B$ and $\B_2$ in the horizontal direction. Let $\B_{\text{rem}}$ be a matrix containing some columns of $\B$. Then $\B \setminus \B_{\text{rem}}$ is the matrix $\B$ with columns in $\B_{\text{rem}}$ removed.% ??? replace $m,n$ by $n_1,n_2$ etc.

We say that $\U$ is a {\em basis matrix} if $\U^*\U=\Ib$ where $\Ib$ is the identity matrix. We use  $\e_i$ to refer to the $i^{th}$ column $\Ib$. For a matrix $\U$, we use $\Span(\U)$ to denote its column span.%nv

\section{Problem definition and proposed solution}

\subsection{Problem Definition} \label{probdef}
We are given a data matrix $\M \in \mathbb{R}^{n_1\times n_2}$ that satisfies
\begin{equation} \label{eq:M}
\M=\Lb+\Sb
\end{equation}
where $\Sb$ is a sparse matrix with support set $\Omega$ and $\Lb$ is a low rank matrix with reduced singular value decomposition (SVD)
\begin{equation}
\Lb \stackrel{\text{SVD}}{=} \U \Sigmab \V^*
\end{equation}

Let $r:= \rank(\Lb)$. We assume that we are given a basis matrix $\G$ so that $(\Ib-\G\G^*)\Lb$ has rank smaller than $r$. The goal is to recover $\Lb$ and $\Sb$ from $\M$ using $\G$. Let $r_G:=\rank(\G)$.

Define $\Lb_\new:= (\Ib-\G\G^*)\Lb$ with $r_\new: = \rank(\Lb_\new)$ and reduced SVD given by
\begin{equation} \label{eqVnew}
\Lb_\new:=(\Ib-\G\G^*)\Lb  \stackrel{\text{SVD}}{=} \U_\new \Sigmab_\new \V_\new^*
\end{equation}

%= \Span(\U) \cap \Span(\G)
%We explain this a little more. Let $\U_0$ be a basis matrix for $\Span(\Lb) \cap \Span(\G)$. Thus, there exist rotation matrices $\Rb_U, \Rb_G$ such that
%\begin{equation} \label{defUG}
%\G \Rb_G = [\U_0 \ \G_\extra] \ \text{and} \ \U \Rb_U = [\U_0 \ \U_\new]
%\end{equation}
%where $\G_\extra$ is a basis matrix with $\G_\extra{}^* \U_\new = 0$. Thus,
%\begin{equation} \label{eqG}
%\U = [\underbrace{(\G \Rb_G \setminus \G_\extra)}_{\U_0} \ \U_\new]\Rb_U^*.
%\end{equation}
%Define $r_0 := \rank(\U_0)$ and $r_\extra:= \rank(\G_\extra)$. Clearly, $r_G = r_0 + r_\extra$ and $r = r_0 + r_\new = (r_G - r_\extra) + r_\new$.
%
%We provide a derivation for (\ref{defUG}) in Appendix \ref{probdef_proof}.

%Here $\G_\extra$ is a basis matrix (contains a subset of columns of $\G \Rb_G$) with

We explain this a little more. With the above, it is easy to show that there exist rotation matrices $\Rb_U, \Rb_G$, and basis matrices $\G_\extra$ and $\U_\new$ with $\G_\extra{}^* \U_\new = 0$, such that
\begin{equation} \label{eqG}
\U = [\underbrace{(\G \Rb_G \setminus \G_\extra)}_{\U_0} \ \U_\new]\Rb_U^*.
\end{equation}
We provide a derivation for  this in Appendix \ref{probdef_proof}. Notice here that $\U_0$ be a basis matrix for $\Span(\Lb) \cap \Span(\G) = \Span(\U) \cap \Span(\G)$.

Define $r_0 := \rank(\U_0)$ and $r_\extra:= \rank(\G_\extra)$. Clearly, $r_G = r_0 + r_\extra$ and $r = r_0 + r_\new = (r_G - r_\extra) + r_\new$.
%{\cb}

%We use $\V_\new$ to denote the right singular vectors of the reduced SVD of $\Lb_\new:= (\Ib-\G\G^*)\Lb = \U_\new \Rb_2\Rb_2^* \U_\new^*\Lb = \U_\new \U_\new^* \Lb$. In other words, %\U_\new \U_\new^*
%\begin{equation} \label{eqVnew}
%\Lb_\new:=(\Ib-\G\G^*)\Lb  \stackrel{\text{SVD}}{=} \U_\new \Sigmab_\new \V_\new^*
%\end{equation}
%with $r_\new < r$

%Another way to explain the above is that $G = [(\U \setminus \U_\new) \ G_\extra]\Rb^*$. Also, let $\U_0: = [\U \setminus \U_\new]$ and let $r_0$ be its rank.

%For simplicity, we use subscript $\ol$ instead of $\old$ and $\e$ instead of $\new$ from now on.

%Given $\U_0$, $\M$, or equivalently, we know that $\M$ can be decomposed as
%\begin{equation}
%\label{eq:M}
%\M=\U_0 \X^* + \Lb_{\e} + S
%\end{equation}
%where $\Lb_{\e} = \U_\new\Sigmab_\new V_\new^*$. And we try to find $S$, $\Lb_\new$ and $\Lb=M-S$.

\subsection{Proposed Solution: Modified-PCP}

%\begin{equation}
%\label{eq:G}
%\Gamma := \{\U_0 \X^*, \, \X \in \mathbb{\Rb}^{n_2 \times r_0}\},
%\end{equation}
%
From the above model, it is clear that
\begin{equation}
\label{eq:GM}
%\PGp M= \Lb_{\new} + \PGp S
\Lb_\new + \G\X^* + \Sb =  \M
\end{equation}
for $\X = \Lb^*\G$.
%some $\X \in \mathbb{\Rb}^{n_2 \times r_\G}$.
We propose to recover $\Lb$ and $\Sb$ using $\G$ by solving the following {\bf Modified PCP} (mod-PCP) program
\begin{equation}
\label{eq:sdp}
  \begin{array}{ll}
    \text{minimize}_{\tL_\new,\tS, \tX}   & \quad \|\tL_\new\|_* + \lambda \|\tS\|_1\\
    \text{subject to} & \quad \tL_\new + \G\tX^* + \tS =  \M
  \end{array}
\end{equation}
Denote a solution to the above by $\hat{\Lb}_\new,\hat{\Sb},\hat{\X}$. Then, $\Lb$ is recovered as $\hat{\Lb} = \hat{\Lb}_\new + \G \hat{\X}^*$. Modified-PCP is inspired by an approach for sparse recovery using partial support knowledge called modified-CS \cite{modcsjournal}.

\section{Correctness Result}
We first state the assumptions required for the result and then give the main result and discuss it.

\subsection{Assumptions} % (or in the current case on some of its singular vectors)
As explained in \cite{rpca}, we need that $\Sb$ is not low rank in order to separate it from $\Lb_\new$. One way to ensure that $\Sb$ is full rank w.h.p. is by selecting the support of $\Sb$ uniformly at random \cite{rpca}.  We assume this here too. In addition, we need a denseness assumption on $\G$ and on the left and right singular vectors of $\Lb_\new$. %in order to ensure that they can be separated from the sparse matrix $S$. We also need that For denseness, as we will show next, for modified-PCP, the following assumptions suffice.

Let $n_{(1)} = \max(n_1,n_2)$ and $n_{(2)} = \min(n_1,n_2)$. Assume that following hold with a constant $\rho_r$ that is small enough (we set its values later in Assumption \ref{rhosr}). %{\cb}
\begin{equation}
  \label{eq:PU}
  \max_i \|[\G \ \U_{\new}]^* \e_i\|^2 \le \frac{\rho_r n_{(2)}}{n_1\log^2n_{(1)}}, %b_1(r,n_1,n_2)     %
\end{equation}
\begin{equation}
  \label{eq:PV}
\max_i \|\V_\new^*\e_i\|^2 \le \frac{\rho_r n_{(2)}}{n_2\log^2n_{(1)}}, % b_2(r,n_1,n_2)               %
\end{equation}
and
\begin{equation}
  \label{eq:UV}
  \|\U_\new \V_\new^*\|_\infty \le \sqrt{\frac{\rho_r }{n_{(1)}\log^2n_{(1)}}}. %b_3(r,n_1,n_2) %
\end{equation}
%Clearly $\mu \ge 1$ of course.
%As we explain later, the last two assumptions are always weaker than those required by PCP. When $r_\extra=0$, the first assumption is the same, when $r_\extra$ is small, it is a little stronger. The third condition is the limiting one, and hence overall the requirements are weaker as long as $r_\extra$ is not very large.
%$\Span(\G)$ is either equal to that of $\U_0$ or contains only a few extra directions. They are much weaker when $r_\new \ll r$.

\subsection{Main Result}
We state the main result in a form that is slightly different from that of \cite{rpca}. It eliminates the parameter $\mu$ and combines the bound on $\mu r$ directly with the incoherence assumptions ($\mu$ is a parameter defined in \cite{rpca} to quantify the denseness of $\U$ and $\V$ and the incoherence between their rows) . We state it this way because it is easier to interpret and compare with the result of PCP. In particular, the dependence of the result on $n_{(2)}$ is clearer this way. The corresponding result for PCP in the same form is an immediate corollary.% that follows by setting $\G_{PCP} = [ \ ]$ (empty matrix).
%{\color{blue}}

%The main result can be stated in two ways. The first is in the exact same form as that stated in \cite{rpca}. We give this below. The second way to state it is in a way that allows for easy comparison.
\begin{theorem}
Consider the problem of recovering $\Lb$ and $\Sb$ from $\M$ using partial subspace knowledge $\G$ by solving modified-PCP (\ref{eq:sdp}). %Let $r:=r_\G + r_\new$,  ???%$r:=r_0 + r_\new$, REMO\VE THIS BECAUSE WE NEVER NEED A BOUND ON r IF WE ONLY STATE EVERYTHIN\G in the second form.
%Let $n_{(1)}:= \max(n_1,n_2)$, $n_{(2)}:= \min(n_1,n_2)$.
Assume that $\Omega$, the support set of $\Sb$, is uniformly distributed with size $m$ satisfying
\begin{equation}
\label{rc}
m \le 0.4\rho_s n_{1} n_{2}  %r \le \rho_r n_{(2)} \mu^{-1}(\log n_{(1)})^{-2}, \ \rho_r,
%r_\G+ r_\new \le \rho_r n_{(2)} \mu^{-1}(\log n_{(1)})^{-2}, \  m\le \rho_s n_{(1)} n_{(2)}
\end{equation}

Assume that $\Lb$ satisfies (\ref{eq:PU}), (\ref{eq:PV}) and (\ref{eq:UV}) and $\rho_s, \rho_r$, are small enough and $n_1, n_2$ are large enough to satisfy Assumption \ref{rhosr} given below.
Then, Modified-PCP  (\ref{eq:sdp}) with $\lambda=1/\sqrt{n_{(1)}}$ recovers $\Sb$ and $\Lb$ exactly with probability at least $1-23n_{(1)}^{-10}$.
\label{maintheorem}
\end{theorem}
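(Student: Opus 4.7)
The overall strategy mirrors the dual-certificate framework of \cite{rpca}, adapted to handle the extra variable $\X$ and the subspace knowledge $\G$. Subdifferentials of (\ref{eq:sdp}) yield the optimality conditions: a dual certificate $\Y$ must satisfy $\PT\Y=\U_\new\V_\new^*$ and $\|\PTp\Y\|<1$ (from the nuclear-norm subgradient), $\PO\Y=\lambda\,\sgn(\Sb)$ and $\|\POp\Y\|_\infty<\lambda$ (from the $\ell_1$ subgradient), and---the only new condition---$\G^*\Y=0$, arising from stationarity in $\X$. In the perturbation argument on the objective, the cross-term $\langle\Y,\G\Hb_X^*\rangle=\langle\G^*\Y,\Hb_X^*\rangle$ introduced by the third variable vanishes exactly by this new condition, and the rest reduces to the usual decrease bound in terms of $\|\PTp\Y\|$ and $\|\POp\Y\|_\infty$. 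As in \cite{rpca}, it suffices to produce an \emph{inexact} certificate. The reduction from uniform $\Omega$ of size $m$ to $\Omega\sim\Ber(\rho)$ with $\rho=m/(n_1n_2)$, and the sign derandomization via Lemma 2.1 of \cite{rpca}, then carry over verbatim.

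Next I would establish the injectivity lemma: any feasible perturbation of $(\Lb_\new,\X,\Sb)$ preserving the objective value must be zero. The perturbation structure $\Hb_L+\G\Hb_X^*+\Hb_S=0$, combined with $\Hb_L\in T_\new$ and $\Hb_S\in\Omega$ (forced by the strict slack in the dual certificate), reduces to showing that $T_\new+\{\G C^*:C\}$ intersects $\Omega$ only at zero. The incoherence quantity controlling this is $\max_i\|[\G\ \U_\new]^*\e_i\|^2$, and matrix Bernstein applied under (\ref{eq:PU}) and (\ref{eq:PV}) delivers the required operator-norm bound on $\PO$ restricted to this enlarged subspace. This is precisely why (\ref{eq:PU}) is phrased on $[\G\ \U_\new]$ rather than on $\U_\new$ alone, and the essential improvement over PCP is that $r_\new$ and $r_G$ replace $r$, giving weaker requirements whenever $r_\extra$ is small and the subspace knowledge $\G$ is accurate.

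The main obstacle is constructing a certificate $\Y=\Y^L+\Y^S$ that additionally satisfies $\G^*\Y=0$. I would use a modified golfing scheme for $\Y^L$: partition $\Omega^c$ into $j_0\sim\log n_{(1)}$ i.i.d.\ blocks $\bOmega_j\sim\Ber(q)$ and iterate $\Y^L_j=(\Ib-\G\G^*)\bigl[\Y^L_{j-1}+q^{-1}\POj\PT(\U_\new\V_\new^*-\PT\Y^L_{j-1})\bigr]$, so that $\G^*\Y^L_j=0$ by induction. Similarly, $\Y^S$ is obtained by applying $(\Ib-\G\G^*)$ to the Neumann series $\lambda\sum_{k\ge 0}(\PO\PTp\PO)^k\sgn(\Sb)$. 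Because $\G^*\U_\new=0$ by the construction (\ref{eqG}), the target $\U_\new\V_\new^*$ already lies in the null space of $\G^*$, so the extra $(\Ib-\G\G^*)$ factor contributes only a bounded correction at each step, traceable to the identity $\PT(\G H^*)=\G H^*\V_\new\V_\new^*$ applied with $H^*=\G^*\Y^L_{j-1}$; these corrections can be absorbed into the contraction estimate by tightening $\rho_r$ in Assumption \ref{rhosr}. Re-verifying the golfing and Neumann-series contractions with this modification is where most of the technical work lies, though I expect it to be bookkeeping-heavy rather than conceptually novel.

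Finally, the bounds $\|\PT\Y-\U_\new\V_\new^*\|_F<\epsilon$, $\|\PTp\Y\|<1/2$, and $\|\POp\Y\|_\infty<\lambda/2$ follow by matrix Bernstein applied to each golfing step, together with the operator-norm estimate $\|\POp\sgn(\Sb)\|\lesssim\sqrt{\rho n_{(1)}}$ for a random sign matrix, using the incoherence bounds (\ref{eq:PU})--(\ref{eq:UV}). Each of a constant number of concentration events fails with probability $O(n_{(1)}^{-10})$; a union bound yields the stated success probability $1-23n_{(1)}^{-10}$, and the small constants $\rho_r,\rho_s$ in Assumption \ref{rhosr} are chosen precisely to make all the slack conditions hold simultaneously.
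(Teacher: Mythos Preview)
Your high-level architecture matches the paper's: reduce from uniform to Bernoulli support, derandomize the signs, establish an inexact dual certificate carrying the extra orthogonality constraint from the $\X$-variable, and verify injectivity on the enlarged subspace $T_\new + \{\G C^*:C\}$ (which the paper calls $\Pi$). You also correctly identify $\max_i\|[\G\ \U_\new]^*\e_i\|^2$ as the controlling incoherence quantity. The divergence is in how you enforce $\G^*\Y=0$: you run golfing on $T_\new$ and then apply $(\Ib-\G\G^*)$ externally at each step, and likewise post-process the Neumann series for $\Y^S$.

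This construction has a gap. Left-multiplying by $(\Ib-\G\G^*)$ destroys the support structure: even if the bracketed term in your recursion is supported on $\Omega^c$, $\Y^L_j=(\Ib-\G\G^*)[\cdots]$ is not, because $\G\G^*$ smears entries across all indices. Hence $\PO\Y^L_{j_0}$ is no longer negligible, and the certificate fails the required bound $\|\PO(\U_\new\V_\new^*+\W^L)\|_F\lesssim\lambda$. The same issue hits $\Y^S$: after applying $(\Ib-\G\G^*)$ you lose $\PO\Y^S=\lambda\,\sgn(\Sb)$, since $\PO(\Ib-\G\G^*)\neq\PO$ on $\Omega$. The ``bounded correction'' you mention does not contract under golfing---it accumulates over the $j_0\sim\log n_{(1)}$ steps---and controlling it would require separate bounds on $\PO\G\G^*\POj$ that your outline does not supply.

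The paper avoids this entirely by working with the enlarged projection $\PP$ onto $\Pi=T_\new\cup\Gamma$ \emph{throughout}, rather than with $\PT$ plus an external fix. The golfing recursion is $\Y_j=\Y_{j-1}+q^{-1}\POj\PP(\U_\new\V_\new^*-\Y_{j-1})$ with $\W^L=\PPp\Y_{j_0}$, and the least-squares part is $\W^S=\lambda\PPp(\PO-\PO\PP\PO)^{-1}\sgn(\Sb)$. Because $\Pi^\perp\subset\Gamma^\perp$, the condition $\G^*\W=0$ is automatic, while the iterates $\Y_j$ stay supported on $\Omega^c$ exactly as in \cite{rpca}; all contraction estimates carry over with $\PP$ replacing $\PT$, driven by the same incoherence quantity (\ref{eq:PU}) you identified. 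The uniqueness step (Lemma~\ref{kktrc}) is also argued differently from your KKT-stationarity sketch: rather than reading off $\G^*\Y=0$ from the $\X$-optimality condition, the paper shows directly that any feasible perturbation satisfies $\|\Lb_\new+\Hb_1\|_*\ge\|\Lb_\new+\PGp\Hb\|_*$ (via a nuclear-norm splitting along $[\G\ \G_\perp]$), after which the standard subgradient comparison runs on $\Pi$ rather than on $T_\new$.
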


\begin{assumption}
Assume that $\rho_s, \rho_r$ and $n_1, n_2$ satisfy:
\begin{enumerate}
\item[(a)] $\rho_r \leq \min\{10^{-4},7.2483\times10^{-5}C_{03}^{-4}\}$
%\item[(b)] $b_1(\rho_r) \leq  10,$
\item[(b)] $\rho_s =  \min\{1-1.5b_1(\rho_r),0.0156\}$  where $b_1(\rho_r):=\max\left\{{60\rho_r^{1/2}}, 11C_{01}\rho_r^{1/2}, 0.11\right\}$ %b_3(\rho_r),$
\item[(c)] $n_{(1)}\geq \max\left\{\exp(0.5019\rho_r), \exp(253.9618C_{01}\rho_r), 1024\right\}$                                             %b_2(\rho_r)$
\item[(d)] $n_{(2)}\geq 100\log^2n_{(1)},$
\item[(e)] $\frac{(n_1+n_2)^{1/6}}{\log (n_1+n_2)}>\frac{10.5}{(\rho_s)^{1/6}(1-5.6561\sqrt{\rho_s})},$
\item[(f)] $\frac{n_{(1)}n_{(2)}}{500\log n_{(1)}} > 1/\rho_s^2$                                      %$\frac{(n_1+n_2)^{1/6}}{\log (n_1+n_2)}>\frac{10.5}{b_3^{1/6}(1-5.6561\sqrt{b_3})}$
\end{enumerate}
where $C_{01},C_{03}$ are numerical constants from Lemma \ref{POPTc0} (\cite[Theorem 4.1]{candes2009exact}) and Lemma \ref{teo:sixthree} (\cite[Theorem 6.3]{candes2009exact}) respectively. Their expressions were not specified in the original paper.%{\color{blue} } %({\bf JZ} $C_{01}$, $C_{03}$ are some constant which relates to at least 2 references and complicated to derive.)
\label{assrhosrhors}
\label{rhosr}
\end{assumption}

{\em Proof: } We prove this result in Sec \ref{proof_maintheorem}. %Some of the lemmas in this section that are similar to those in \cite{rpca} are proved in the Appendix. %The proof is obtained by adapting the proof techniques of \cite{rpca} to our problem. We describe the key new steps needed in Sec ?.
%sufficient constraints on $\rho_r, \rho_s$ in Assumption \ref{rhosr} and $n_{(1)}$ is large enough and

\subsection{Discussion w.r.t. PCP}
%For simplifying our discussion, first just assume that $\Span(\G) = \Span(U_0)$ so that $r_\G = r_0$ and $r_\extra=0$. Also suppose for simplicity, $n_{(1)}=n_1$ and $n_{(2)}=n_2$, i.e. the matrix has fewer columns than rows. and so (\ref{eq:PU}) is the same but (\ref{eq:PV}) and (\ref{eq:UV}) get replaced by stronger requirements.

The PCP program of \cite{rpca} is (\ref{eq:sdp}) with no subspace knowledge available, i.e. $\G_{PCP}=[ \ ]$ (empty matrix). With this, Theorem \ref{maintheorem} simplifies to the corresponding result for PCP. Thus, $\U_{\new,PCP} = \U $ and $\V_{\new,PCP} = \V$ and so PCP needs
\begin{equation}
\max_i \|\U^*\e_i\|^2 \le   \frac{\rho_r n_{(2)}}{n_1 \log^2 n_{(1)}},
\label{eq:PU2}
\end{equation}
\begin{equation}
\max_i \|\V^*\e_i\|^2 \le   \frac{\rho_r n_{(2)}}{n_2 \log^2 n_{(1)}},
\label{eq:PV2}
\end{equation}
and
\begin{equation}
\|\U \V^*\|_\infty \le  \sqrt{\frac{\rho_r}{n_{(1)}\log^2n_{(1)}}}.
\label{eq:UV2}
\end{equation}

Notice that the second and third conditions needed by modified-PCP, i.e. (\ref{eq:PV}) and (\ref{eq:UV}), are always weaker than (\ref{eq:PV2}) and (\ref{eq:UV2}) respectively. They are much weaker when $r_\new$ is small compared to $r$. When $r_\extra = 0$, $\Span(\G) = \Span(\U_0)$ and so the first condition is the same for both modified-PCP and PCP. When $r_\extra > 0$ but is small, the first condition for modified-PCP is slightly stronger. However, as we argue below the third condition is the hardest to satisfy and hence in all cases except when $r_\extra$ is very large, the modified-PCP requirements are weaker. We demonstrate this via simulations and for some real data in Sec \ref{simu_simulated_data} (see Fig \ref{rhor_rextra2} and Fig \ref{rhor_varc}) and \ref{real_online}. %?? to show this we need to add back the plots for the 3 $\rho_r$ values.

The third condition constrains the inner product between the rows of two basis matrices $\U$ and $\V$ while the first and second conditions only constrain the norm of the rows of a basis matrix. On first glance it may seem that the third condition is implied by the first two using the Cauchy-Schwartz inequality. However that is not the case. Using Cauchy-Schwartz inequality, the first two conditions only imply that $\|\U \V^*\|_\infty \le  \sqrt{\frac{\rho_r}{n_{(1)}\log^2n_{(1)}}} \frac{\sqrt{\rho_r n_{(2)}} }{\log n_{(1)} }$ which is looser than what the third condition requires.

\section{Online robust PCA}
\label{online_robust_pca}

Consider the online / recursive robust PCA problem where data vectors $\y_t: = \s_t + \lb_t$ come in sequentially and their subspace can change over time.
 Starting with an initial knowledge of the subspace, the goal is to estimate the subspace spanned by $\lb_1, \lb_2, \dots \lb_t$ and to recover the $\s_t$'s. Assume the following subspace change model introduced in \cite{rrpcp_perf}: $\lb_t = \Pbf_{(t)} \ab_t$ where $\Pbf_{(t)} = \Pbf_j$ for all $t_j \le t < t_{j+1}$, $j=0,1,\dots J$. At the change times, $\Pbf_j$ changes as $\Pbf_j = [(\Pbf_{j-1} \Rb_j \setminus \Pbf_{j,\old}) \ \Pbf_{j,\new}]$ where $\Pbf_{j,\new}$ is a $n \times c_{j,\new}$ basis matrix that satisfies $\Pbf_{j,\new}^*\Pbf_{j-1}=0$; $\Rb_j$ is a rotation matrix; and $\Pbf_{j,\text{old}}$ is a $n \times c_{j,\text{old}}$ matrix that contains a subset of columns of $\Pbf_{j-1} \Rb_j$. Also assume that $c_{j,\new} \le c$ and $\sum_j (c_{j,\new} - c_{j,\old}) \le c_{dif}$.
 Let $r_j := \rank(\Pbf_j)$.  Clearly, $r_j = r_{j-1}+c_{j,\new} - c_{j,\old}$ and so $r_j \le r_{\max} = r_0+c_{dif}$.

%Suppose that the initial subspace, $\Span(\Pbf_0)$, is perfectly known. Then,
For the above model, the following is an easy corollary.
\begin{corollary}[modified-PCP for online robust PCA] \label{cor}
%Let $M_j:= [y_{t_j}, y_{t_j + 1}, \dots y_{t_j + \alpha -1}]$. For a given $\alpha < n_1$, suppose that
Let $\M_j:= [\y_{t_j}, \y_{t_j + 1}, \dots \y_{t_{j+1} -1}]$, $\Lb_j:=[\lb_{t_j}, \lb_{t_j + 1}, \dots \lb_{t_{j+1} -1}]$, $\Sb_j:=[\s_{t_j}, \s_{t_j + 1}, \dots \s_{t_{j+1} -1}]$ and let $\Lb_{\text{full}}:=[\Lb_1,\Lb_2, \dots \Lb_J]$ and $\Sb_{\text{full}}:=[\Sb_1,\Sb_2, \dots \Sb_J]$.
Suppose that the following hold.
\begin{enumerate}
\item $\Sb_{\text{full}}$ satisfies the assumptions of Theorem \ref{maintheorem}.

\item The initial subspace $\Span(\Pbf_0)$ is exactly known, i.e. we are given $\hat{\Pbf}_0$ with $\Span(\hat{\Pbf}_0) = \Span(\Pbf_0)$.

\item For all $j=1,2,\dots J$, (\ref{eq:PU}),  (\ref{eq:PV}), and (\ref{eq:UV}) hold with $n_1 = n$, $n_2= t_{j+1}-t_j$, $\G = \Pbf_{j-1}$, $\U_\new = \Pbf_{j,\new}$ and $\V_\new$ being the matrix of right singular vectors of $\Lb_\new = (\Ib- \Pbf_{j-1} \Pbf_{j-1}^*) \Lb_j$. %:=(\Ib-\G\G^*)\Lb_j

\item We solve modified-PCP at every $t=t_{j+1}$, using $\M=\M_j$ and with $\G = \G_j = \hat{\Pbf}_{j-1}$ where $\hat{\Pbf}_{j-1}$ is the matrix of left singular vectors of the reduced SVD of $\hat{\Lb}_{j-1}$ (the low-rank matrix obtained from modified-PCP on $\M_{j-1}$). At $t=t_1$ we use $\G=\hat{\Pbf}_0$.%:= [y_{t_j}, y_{t_j + 1}, \dots y_{t_{j+1} -1}]
\end{enumerate}
Then, modified-PCP recovers $\Sb_{\text{full}},\Lb_{\text{full}}$ exactly and in a piecewise batch fashion with probability at least $(1 - 23 n^{-10})^J$.
\end{corollary}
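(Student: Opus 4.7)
The plan is to prove Corollary \ref{cor} by induction on the batch index $j$, applying Theorem \ref{maintheorem} once per batch. For the base case $j=1$, since $\Span(\hat{\Pbf}_0)=\Span(\Pbf_0)$ is given and the incoherence conditions (\ref{eq:PU})--(\ref{eq:UV}) for $j=1$ hold with $\G=\Pbf_0$, they transfer verbatim to $\G=\hat{\Pbf}_0$: those conditions depend only on $\Span(\G)$ and on the singular vectors $\U_\new,\V_\new$ of $(\Ib-\G\G^*)\Lb_1$, all of which are invariant under a change of basis for $\G$. Together with the support hypothesis on $\Sb_{\text{full}}$ restricted to the first column block, Theorem \ref{maintheorem} applied with $n_1=n,\ n_2=t_2-t_1$ yields exact recovery of $\Lb_1$ and $\Sb_1$ with probability at least $1-23n^{-10}$.

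For the inductive step, suppose batches $1,\dots,j-1$ have all succeeded, so $\hat{\Lb}_{j-1}=\Lb_{j-1}$. Then $\hat{\Pbf}_{j-1}$, extracted as the left singular vectors of the reduced SVD of $\hat{\Lb}_{j-1}$, has column span equal to that of $\Lb_{j-1}$. Writing $\Lb_{j-1}=\Pbf_{j-1}\A_{j-1}$ under the subspace change model, the coefficient block $\A_{j-1}$ has full row rank $r_{j-1}$ (implicit in the corollary's treatment of $\Pbf_{j-1}$ as the column subspace of $\Lb_{j-1}$), whence $\Span(\hat{\Pbf}_{j-1})=\Span(\Pbf_{j-1})$. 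The same basis-independence argument used in the base case then lets us substitute $\hat{\Pbf}_{j-1}$ for $\Pbf_{j-1}$ in (\ref{eq:PU})--(\ref{eq:UV}), and the corollary's $\U_\new=\Pbf_{j,\new}$ and $\V_\new$ coincide with the $\U_\new,\V_\new$ produced by (\ref{eqVnew}) for $\M_j$ with knowledge $\hat{\Pbf}_{j-1}$. Hence all hypotheses of Theorem \ref{maintheorem} are in force for $\M_j$ and modified-PCP recovers $\Lb_j,\Sb_j$ exactly with probability at least $1-23n^{-10}$.

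To assemble the $J$ per-batch success probabilities into the product $(1-23n^{-10})^J$, I would observe that the supports $\Omega_1,\dots,\Omega_J$ lie in disjoint column blocks of $\Sb_{\text{full}}$; under the Bernoulli($\rho$) model (equivalent to the uniform-$m$ model via the standard comparison of \cite{rpca}), the $\Omega_j$ are independent, so the $J$ success events are independent and their intersection has probability at least $(1-23n^{-10})^J$. The main obstacle throughout is the alignment between the algorithm's estimate $\hat{\Pbf}_{j-1}$ and the true $\Pbf_{j-1}$ appearing in the corollary's hypotheses: one must chain the inductive conclusion $\hat{\Lb}_{j-1}=\Lb_{j-1}$ with the identity $\Span(\Lb_{j-1})=\Span(\Pbf_{j-1})$ and with the basis-independence of the hypotheses of Theorem \ref{maintheorem}. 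Once this identification is in place, every other step is bookkeeping.
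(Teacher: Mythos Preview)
Your inductive scheme is exactly the paper's: define per-batch success events $\Theta_j$, argue that past successes force $\Span(\hat{\Pbf}_{j-1})=\Span(\Pbf_{j-1})$, apply Theorem \ref{maintheorem} to each batch, and multiply. The paper is terser and simply invokes the chain rule $\mathbb{P}(\Theta_0,\Theta_1,\dots,\Theta_J)=\prod_{j}\mathbb{P}(\Theta_j\mid\Theta_0,\dots,\Theta_{j-1})$; your extra remarks on basis-independence of (\ref{eq:PU})--(\ref{eq:UV}) and on the full row rank of the coefficient block are correct elaborations the paper leaves implicit.

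One correction to your final paragraph: the success events $\Theta_j$ are \emph{not} independent, and you should not claim they are. The event $\Theta_j$ depends on the algorithm's input $\hat{\Pbf}_{j-1}$, which is a (deterministic) function of $\Omega_1,\dots,\Omega_{j-1}$; if $\Theta_{j-1}$ fails, $\hat{\Pbf}_{j-1}$ need not span $\Span(\Pbf_{j-1})$ and Theorem \ref{maintheorem} gives no control on $\Theta_j$. What the block-wise independence of the $\Omega_j$ (under the Bernoulli model) actually buys you is that, \emph{conditional} on $\Theta_0,\dots,\Theta_{j-1}$, the support $\Omega_j$ still follows its marginal distribution, so Theorem \ref{maintheorem} applies to $\M_j$ with its stated probability. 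That yields $\mathbb{P}(\Theta_j\mid\Theta_0,\dots,\Theta_{j-1})\ge 1-23n^{-10}$, and the product bound then follows from the chain rule, which is how the paper closes the argument. Your observation about $\Omega_j$-independence is thus the right ingredient, but it feeds the conditional bound, not an independence claim on the $\Theta_j$.
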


\begin{proof}  %[Proof of Corollary \ref{cor}]
Denote by $\Theta_0$ the event that $\Span(\hat{\Pbf}_0) = \Span(\Pbf_0)$. For $j=1,2,\dots J$, denote by $\Theta_j$  the event that the program (\ref{eq:sdp}) succeeds for the matrix $\M = \M_j$, i.e. $\Sb_j$ and $\Lb_j$ are exactly recovered. %\Lbet $\hat{\Pbf}_j$ be the matrix of left singular vectors of the reduced SVD of $\hat{\Lb}_j$.
Clearly, $\Theta_j$ also implies that $\Span(\hat{\Pbf}_j) = \Span(\Pbf_j)$. %\Span(\G_{j+1})=
Using Theorem \ref{maintheorem} and the model, we then get that probability $\mathbb{P}(\Theta_j| \Theta_0, \Theta_1, \dots \Theta_{j-1}) \geq  1 - 23 n^{-10}$. Also, by assumption, $\mathbb{P}(\Theta_0)=1$.
Thus by chain rule, $\mathbb{P}(\Theta_0, \Theta_1, \Theta_2, \cdots, \Theta_J) \geq (1 - 23 n^{-10})^J$.
%
%) = \mathbb{P}(\Theta_1|\Theta_0)\mathbb{P}(\Theta_2|\Theta_1)\cdots\mathbb{P}(\Theta_J|\Theta_{J-1})
%And, clearly, conditioned on $\Theta_{j-1}$, $\Theta_j$ is independent of $\Theta_{j-2}, \Theta_{j-3}, \cdots, \Theta_{0}$.
%Notice that, since exact recovery is achieved at each step, at each change time, the deleted directions are removed automatically from $\G_j$. %edit carefully. ({\bf JZ} If $t_j-t_{j-1}=K\alpha$, then $\Pb(\Theta_1, \Theta_2, \cdots, \Theta_J| \Theta_0)\geq (1 - 23 n^{-10})^{KJ}$. Because every $\alpha$ frames, we do mod-PCP, and every program succeed with probability at least $1 - 23 n^{-10}$. )
\end{proof}

{\em Discussion w.r.t. PCP. } For the data model above, two possible corollaries for PCP can be stated.
\begin{corollary}[PCP for online robust PCA]
If $\Sb_{\text{full}}$ satisfies the assumptions of Theorem \ref{maintheorem} and if (\ref{eq:PU}), (\ref{eq:PV}), and (\ref{eq:UV}) hold with $n_1 = n$, $n_2= t_{J+1}-t_1$, $\G_{PCP}=[ \ ]$, $\U_{\new,PCP} = \U = [\Pbf_0, \Pbf_{1,\new}, \dots \Pbf_{J,\new}]$ and $\V_{\new,PCP}=\V$ being the right singular vectors of $\Lb_{\text{full}}:=[\Lb_1,\Lb_2, \dots \Lb_J]$, then, we can recover $\Lb_{\text{full}}$ and $\Sb_{\text{full}}$ exactly with probability at least $(1 - 23 n^{-10})$ by solving PCP (\ref{eq:pcp}) with input $\M_{\text{full}}$. Here $\M_{\text{full}}:=\Lb_{\text{full}}+\Sb_{\text{full}}$.
\end{corollary}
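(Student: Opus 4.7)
The plan is to invoke Theorem \ref{maintheorem} directly on the concatenated matrix $\M_{\text{full}} = \Lb_{\text{full}} + \Sb_{\text{full}}$, viewing the whole time series as a single batch robust PCA instance and taking the partial subspace knowledge to be the empty matrix. PCP is precisely modified-PCP specialized to $\G_{PCP}=[\ ]$, as already noted in the Discussion subsection. With this specialization, $\Lb_\new = (\Ib - \G\G^*)\Lb_{\text{full}} = \Lb_{\text{full}}$, so its reduced SVD yields $\U_\new = \U$ and $\V_\new = \V$, the actual left and right singular vectors of $\Lb_{\text{full}}$, and the incoherence conditions (\ref{eq:PU}), (\ref{eq:PV}), (\ref{eq:UV}) collapse to (\ref{eq:PU2}), (\ref{eq:PV2}), (\ref{eq:UV2}).

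The only step that requires any real argument is reconciling the hypothesis, which is stated in terms of the concatenated basis $\U_{\new,PCP} = [\Pbf_0, \Pbf_{1,\new}, \dots, \Pbf_{J,\new}]$, with the actual left singular vectors of $\Lb_{\text{full}}$ referenced by (\ref{eq:PU2}). By the subspace-change model, rotation preserves span and the $\setminus \Pbf_{j,\old}$ operation only shrinks it, so $\Span(\Pbf_j) \subseteq \Span(\Pbf_{j-1}) + \Span(\Pbf_{j,\new})$, and inductively $\Span(\Pbf_j) \subseteq \Span([\Pbf_0, \Pbf_{1,\new}, \dots, \Pbf_{j,\new}])$; hence $\Span(\Lb_{\text{full}}) \subseteq \Span([\Pbf_0, \Pbf_{1,\new}, \dots, \Pbf_{J,\new}])$. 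Since $\|\U^*\e_i\|^2$ equals the squared norm of the orthogonal projection of $\e_i$ onto $\Span(\U)$, it is a function of $\Span(\U)$ alone and is monotone under subspace inclusion, so the bound stated for the larger concatenated basis implies the same bound for the true SVD left singular vectors of $\Lb_{\text{full}}$. Conditions (\ref{eq:PV2}) and (\ref{eq:UV2}) are already written in terms of the true right singular vectors $\V$ of $\Lb_{\text{full}}$, so they transfer without change.

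Finally, the sparsity side is immediate: by hypothesis $\Sb_{\text{full}}$ satisfies all assumptions of Theorem \ref{maintheorem}, including the uniform random support model with $|\Omega| \leq 0.4\rho_s n_1 n_2$ and Assumption \ref{rhosr}, with $n_1 = n$ and $n_2 = t_{J+1} - t_1$. Applying Theorem \ref{maintheorem} to $(\Lb_{\text{full}}, \Sb_{\text{full}})$ with $\G = [\ ]$ then yields exact recovery of both matrices by PCP on $\M_{\text{full}}$, with probability at least $1 - 23 n_{(1)}^{-10} \geq 1 - 23 n^{-10}$, since $n_{(1)} = \max(n, t_{J+1}-t_1) \geq n$. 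No genuinely hard step arises here: all the heavy lifting is absorbed into Theorem \ref{maintheorem}, and the only mildly subtle point is the basis-versus-span bookkeeping in the middle paragraph.
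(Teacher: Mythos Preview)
Your proposal is correct and matches the paper's intended route: the corollary is stated without proof because it is meant as an immediate specialization of Theorem \ref{maintheorem} with $\G=[\ ]$, exactly as you argue, and the paper already notes in the Discussion subsection that PCP is modified-PCP with empty $\G$. Your middle paragraph on the span-inclusion argument (showing $\Span(\U)\subseteq\Span([\Pbf_0,\Pbf_{1,\new},\dots,\Pbf_{J,\new}])$ and using monotonicity of $\|\cdot^*\e_i\|^2$ under subspace containment) is actually more careful than the paper, which simply writes ``$\U=[\Pbf_0,\Pbf_{1,\new},\dots,\Pbf_{J,\new}]$'' as if equality were automatic.
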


When we compare this with the result for modified-PCP, the second and third condition are even more significantly weaker than those for PCP. The reason is that $\V_\new$ contains at most $c$ columns while $\V$ contains at most $r_0 + Jc$ columns. The first conditions cannot be easily compared. The LHS contains at most $r_{\max} + c = r_0 + c_{dif} + c$ columns for modified-PCP, while it contains $r_0+Jc$ columns for PCP. However, the RHS for PCP is also larger. If $t_{j+1}-t_j=d$, then the RHS is also $J$ times larger for PCP than for modified-PCP. The above advantage for mod-PCP comes with two caveats. First, modified-PCP assumes knowledge of the subspace change times while PCP does not need this. Secondly, modified-PCP succeeds w.p. $(1 - 23 n^{-10})^J \ge 1-23 J n^{-10}$ while PCP succeeds w.p. $1 - 23 n^{-10}$.

Alternatively if PCP is solved at every $t=t_{j+1}$ using $\M_j$, we get the following corollary
\begin{corollary}[PCP for $\M_j$]
Solve PCP, i.e. (\ref{eq:pcp}), at $t=t_{j+1}$ using $M_j$. If $\Sb_{\text{full}}$ satisfies the assumptions of Theorem \ref{maintheorem} and if (\ref{eq:PU}),  (\ref{eq:PV}), and (\ref{eq:UV}) hold with $n_1 = n$, $n_2= t_{j+1}-t_j$, $\G_{PCP}=[ \ ]$, $\U_{\new,PCP} = \Pbf_j$ and $\V_{\new,PCP}=\V_j$ being the right singular vectors of $\Lb_j$ for all $j=1,2, \dots, J$, then, we can recover $\Lb_{\text{full}}$ and $\Sb_{\text{full}}$ exactly with probability at least $(1 - 23 n^{-10})^J$.
\end{corollary}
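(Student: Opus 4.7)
The plan is to mirror the proof of Corollary \ref{cor}, which in fact simplifies considerably here because the $J$ batches are processed completely independently, with no dependence on any previously recovered subspace estimate. First, I observe that standard PCP is precisely modified-PCP with the trivial side information $\G = [\,]$, so specializing Theorem \ref{maintheorem} to $\G_{PCP} = [\,]$ immediately yields the classical PCP recovery guarantee with success probability at least $1 - 23 n_{(1)}^{-10}$.

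Next, for $j = 1, 2, \dots, J$, I define $\Theta_j$ to be the event that PCP applied to $\M_j$ recovers $\Lb_j$ and $\Sb_j$ exactly. The hypotheses of the corollary state that (\ref{eq:PU2}), (\ref{eq:PV2}), (\ref{eq:UV2}) hold with $\U = \Pbf_j$, $\V = \V_j$, $n_1 = n$, $n_2 = t_{j+1} - t_j$; together with the sparsity assumption on $\Sb_{\text{full}}$ (which is inherited by each $\Sb_j$), the PCP specialization of Theorem \ref{maintheorem} gives $\mathbb{P}(\Theta_j) \geq 1 - 23 n^{-10}$ for each $j$.

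Unlike the proof of Corollary \ref{cor}, no conditioning on earlier events is needed, since each PCP instance depends only on its own $\M_j$ and not on any recovered quantity from a prior batch. Because the support of $\Sb_{\text{full}}$ is uniformly distributed, its restrictions to the disjoint column blocks corresponding to different $\M_j$'s are independent, so the events $\Theta_j$ are mutually independent. Hence $\mathbb{P}(\Theta_1 \cap \cdots \cap \Theta_J) \geq (1 - 23 n^{-10})^J$, and exact recovery on every batch concatenates to exact recovery of $\Lb_{\text{full}}$ and $\Sb_{\text{full}}$.

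The only obstacle is pure bookkeeping: one must verify that the sparsity assumption on $\Sb_{\text{full}}$ (uniform support with total size bounded as in Theorem \ref{maintheorem} for $n_2 = t_{J+1} - t_1$) translates into the per-batch sparsity assumption required for each $\M_j$ with $n_2 = t_{j+1} - t_j$, and that Assumption \ref{rhosr} still holds when $n_{(2)}$ is replaced by the per-batch column count. This is routine, since a uniform support restricts to a uniform support on each column block at the same density, so the per-batch expected support size $m_j$ satisfies the same $0.4 \rho_s n\, (t_{j+1}-t_j)$ bound up to standard Bernoulli concentration. Beyond this verification, the corollary is a direct corollary of Theorem \ref{maintheorem} in its $\G=[\,]$ form.
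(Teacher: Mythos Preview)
Your approach is essentially the one the paper has in mind: the corollary is stated there without a separate proof, as an immediate specialization of Theorem \ref{maintheorem} with $\G=[\,]$, applied batch-by-batch in the same spirit as the proof of Corollary \ref{cor}. Reducing PCP to modified-PCP with empty $\G$, defining the per-batch success events $\Theta_j$, and multiplying the $J$ success probabilities is exactly the intended argument.

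One point deserves care, however. Your claim that ``the support of $\Sb_{\text{full}}$ is uniformly distributed, so its restrictions to the disjoint column blocks are independent'' is not correct under the uniform model of Theorem \ref{maintheorem}: with a \emph{fixed} total support size $m$, the per-block support sizes are hypergeometric and negatively correlated, so the events $\Theta_j$ are not literally independent. The cleanest fix is to do exactly what the proof of Corollary \ref{cor} does and use the chain rule, bounding $\mathbb{P}(\Theta_j \mid \Theta_1,\dots,\Theta_{j-1})$; the conditioning here is only through the randomness of $\Sb$ (not through any recovered subspace), and conditional on the supports in previous blocks the support within block $j$ is again uniform of a given size, so Theorem \ref{maintheorem} applies. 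Alternatively, appeal to the Bernoulli model (where the per-block restrictions \emph{are} independent) and transfer back to the uniform model via Lemma \ref{unifber}, as in the proof architecture of Sec.~\ref{outline}. Either route closes the gap; the remainder of your write-up is fine.
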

When we compare this with modified-PCP, the second and third condition are significantly weaker than those for PCP when $c_{j,\new}\ll r_j$. The first condition is exactly the same when $c_{j,\old} = 0$ and is only slightly stronger as long as $c_{j,\old} \ll r_j$.
{\em Discussion w.r.t. ReProCS. }
In \cite{rrpcp_allerton,rrpcp_tsp,rrpcp_perf}, Qiu et al studied the online / recursive robust PCA problem and proposed a novel recursive algorithm called ReProCS. With the subspace change model described above, they also needed the following ``slow subspace change" assumption:  $\|P_{j,\new}^*\lb_t\|$ is small for sometime after $t_j$ and increases gradually. Modified-PCP does not need this. Moreover, even with perfect initial subspace knowledge, ReProCS cannot achieve exact recovery of $\s_t$ or $\lb_t$ while, as shown above, modified-PCP can. On the other hand, ReProCS is a recursive algorithm while modified-PCP is not; and for highly correlated support changes of the $\s_t$'s, ReProCS outperforms modified-PCP (see Sec \ref{simu}). The reason is that correlated support change results in $\Sb$ also being rank deficient, thus making it difficult to separate it from $\Lb_\new$ by modified-PCP. %We show simulation experiments demonstrating these things in Sec \ref{simu}. , e.g. a foreground object that is static for a while and then moves

%In simulation experiments that we show in Sec \ref{simu}, for uniformly randomly selected support sets of $S$, modified-PCP significantly outperforms both ReProCS and PCP. However, if the support changes of $\s_t$'s are significantly correlated, e.g., a slow moving object in a video sequence, and if the ``slow subspace change" assumption holds, then ReProCS is the best.

{\em Discussion w.r.t. the work of Feng et al. }
Recent work of Feng et. al. \cite{xu_nips2013_1,xu_nips2013_2} provides two asymptotic results for online robust PCA. The first work \cite{xu_nips2013_1} does not model the outlier as a sparse vector but just as a vector that is ``far" from the low-dimensional data subspace. %The second work  \cite{xu_nips2013_2} does model outliers as sparse vectors, but contains a result that is again a partial result and not a correctness result. It requires that the subspace basis matrix outputted by their algorithm at each time step be full rank. In this work,
In \cite{xu_nips2013_2}, the authors reformulate the PCP program and use this to develop a recursive algorithm that comes ``close" to the PCP solution asymptotically.

%Moreover the theorems in both papers only talk about asymptotically converging to the solution of the batch problem.

%{\em Other work. }  Other recent work on algorithms for online robust PCA, but no performance guarantees, includes \cite{grass_undersampled,mateos_anomaly}.

%In  \cite{xu_nips2013_1,xu_nips2013_2}, two online algorithms for robust PCA (that do not model the outlier as a sparse vector but as a vector that is ``far" from the data subspace) have been shown to approximate the batch solution and do so only asymptotically. %The result in \cite{mateos_anomaly} has a similar flavor and requires strong assumptions.
%However, in this work, we have shown that a piecewise batch algorithm can actually outperform its batch counterpart under a subspace change every-so-often assumption.
%Other somewhat related work includes online algorithms for low-rank matrix completion and dictionary learning  \cite{grouse,petrels,mairal2010online}.
% The other papers only propose algorithms.

%Comparing to the result in \cite{rpca}, with the help of $U_0$, we need looser constraint on $\Lb$ with (\ref{eq:PV}) and (\ref{eq:UV}) to get the same recovery guarantee (see Section \ref{assu} for more information), which means we can deal with more general matrices as long as we have some information about the subspace of the low rank part. And we will see in Section \ref{simu} that our new method out performs especially in the case where $r_0+r_\new$ is comparatively large and need smaller number of columns to recover the subspace correctly.

\section{Proof of Theorem \ref{maintheorem}: main lemmas}
\label{proof_maintheorem}
%Of course, the conditions are weaker because modified-PCP uses partial subspace knowledge.
Our proof adapts the proof approach of \cite{rpca} to our new problem and the modified-PCP solution. The main new lemma is Lemma \ref{kktrc} in which we obtain different and weaker conditions on the dual certificate to ensure exact recovery. This lemma is given and proved in Sec \ref{dualcerts}. In addition, we provide a proof for two key statements from \cite{rpca} for which either a proof is not immediate (Lemma \ref{unifber}) or for which the cited reference does not work (Lemma \ref{sign2norm}). These lemmas are given below in Sec \ref{new_lems} and proved in the Appendix.
%Finally, we also give explicit bounds on $\rho_s$ and $\rho_r$ which makes the

We state Lemma \ref{unifber} and Lemma \ref{sign2norm} in Sec \ref{new_lems}. We give the overall proof architecture next in Sec \ref{outline}. Some definitions and basic facts are given in Sec \ref{defs} and \ref{facts}. In Sec \ref{dualcerts}, we obtain sufficient conditions (on the dual certificate) under which $\Sb,\Lb_\new$ is the unique minimizer of modified-PCP. In Sec \ref{golfing}, we construct a dual certificate that satisfies the required conditions with high probability (w.h.p.). Here, we also give the two main lemmas to show that this indeed satisfies the required conditions. The proof of all the four lemmas from this section is given in the Appendix.
%Sec \ref{proof_lems}.

Whenever we say ``with high probability" or w.h.p., we mean with probability at least $1 - O(1)n_{(1)}^{-10}$.

\subsection{Two Lemmas} \label{new_lems}
\begin{lemma} %\cite[Appendix A.1]{rpca}
Denote by $\Pb_{\Unif}$ and $\Pb_{\Ber}$ the probabilities calculated under the uniform and Bernoulli models and let ``Success" be the event that $(\Lb_\new,\Sb,\Lb^*\G)$ is the unique solution of modified-PCP (\ref{eq:sdp}). Then
$$\Pb_{\Unif(m_0)}(\Succ) \geq \Pb_{\Ber(\rho_0)}(\Succ) - e^{-2n_1n_2\epsilon_0^2},$$
where $\rho_0=\frac{m_0}{n_1n_2}+\epsilon_0$.
\label{unifber}
\end{lemma}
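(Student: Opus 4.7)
The plan is to combine two standard ingredients: (i) a monotonicity lemma showing that the Success event is non-increasing in the support $\Omega$, and (ii) a Hoeffding bound establishing that $|\Omega|$ under the Bernoulli model exceeds $m_0$ except with probability $e^{-2n_1 n_2 \epsilon_0^2}$.

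For (i), I would fix $\Omega_0 \subset \Omega_1$ and couple the sparse matrices so that $\mathrm{supp}(\Sb_k)\subset \Omega_k$ with $\Sb_1$ agreeing with $\Sb_0$ on $\Omega_0$ (so $\Sb_1-\Sb_0$ is supported on $\Omega_1\setminus\Omega_0$). Set $\M:=\Lb+\Sb_0$ and $\M_1:=\Lb+\Sb_1$. I claim: uniqueness of $(\Lb_\new,\Sb_1,\Lb^*\G)$ for modified-PCP on $\M_1$ implies uniqueness of $(\Lb_\new,\Sb_0,\Lb^*\G)$ for modified-PCP on $\M$. Suppose for contradiction that a different triple $(\Lb_\new',\Sb',\X')\neq(\Lb_\new,\Sb_0,\Lb^*\G)$ is feasible for $\M$ with $\|\Lb_\new'\|_*+\lambda\|\Sb'\|_1\le \|\Lb_\new\|_*+\lambda\|\Sb_0\|_1$. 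The lifted triple $(\Lb_\new',\Sb'+\Sb_1-\Sb_0,\X')$ is feasible for $\M_1$ (adding $\Sb_1-\Sb_0$ to both sides of the $\M$-constraint preserves equality), and by the triangle inequality together with the disjoint-support identity $\|\Sb_1\|_1=\|\Sb_0\|_1+\|\Sb_1-\Sb_0\|_1$, its objective is at most $\|\Lb_\new\|_*+\lambda\|\Sb_1\|_1$. Uniqueness of the $\M_1$-minimizer then forces $\Lb_\new'=\Lb_\new$, $\X'=\Lb^*\G$, and $\Sb'+\Sb_1-\Sb_0=\Sb_1$, i.e.\ $\Sb'=\Sb_0$, which contradicts the assumed distinctness.

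For (ii), condition on $|\Omega|$: under the Bernoulli model, conditional on $|\Omega|=k$, the set $\Omega$ is uniform over size-$k$ subsets, so
\begin{equation*}
\Pb_{\Ber(\rho_0)}(\Succ) = \sum_{k=0}^{n_1 n_2} \Pb(|\Omega|=k)\,\Pb_{\Unif(k)}(\Succ).
\end{equation*}
Splitting the sum at $m_0$ and using monotonicity from (i) (which gives $\Pb_{\Unif(k)}(\Succ)\le \Pb_{\Unif(m_0)}(\Succ)$ for $k\ge m_0$) together with the trivial bound $\Pb_{\Unif(k)}(\Succ)\le 1$ for $k<m_0$ gives
\begin{equation*}
\Pb_{\Ber(\rho_0)}(\Succ) \le \Pb(|\Omega|<m_0)+\Pb_{\Unif(m_0)}(\Succ).
\end{equation*}
Since $|\Omega|\sim\mathrm{Binomial}(n_1 n_2,\rho_0)$ has mean $\rho_0 n_1 n_2 = m_0+\epsilon_0 n_1 n_2$, Hoeffding's inequality bounds $\Pb(|\Omega|<m_0)\le e^{-2n_1 n_2\epsilon_0^2}$, and rearranging gives the stated inequality.

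The main obstacle is the monotonicity step (i): the ``lifting'' construction must respect the extended modified-PCP constraint $\tL_\new+\G\tX^*+\tS=\M$ with its additional primal variable $\X$. Because the lift perturbs only the $\tS$-component and matches the perturbation in the right-hand side, feasibility and the $\ell_1$ bookkeeping transfer from the PCP case essentially verbatim; the one point requiring care is that $\Sb_0$ and $\Sb_1-\Sb_0$ have disjoint supports so the $\ell_1$-norms add, which is ensured by our coupling of the two sparse matrices.
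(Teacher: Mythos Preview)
Your proposal is correct and follows the same three-step architecture as the paper: monotonicity of the Success event in the support, conditioning on $|\Omega|$ under the Bernoulli model, and a Hoeffding tail bound for $\Pb(|\Omega|<m_0)$. Two minor differences are worth flagging. First, you give a self-contained lifting argument establishing deterministic monotonicity for modified-PCP (success on $\Omega_1$ implies success on any $\Omega_0\subset\Omega_1$), whereas the paper simply invokes \cite[Theorem~2.2]{rpca}, which is stated for PCP; your adaptation is the right one and fills a small gap the paper leaves implicit. Second, you pass from deterministic monotonicity to the probabilistic inequality $\Pb_{\Unif(k)}(\Succ)\le\Pb_{\Unif(m_0)}(\Succ)$ for $k\ge m_0$ in one line, whereas the paper isolates this as a separate Proposition and proves it by a somewhat involved counting argument (even footnoting that no simpler proof was found). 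The clean justification implicit in your write-up is a coupling: draw $\Omega\sim\Unif(k)$ and then a uniformly random size-$m_0$ subset $\Omega_0\subset\Omega$; marginally $\Omega_0\sim\Unif(m_0)$, and by your step (i) the event $\{\Omega\ \text{good}\}$ is contained in $\{\Omega_0\ \text{good}\}$. Spelling this out would make your step (ii) airtight and is indeed simpler than the paper's route.
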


The proof is given in Appendix \ref{proofunifber}. A similar statement is given in Appendix A.1 of \cite{rpca} but without a proof.
The expression for the second term on the right hand side given there is $e^{-\frac{2n_1n_2\epsilon_0^2}{\rho_0}}$ which is different from the one we derive.
\\

% The second term of its right hand side (RHS) expression is different from that given in \cite{rpca}: they use $e^{-\frac{n_1n_2\epsilon_0^2}{2\rho_0}}$, however we are unable to get the same expression.
%And instead of claiming without proof \cite{rpca}, here we provide proof in Section \ref{proofunifber}.

\begin{lemma}
%For $\Eb$ defined above, we have
Let $\Eb$ be a $n_1\times n_2$ random matrix with entries i.i.d. (independently identically distributed) as
\begin{equation}
\label{eq:random-sign}
\Eb_{ij} = \begin{cases} 1, & \text{w.~p. }
  \rho_s/2,\\
  0, & \text{w.~p. } 1-\rho_s,\\
  -1, & \text{w.~p. } \rho_s/2.
\end{cases}
\end{equation}
If $\rho_s < 0.03$ and $\frac{(n_1+n_2)^{1/6}}{\log (n_1+n_2)}>\frac{10.5}{(\rho_s)^{1/6}(1-5.6561\sqrt{\rho_s})}$, then  %  and ???
$$\Pb(\|\Eb\|\geq 0.5\sqrt{n_{(1)}}) \leq n_{(1)}^{-10}.$$
\label{sign2norm}
\end{lemma}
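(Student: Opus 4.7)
The proof plan is to use the moment method. For any positive integer $k$ we have $\|\Eb\|^{2k}\le \trace\bigl((\Eb\Eb^*)^k\bigr)$, so Markov's inequality yields
$$\Pb\bigl(\|\Eb\|\ge t\bigr)\le t^{-2k}\,\E\trace\bigl((\Eb\Eb^*)^k\bigr).$$
I would then expand the expectation as a sum indexed by closed bipartite walks of length $2k$ on $K_{n_1,n_2}$, and estimate the sum by combinatorial counting.

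Step 1 is the expansion
$$\E\trace\bigl((\Eb\Eb^*)^k\bigr)=\sum_{(i_1,j_1,\ldots,i_k,j_k)}\E\bigl[\Eb_{i_1j_1}\Eb_{i_2j_1}\Eb_{i_2j_2}\cdots \Eb_{i_1j_k}\bigr].$$
Because the entries are independent, symmetric, bounded by $1$, and satisfy $\E\Eb_{ij}^{2p}=\rho_s$ for every $p\ge 1$, a term is nonzero only when every distinct edge in the walk is traversed an even number of times; in that case it equals $\rho_s^e$ where $e$ is the number of distinct edges used.

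Step 2 is the combinatorial estimate. Encoding closed bipartite walks of length $2k$ with $e$ distinct edges by Dyck-like paths, in the spirit of the F\"uredi--Koml\'os moment argument for Wigner matrices adapted to the bipartite setting, I expect to obtain a bound of the form
$$\E\trace\bigl((\Eb\Eb^*)^k\bigr)\le n_{(1)}\bigl(c_1\sqrt{\rho_s n_{(1)}}+c_2 k\bigr)^{2k}$$
for absolute constants $c_1,c_2$. The leading term $c_1\sqrt{\rho_s n_{(1)}}$ captures the ``tree-like'' walks with exactly $k$ distinct edges, each traversed twice, while the correction $c_2 k$ absorbs non-tree walks where some edges are used $2p\ge 4$ times, which matter here because the entries are only bounded, not sub-Gaussian with variance exactly $\rho_s$.

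Step 3 is the optimization. With $t=0.5\sqrt{n_{(1)}}$ and $k=\lceil \alpha\log n_{(1)}\rceil$ for a suitable absolute constant $\alpha$, Markov gives
$$\Pb\bigl(\|\Eb\|\ge 0.5\sqrt{n_{(1)}}\bigr)\le n_{(1)}\Bigl(2c_1\sqrt{\rho_s}+\tfrac{2c_2 k}{\sqrt{n_{(1)}}}\Bigr)^{2k}.$$
The hypothesis $(n_1+n_2)^{1/6}/\log(n_1+n_2)>10.5/\bigl(\rho_s^{1/6}(1-5.6561\sqrt{\rho_s})\bigr)$ together with $\rho_s<0.03$ is exactly the quantitative statement that the base $2c_1\sqrt{\rho_s}+2c_2k/\sqrt{n_{(1)}}$ is strictly below $1$ by a definite margin, so that raising it to the $2k=2\alpha\log n_{(1)}$-th power beats the leading factor $n_{(1)}$ and delivers the $n_{(1)}^{-10}$ tail. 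The factor $1-5.6561\sqrt{\rho_s}$ is precisely the slack between $1$ and the tree-walk contribution $2c_1\sqrt{\rho_s}$, and the $1/6$ exponent arises from balancing $k\sim\log n_{(1)}$ against $\sqrt{n_{(1)}}$ in the lower-order term so that it dominates neither the tree contribution nor the prefactor $n_{(1)}$ when taking the $2k$-th power. The main obstacle I anticipate is the careful bookkeeping of numerical constants in the F\"uredi--Koml\'os-style count, sharp enough to hit the precise thresholds in the statement; this sharpness is presumably what the off-the-shelf statement of \cite[Theorem 6.3]{candes2009exact} does not provide and why the lemma is reproved here.
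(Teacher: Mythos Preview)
Your overall strategy---moment method plus Markov---is exactly what the paper does, but the execution diverges in Step~2 in a way that leaves a real gap. The paper first passes to the symmetric dilation
\[
\A=\begin{pmatrix}0&\Eb\\\Eb^*&0\end{pmatrix},
\]
so that $\|\Eb\|=\|\A\|$, and then invokes a quantitative F\"uredi--Koml\'os bound (stated and proved separately in the paper): for an $n\times n$ symmetric matrix with independent centered entries bounded by $K$ and variance at most $\sigma^2$, one has $\E\,\trace(\A^k)\le 2n(2\sigma\sqrt n)^k$ \emph{provided} $K^2k^6/(4\sigma^2 n)<1/2$. The $k^6$ constraint is where the exponent $1/6$ in the hypothesis actually originates: one needs $k$ large enough (of order $\log(n_1+n_2)$) to drive the Markov tail below $n_{(1)}^{-10}$, yet small enough that $k^6<2\rho_s(n_1+n_2)$. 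Combining these two requirements on $k$ produces precisely the stated lower bound on $(n_1+n_2)^{1/6}/\log(n_1+n_2)$, and the factor $1-5.6561\sqrt{\rho_s}$ is the relative gap $v/(2\sigma\sqrt n+v)$ in the resulting exponential tail.

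Your Step~2 bound $\E\,\trace((\Eb\Eb^*)^k)\le n_{(1)}(c_1\sqrt{\rho_s n_{(1)}}+c_2k)^{2k}$ with a merely \emph{linear} additive correction in $k$ is not what the F\"uredi--Koml\'os counting gives, and if it were true it would make the $1/6$ hypothesis superfluous: with $k=\alpha\log n_{(1)}$ the term $c_2k/\sqrt{n_{(1)}}$ tends to zero, so only smallness of $\rho_s$ would matter. Your own explanation of the $1/6$ (``balancing $k\sim\log n$ against $\sqrt{n}$'') is therefore inconsistent with the bound you wrote down. The real mechanism is that in the FK sum over walks grouped by the number $p$ of distinct vertices, the ratio of successive terms is bounded by $K^2k^6/(4\sigma^2 n)$, and the $1/6$ is exactly the requirement that this ratio stay below $1/2$ so the tree walks dominate. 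To close your argument you would either need to reprove the FK inequality in the form the paper uses (with the $k^6$ restriction made explicit), or to justify your sharper additive-$k$ bound, which would demand a considerably finer encoding of the non-tree walks than the standard F\"uredi--Koml\'os argument supplies.
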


The proof is provided in Appendix \ref{proofsign2norm} and uses the result of \cite{furedi1981eigenvalues}. In \cite{rpca}, the authors claim that using \cite{vershynin2010introduction}, $\|\Eb\| > 0.25 \sqrt{n_{(1)}}$ w.p. less than  $n_{(1)}^{-10}$. While the claim is correct, it is not possible to prove it using any of the results from \cite{vershynin2010introduction}.
Using ideas from \cite{vershynin2010introduction}, one can only show that the above holds when $n_{(2)}$ is upper bounded by a constant times $\log n_{(1)}$ (see the Appendix \ref{vershynin_bound}) which is a strong extra assumption. %An upper bound on $n_{(2)}$ is more restrictive than what is needed.

%It is possible to show this only when $n_{(2)}$ is upper bounded by constant times $\log n_{(1)}$ (by adapting the proof approach of \cite[Lemma ?]{vershynin2010introduction}). On the other hand, we use the result from \cite{furedi1981eigenvalues} to show the above without any assumption on $n_{(2)}$. \end{remark}

\subsection{Proof Architecture} \label{outline}

The proof of the theorem involves 4 main steps.
\begin{itemize}
\item[(a)] The first step is to show that when the locations of the support of $\Sb$ are Bernoulli distributed with parameter $\rho_s$ and the signs of $\Sb$ are i.i.d $\pm1$ with probability $1/2$ (and independent from the locations), and all the other assumptions on $\Lb, n_1, n_2, \rho_s, \rho_r$ in Theorem \ref{maintheorem} are satisfied, then Modified-PCP  (\ref{eq:sdp}) with $\lambda=1/\sqrt{n_{(1)}}$ recovers $\Sb$ exactly (and hence also $\Lb=\M-\Sb$) with probability at least $1-22n_{(1)}^{-10}$.

\item[(b)] By \cite[Theorem 2.3]{rpca}, the previous claim also holds for the model in which the signs of $\Sb$ are fixed and the locations of its nonzero entries are sampled from the Bernoulli model with parameter $\rho_s/2$, and all the other assumptions on $\Lb, n_1, n_2, \rho_s, \rho_r$ from Theorem \ref{maintheorem} are satisfied.

\item[(c)] By Lemma \ref{unifber} with $\epsilon_0=0.1\rho_s$, $m_0 = \lfloor0.4\rho_sn_1n_2\rfloor$, since $n_1n_2 > 500 \log n_1 / \rho_s^2$ (Assumption \ref{assrhosrhors}(f)), the previous claim holds with probability at least $1-23n_{(1)}^{-10}$ for the model in which the signs of $\Sb$ are fixed and the locations of its nonzero entries are sampled from the Uniform model with parameter $m_0$, and all the other assumptions on $\Lb, n_1, n_2, \rho_s, \rho_r$ from Theorem \ref{maintheorem} are satisfied.

\item[(d)] By \cite[Theorem 2.2]{rpca}, the previous claim also holds for the model in which the signs of $\Sb$ are fixed and the locations of its nonzero entries are sampled from the Uniform model with parameter $m \leq m_0 = 0.4\rho_sn_1n_2$, and all the other assumptions on $\Lb, n_1, n_2, \rho_s, \rho_r$ from Theorem \ref{maintheorem} are satisfied.
\end{itemize}

%By \cite[Theorem 2.2]{rpca}, as long as we have the theorem for $m=\lfloor\rho_{s}n_1n_2\rfloor$, it holds for all $m\leq \lfloor\rho_{s}n_1n_2\rfloor$. As proved in \cite[Appendix A.1]{rpca}, it is equivalent to prove Theorem \ref{maintheorem} with Bernoulli model instead of uniform model by taking $\rho_s$ around $m/(n_1n_2)$. And by \cite[Theorem 2.3]{rpca}, we only need to prove the case where the signs of the sparse matrix are random.
Thus, all we need to do is to prove step (a).
To do this we start with the KKT conditions and strengthen them to get a set of easy to satisfy sufficient conditions on the dual certificate under which $\Lb_\new, \Sb$ is the unique minimizer of (\ref{eq:sdp}). This is done in Sec \ref{dualcerts}. Next, we use the golfing scheme \cite{gross2010quantum,rpca} to construct a dual certificate that satisfies the required conditions (Sec. \ref{golfing}). %(Section \ref{keylemma}, \ref{proofwlc}, \ref{proofwsc}).

\subsection{Basic Facts} \label{facts}
We state some basic facts which will be used in the following proof.
%\begin{definition}[Sub-gradient \cite{subgradient}]
%Consider a convex function $f:\mathbb{O}\rightarrow \mathbb{R}$ on a convex set in a locally convex space $\mathcal{V}$. A function??? $v^*$ in the dual space $\mathcal{V}^*$ is called sub-gradient at a point $x_0$ in $\mathbb{O}$ if
%$$f(x)-f(x_0) \geq v^*(x-x_0).$$
%The set of all sub-gradients at $x_0$ is denoted by $\partial f(x_0)$.
%??? since we are in $R^n$ let's just keep it simpler?
%\end{definition}
%
\begin{definition}[Sub-gradient \cite{subgradient}]
Consider a convex function $f:\mathbb{O}\rightarrow \mathbb{R}$ on a convex set of matrices $\mathbb{O}$. A matrix $\Y$ is called its sub-gradient at a point $\X_0 \in \mathbb{O}$ if
$$f(\X)-f(\X_0) \geq \<\Y,(\X-\X_0)\>.$$
for all $\X \in \mathbb{O}$. The set of all sub-gradients of $f$ at $\X_0$ is denoted by $\partial f(\X_0)$.
%
%in a locally convex space $\mathcal{V}$. A function??? $v^*$ in the dual space $\mathcal{V}^*$ is called sub-gradient at a point $x_0$ in $\mathbb{O}$ if $$f(x)-f(x_0) \geq v^*(x-x_0).$$ ??? since we are in $R^n$ let's just keep it simpler?
\end{definition}

%% compact svd?
It is known \cite{lewis2003mathematics,watson1992characterization} that
$$\partial \|\Lb_\new\|_* = \{\U_\new \V_\new^*+\W: \PT \W=0, \|\W\| \leq 1\}.$$
and
$$\partial \|\Sb\|_1 = \{\F: \PO \F= \sgn(\Sb), \|\F\|_\infty \leq 1\}.$$

\begin{definition}[Dual norm \cite{RMD}]
The matrix norm $\|\cdot\|_\heartsuit$ is said to be dual to matrix norm $\|\cdot\|_\spadesuit$ if, for all $\Y_1\in \mathbb{R}^{n_1\times n_2}$, $\|\Y_1\|_\heartsuit = \sup_{\|\Y_2\|_\spadesuit \leq 1} \langle \Y_1, \Y_2\rangle$.
\end{definition}

\begin{proposition}[Proposition 2.1 of \cite{recht2010guaranteed}]
The following pairs of matrix norms are dual to each other:
\begin{itemize}
\item $\|\cdot\|_1$ and $\|\cdot\|_\infty$;
\item $\|\cdot\|_*$ and $\|\cdot\|$;
\item $\|\cdot\|_F$ and $\|\cdot\|_F.$
\end{itemize}
For all these pairs, the following hold.
\begin{enumerate}
\item $ | \<\Y,\Z\> | \le \|\Y\|_\spadesuit\|\Z\|_\heartsuit.$
\item Fixing any $\Y\in \mathbb{R}^{n_1\times n_2}$, there exists $\Z\in \mathbb{R}^{n_1\times n_2}$ (that depends on $\Y$) such that
$$\<\Y,\Z\>=\|\Y\|_\spadesuit\|\Z\|_\heartsuit.$$
\item In particular, we can get $\<\Y,\Z\>=\|\Y\|_1 \|\Z\|_\infty$ by setting $\Z = \sgn(\Y)$, %$\Z_{i_0,j_0}=1$ for an $(i_0,j_0)$ satisfying $\Y_{i_0,j_0}=\|\Y\|_\infty$
we can get $\<\Y,\Z\>=\|\Y\|_*\|\Z\|$ by setting $\Z=\U_Y\V_Y^*$ where $\U_Y\Sigmab_Y\V_Y^*$ is the SVD of $\Y$, and we can get  $\<\Y,\Z\>=\|\Y\|_F \|\Z\|_F$ by letting $\Z = \Y$.
\end{enumerate}
\label{dual_norm_lem}
\end{proposition}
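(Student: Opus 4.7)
The plan is to treat the three pairs in parallel. For each pair $(\|\cdot\|_\spadesuit,\|\cdot\|_\heartsuit)$ I would (i) establish a Hölder-type inequality $|\langle \Y,\Z\rangle|\le \|\Y\|_\spadesuit\|\Z\|_\heartsuit$, which is exactly item 1 of the second bullet; (ii) exhibit an explicit $\Z=\Z(\Y)$ at which the inequality is tight, which yields items 2 and 3 simultaneously; and then (iii) deduce duality itself from the definition $\|\Y\|_\heartsuit=\sup_{\|\Z\|_\spadesuit\le 1}\langle \Y,\Z\rangle$: the Hölder bound gives ``$\le$'' and the normalized extremal $\Z$ gives ``$\ge$''. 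Thus all three bullets of the proposition follow once the Hölder-plus-extremizer pair is produced for each norm.

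For the $(\|\cdot\|_1,\|\cdot\|_\infty)$ pair I would reshape $\Y$ and $\Z$ as vectors of length $n_1n_2$, reducing to the classical scalar Hölder inequality $|\sum y_iz_i|\le (\sum|y_i|)(\max_i|z_i|)$. The extremizer is $\Z=\sgn(\Y)$, since $\langle \Y,\sgn(\Y)\rangle=\sum_{i,j}|\Y_{ij}|=\|\Y\|_1$ and $\|\sgn(\Y)\|_\infty\le 1$. For the $(\|\cdot\|_F,\|\cdot\|_F)$ pair, both the inequality and its equality case are just Cauchy--Schwarz in the inner-product space $(\R^{n_1\times n_2},\langle\cdot,\cdot\rangle)$; equality is attained at $\Z=\Y$.

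The only nontrivial case is the $(\|\cdot\|_*,\|\cdot\|)$ pair. Here I would invoke von Neumann's trace inequality
\[
|\langle \Y,\Z\rangle|=|\trace(\Y^*\Z)|\le \sum_i \sigma_i(\Y)\sigma_i(\Z)\le \sigma_{\max}(\Z)\sum_i\sigma_i(\Y)=\|\Z\|\,\|\Y\|_*.
\]
For the equality certificate I take $\Z=\U_Y\V_Y^*$, where $\Y=\U_Y\Sigmab_Y\V_Y^*$ is the reduced SVD. A direct computation using the cyclic property of the trace and the orthonormality of the SVD factors gives
\[
\langle \Y,\U_Y\V_Y^*\rangle=\trace(\V_Y\Sigmab_Y\U_Y^*\U_Y\V_Y^*)=\trace(\Sigmab_Y)=\|\Y\|_*,
\]
while $\U_Y\V_Y^*$ has all nonzero singular values equal to $1$, so $\|\U_Y\V_Y^*\|=1$ whenever $\Y\neq 0$ (and the statement is trivial when $\Y=0$).

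The main obstacle is thus the nuclear/operator case, which relies on von Neumann's trace inequality; the other two pairs are essentially immediate from standard scalar Hölder and Cauchy--Schwarz arguments. Once the three Hölder-plus-extremizer statements are in place, item 1 is the Hölder bound, item 2 is existence of the extremizer, item 3 lists the three explicit constructions $\Z=\sgn(\Y)$, $\Z=\U_Y\V_Y^*$, and $\Z=\Y$, and the duality assertion in the first bullet follows by combining the Hölder upper bound with the achievability of the supremum at the (normalized) extremal $\Z$.
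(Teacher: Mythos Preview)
Your argument is correct and follows the standard route: scalar H\"older for the $\ell_1/\ell_\infty$ pair, Cauchy--Schwarz for the Frobenius pair, and von Neumann's trace inequality for the nuclear/operator pair, with the explicit extremizers $\sgn(\Y)$, $\U_Y\V_Y^*$, and $\Y$ certifying tightness and hence duality. There is nothing to compare against in the paper itself: the proposition is quoted from \cite{recht2010guaranteed} without proof and is used only as a basic fact, so the paper offers no alternative argument. Your write-up would serve as a self-contained justification if one were needed.
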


%By definition, for all $\\Y_1\in \mathbb{R}^{n_1\times n_2}$,
%\begin{align*}
%\|\Y_1\|_\heartsuit =& \sup_{\|\Y_2\|_\spadesuit \leq 1} \langle \Y_1, \Y_2\rangle \\
%=& \sup_{\|\Y_2\|_\spadesuit = 1} \langle \Y_1, \Y_2\rangle \\
%\end{align*}
%This is because if the supremum is reached when $\|\Y_2\|<1$, we can take $\Y_2 \leftarrow \frac{\Y_2}{\|\Y_2\|_\spadesuit}$. Thus, for all $\Y_1\in \mathbb{R}^{n_1\times n_2}$, there exist a $\|\Y_2\|_\spadesuit=1$ such that $\|\Y_1\|_\heartsuit=\langle \Y_1, \Y_2\rangle=\|\Y_1\|_\heartsuit\|\Y_2\|_\spadesuit$. ??? how do you show that the sup is reached? if you used max for the definition then this is true.

For any matrix $\Y$, we have
\begin{align*}
\|\Y\|_F^2 = \trace(\Y^*\Y) = \sum_{i,j}|\Y_{ij}|^2 \leq (\sum_{i,j}|\Y_{ij}|)^2 = \|\Y\|_1^2
\end{align*}
and
\begin{align*}
\|\Y\|_F^2 = \trace(\Y^*\Y)= \sum_{i} \sigma_i^2(\Y) \leq (\sum_{i} \sigma_i(\Y))^2 = \|\Y\|_*^2 % = \sum_{i} \lambda_i(\Y^*\Y)
\end{align*}

Let $\Upsilon$ be the linear space of matrices with column span equal to that of the columns of $\Pbf_1$ and row span equal to that of the columns of $\Pbf_2$ where $\Pbf_1$ and $\Pbf_2$ are basis matrices. %, i.e. $\Upsilon= \{\Pbf_1 \Y_1^* + \Y_2 \Pbf_2^*\}$.
Then, for a matrix $\M$,
$${\mathcal{P}_{\Upsilon^\perp}} \M = (\Ib- \Pbf_1 \Pbf_1^*) \M (\Ib- \Pbf_2 \Pbf_2^*)  \ \text{and} \ {\mathcal{P}_{\Upsilon}} \M = \M - {\mathcal{P}_{\Upsilon^\perp}} \M.$$
Let $\Upsilon$ be the linear space of matrices with column span equal to that of the columns of $\Pbf_1$. %, i.e. $\Upsilon= \{\Pbf_1 \Y^*\}$.
Then,
$${\mathcal{P}_{\Upsilon^\perp}} \M = (\Ib- \Pbf_1 \Pbf_1^*) \M \ \text{and} \ {\mathcal{P}_{\Upsilon}} \M = \Pbf_1 \Pbf_1^* \M $$

For a matrix $\x\y^*$ where $\x$ and $\y$ are vectors,
$$\|\x\y^*\|_F^2 = \|\x\|^2 \|\y\|^2.$$

If an operator $\mathcal{A}$ is linear and bounded, then \cite{operatornorm}
$$\|\mathcal{A}^*\mathcal{A}\|=\|\mathcal{A}\|^2.$$

\subsection{Definitions} \label{defs}
Here we define the following linear spaces of matrices.

Denote by $\Gamma$ the linear space of matrices with column span equal to that of the columns of $\G$, i.e. %$U_0$, i.e.
\begin{equation}
\label{eq:G}
\Gamma := \{\G \Y^*, \, \Y \in \mathbb{R}^{n_2 \times r_G}\},
\end{equation}
and by $\Gamma^{\perp}$ its orthogonal complement.

Define also the following linear spaces of matrices
\begin{equation}\label{eq:T}
T_\new := \{\U_\new \Y_1^* + \Y_2\V_\new^*, \, \Y_1\in \mathbb{R}^{n_2 \times r_\new}, \Y_2 \in \mathbb{R}^{n_1 \times r_\new}\},\nonumber
\end{equation}
\begin{equation}\label{eq:Pi}
\Pi := \{[\G \ \U_\new]\Y_1^* + \Y_2\V_\new^*, \, \Y_1\in \mathbb{R}^{n_2 \times (r_G+r_\new)}, \Y_2 \in \mathbb{R}^{n_1 \times r_\new}\},\nonumber
\end{equation}
Notice that $T_\new \cup \Gamma = \Pi.$

\begin{remark}
For the matrix $\e_i \e_j^*$, together with (\ref{eq:PU}) and (\ref{eq:PV}), we have
\begin{equation}
\begin{array}{ll}
&\|\PPp \e_i \e_j^*\|_F^2\\=& \|(\Ib- [\G \ \U_\new][\G \ \U_\new]^*) \e_i\|^2 \|(\Ib- \V_\new \V_\new ^*) \e_j\|^2 \\
\ge& (1-\rho_r/\log^2n_{(1)})^2,
\end{array}
\end{equation}
where $\rho_r/\log^2n_{(1)} \le 1$ as assumed. Using $\|\PP \e_i\e_j^*\|_F^2 +
\|\PPp \e_i \e_j^*\|_F^2 = 1$, we have
\begin{equation}
  \label{eq:PPeiej}
  \|\PP \e_i \e_j^*\|_F \le \sqrt{\frac{2\rho_r}{\log^2n_{(1)}}}.
\end{equation}
\end{remark}

%%=&\|L_\new + \PGp H + G(G^*H-H_2)\|_* \\ =&\|G_\perp(G_\perp^*L_\new + G_\perp^*\PGp H) + G(G^*H-H_2)\|_* \\

%\subsection{Temp - jz}
%The cost function of modified-PCP is convex. Thus, if we can show that for any feasible perturbation, the cost function is strictly larger, we will be done. Any feasible perturbation of $(L_\new,S,L^* G)$ will be of the form $(\Lb_\new + H_1, S-H, \Lb^*G + H_2)$, where $H_1+GH_2=H$. Denote by $G_\perp$ the basis matrix such that $[G \ G_\perp]$ is a unitary matrix. Then, $H_1=H-GH_2 = G_\perp G_\perp^* H + G(G^*H-H_2)$. Also, clearly, $\Lb_\new = G_\perp G_\perp^*\Lb_\new$ and $\PGp H =  G_\perp^*  G_\perp^* H$.
%Thus
%\begin{align*}
%&\|\Lb_\new + H_1\|_* = \|G_\perp(G_\perp^*\Lb_\new + G_\perp^*H) + G(G^*H-H_2)\|_* \\
%\geq&\|G_\perp(G_\perp^*\Lb_\new + G_\perp^* H)\|_* = \|\Lb_\new + \PGp H\|_* \\
%&\text{(Because $G_\perp$ and $G$ are both basis matrix, and $G_\perp^*G=0$.
%\footnote{If two basis matrices $U_1, U_2$ satisfying $U_1^*U_2=0$, then for any two matrix $\Y_1, \Y_2$ with full SVD decomposition $\Y_1=Q_1\Sigmab_1V_1^*$, $\Y_2=Q_2\Sigmab_2V_2^*$, $\|U_1\Y_1+U_2\Y_2\|_*=\|[U_1Q_1\ U_2Q_2]\left[\Sigmab_1\ 0\atop 0\ \Sigmab_2\right][V_1\ V_2]^*\| = \text{trace}(\Sigmab_1)+\text{trace}(\Sigmab_2) \geq \text{trace}(\Sigmab_1)=\|U_1\Y_1\|_*$, where equality holds if and only if $\Y_2=0$.})}\\
%\end{align*}
%where equality holds if and only if $G^*H-H_2=0$.
%%Thus any feasible perturbation will be of the form $(\Lb_\new + \PGp H, S-H, \Lb^*G + H^*G)$ where $H$ is any nonzero $n_1 \times n_2$ matrix.

\subsection{Dual Certificates} \label{dualcerts}

We modify Lemma 2.5 of \cite{rpca} to get the following lemma which gives us sufficient conditions on the dual certificate needed to ensure that modified-PCP succeeds.

\begin{lemma} %(similar to \cite[Lemma 2.5]{rpca})
\label{kktrc}
If $\|\PO\PP\| \le 1/4$, $\lambda < 3/10$, and there is a pair $(\W,\F)$ obeying
\[
\U_\new \V_\new^* +  \W = \lambda(\sgn(\Sb) + \F + \PO \D)
\]
with $\PP \W = \zero$, $\|\W\| \le \frac{9}{10}$, $\PO \F = \zero$,  $\|\F\|_\infty \le \frac{9}{10}$, and $\|\PO \D\|_F \le \frac{1}{4}$, then
$(\Lb_\new, \Sb, \Lb^*\G)$ is the unique solution to Modified-PCP (\ref{eq:sdp}).
\end{lemma}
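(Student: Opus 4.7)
The plan is to adapt the dual-certificate argument from \cite[Lemma 2.5]{rpca} to the modified constraint $\tL_\new + \G\tX^* + \tS = \M$. I start by fixing any nontrivial feasible perturbation $(\Hb_L, \Hb_S, \Hb_X)$, i.e.\ $\Hb_L + \G\Hb_X^* + \Hb_S = \zero$ with $(\Hb_L, \Hb_S, \Hb_X)$ not all zero, and apply the subgradient characterizations of $\|\cdot\|_*$ at $\Lb_\new$ and $\|\cdot\|_1$ at $\Sb$. Using Proposition~\ref{dual_norm_lem} I choose $\W_0\in\Pi^\perp$ with $\|\W_0\|\le 1$ and $\langle \W_0, \Hb_L\rangle = \|\PPp \Hb_L\|_*$; this is a valid nuclear-norm subgradient element because $T_\new\subset \Pi$ forces $\Pi^\perp\subset T_\new^\perp$. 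I also set $\F_0 = \sgn(\POp \Hb_S)$, so that $\langle \F_0, \Hb_S\rangle = \|\POp \Hb_S\|_1$. This lower-bounds the objective gap by $\langle \U_\new \V_\new^*, \Hb_L\rangle + \|\PPp \Hb_L\|_* + \lambda\langle \sgn(\Sb), \Hb_S\rangle + \lambda \|\POp \Hb_S\|_1$.

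Next I substitute the certificate identity $\U_\new \V_\new^* + \W = \lambda(\sgn(\Sb)+\F+\PO \D)$ to rewrite $\langle \U_\new \V_\new^*, \Hb_L\rangle$, using the two orthogonality facts $\G^*\U_\new = \zero$ (because $\Lb_\new=(\Ib-\G\G^*)\Lb$) and $\G^*\W=\zero$ (because $\Gamma\subset\Pi$ and $\PP\W=\zero$). Left-multiplying the certificate by $\G^*$ yields $\G^*(\sgn(\Sb)+\F+\PO\D)=\zero$, so $\langle \sgn(\Sb)+\F+\PO\D, \G\Hb_X^*\rangle = 0$; together with feasibility $\Hb_L+\Hb_S=-\G\Hb_X^*$ this cancels every $\G\Hb_X^*$ contribution. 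The residual lower bound is
\[
\|\PPp \Hb_L\|_* + \lambda \|\POp \Hb_S\|_1 - \langle \W, \Hb_L\rangle - \lambda \langle \F, \Hb_S\rangle - \lambda \langle \PO\D, \Hb_S\rangle,
\]
to which I apply the dual-norm estimates $|\langle \W, \Hb_L\rangle|\le \tfrac{9}{10}\|\PPp \Hb_L\|_*$, $|\langle \F, \Hb_S\rangle|\le \tfrac{9}{10}\|\POp \Hb_S\|_1$, and $|\langle \PO \D, \Hb_S\rangle|\le \tfrac{1}{4}\|\PO \Hb_S\|_F$. Combining $\|\PO\PP\|\le 1/4$ with the identity $\PPp \Hb_S = -\PPp \Hb_L$ (valid since $\G\Hb_X^*\in\Gamma\subset\Pi$, hence $\PPp(\G\Hb_X^*)=\zero$) gives $\|\PO\Hb_S\|_F \le \tfrac{1}{3}\|\POp\Hb_S\|_1 + \tfrac{4}{3}\|\PPp\Hb_L\|_*$, and plugging in collapses the lower bound to $[\tfrac{1}{10}-\tfrac{\lambda}{3}]\|\PPp \Hb_L\|_* + \tfrac{\lambda}{60}\|\POp \Hb_S\|_1$, which is strictly positive for $\lambda<3/10$ whenever $\PPp \Hb_L\ne\zero$ or $\POp\Hb_S\ne\zero$.

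The residual case $\PPp \Hb_L = \POp \Hb_S = \zero$ is the obstacle peculiar to modified-PCP and requires an argument outside the subgradient framework, since the $\G$-direction of $\Hb_L$ is invisible to the subgradient bound. Here $\Hb_L\in\Pi$ and $\Hb_S\in\Omega$; feasibility and $\Gamma\subset\Pi$ force $\Hb_S\in\Pi\cap\Omega$, and $\|\PO\PP\|\le 1/4$ implies $\Pi\cap\Omega=\{\zero\}$ (any such $\X$ obeys $\|\X\|_F=\|\PO\PP\X\|_F\le \tfrac{1}{4}\|\X\|_F$). So $\Hb_S=\zero$ and the perturbation reduces to $\Hb_L=-\G\Hb_X^*$. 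Using $\G^*\Lb_\new=\zero$, direct expansion gives $(\Lb_\new-\G\Hb_X^*)^*(\Lb_\new-\G\Hb_X^*)=\Lb_\new^*\Lb_\new+\Hb_X\Hb_X^*$, so by Weyl's inequality $\sigma_i(\Lb_\new-\G\Hb_X^*)\ge\sigma_i(\Lb_\new)$ for every $i$; since $\mathrm{tr}(\Hb_X\Hb_X^*)>0$ when $\Hb_X\ne\zero$, at least one of these inequalities is strict, yielding $\|\Lb_\new-\G\Hb_X^*\|_*>\|\Lb_\new\|_*$. The genuinely new ingredient throughout is the $\G\Hb_X^*$ bookkeeping, which is why the certificate condition is strengthened from $\PT\W=\zero$ (as in PCP) to $\PP\W=\zero$: this ensures $\Gamma$-contributions vanish identically in the certificate identity and allows the degenerate-direction argument above to close.
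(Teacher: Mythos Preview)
Your argument is correct. Both your proof and the paper's rest on the same two ingredients---the subgradient/dual-certificate inequality and the elementary fact that $\|\Lb_\new + \G\Y\|_* \ge \|\Lb_\new\|_*$ with strict inequality when $\G\Y\ne\zero$---but you deploy them in the opposite order. The paper \emph{first} invokes the orthogonal-column-span inequality $\|\G_\perp\Y_1+\G\Y_2\|_*\ge\|\G_\perp\Y_1\|_*$ to show $\|\Lb_\new+\Hb_1\|_*\ge\|\Lb_\new+\PGp\Hb\|_*$ (with $\Hb:=\Hb_1+\G\Hb_2^*$), thereby collapsing the three perturbation variables to the single variable $\Hb$ before running the subgradient argument; strictness in the degenerate case is then implicit in the ``equality iff $\Y_2=\zero$'' clause of that first inequality. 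You instead keep all three variables, use the certificate orthogonality $\G^*(\U_\new\V_\new^*+\W)=\zero$ to annihilate $\G\Hb_X^*$ inside the subgradient computation, and only invoke the orthogonal-column-span fact (via Weyl) at the very end to dispatch the residual case $\PPp\Hb_L=\POp\Hb_S=\zero$. The paper's ordering is slightly more economical (one variable, no separate case split), while yours makes the role of the strengthened hypothesis $\PP\W=\zero$ more transparent: it is precisely what forces $\G^*\W=\zero$ and hence the cancellation of the $\G\Hb_X^*$ terms. Your degenerate-case treatment is in fact more explicit than the paper's, which leaves the $\Hb=\zero$, $\Hb_2\ne\zero$ situation to the reader.
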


\begin{proof}
 Any feasible perturbation of $(\Lb_\new,\Sb,\Lb^* \G)$ will be of the form
 $$(\Lb_\new + \Hb_1, \Sb-\Hb, \Lb^*\G + \Hb_2), \ \text{with} \  \Hb_1+\G \Hb_2^*=\Hb.$$
Let $\G_\perp$ be a basis matrix that is such that $[\G \ \G_\perp]$ is a unitary matrix.
Then, $\Hb_1=\Hb-\G\Hb_2^* = \G_\perp \G_\perp^* \Hb + \G\G^*\Hb- \G \Hb_2^* $.
Notice that
\begin{itemize}
\item  $\Lb_\new = \G_\perp \G_\perp^*\Lb_\new$ and $ \G_\perp^*  \G_\perp^* \Hb = \PGp \Hb$.
\item For any two matrices $\Y_1$ and $\Y_2$,
$$\| \G_\perp \Y_1+ \G \Y_2\|_* \ge \| \G_\perp \Y_1\|_*$$
where equality holds if and only if $\Y_2 = \zero$.
To see why this holds, let the full SVD of $\Y_1,\Y_2$ be $\Y_1 \stackrel{\text{SVD}}{=} \Q_1\Sigmab_1\V_1^*$ and $\Y_2 \stackrel{\text{SVD}}{=} \Q_2\Sigmab_2\V_2^*$. Since $[\G \ \G_\perp]$ is a unitary matrix,  $\G_\perp \Y_1+ \G \Y_2 \stackrel{\text{SVD}}{=} [\G_\perp \Q_1\ \G \Q_2]\left[\Sigmab_1\ \zero\atop \zero\ \Sigmab_2\right][\V_1\ \V_2]^*$. Thus,
$\| \G_\perp \Y_1+ \G \Y_2\|_*= \text{trace}(\Sigmab_1)+\text{trace}(\Sigmab_2) \geq \text{trace}(\Sigmab_1)=\|\G_\perp \Y_1\|_*$ where equality holds if and only if $\Sigmab_2=\zero$, or equivalently, $\Y_2=\zero$.
\end{itemize}
Thus,
\begin{eqnarray}
&& \|\Lb_\new + \Hb_1\|_* \nonumber \\
&&= \|\G_\perp(\G_\perp^*\Lb_\new + \G_\perp^*\Hb) + \G(\G^*\Hb-\Hb_2^*)\|_* \nonumber \\
&& \geq  \|\G_\perp(\G_\perp^*\Lb_\new + \G_\perp^* \Hb)\|_* = \|\Lb_\new + \PGp \Hb\|_*  \ \ \ \ \
\label{eqH1}
\end{eqnarray}
where equality holds if and only if $\Hb_2 = \G^*\Hb$.

Recall that $T_\new \cup \Gamma = \Pi.$
Choose a $\W_a$ so that $\< \W_a, \PPp \Hb\> = \|\PPp \Hb\|_* \|\W_a\|$. This is possible using Proposition \ref{dual_norm_lem}. Let
$$\W_0 = \PPp \W_a/\|\W_a\|.$$
Thus, $\W_0$ satisfies $\PT \W_0=\zero$ and $\|\W_0\| \le 1$ and so it belongs to the sub-gradient set of the nuclear norm at $\Lb_\new$. Also,
\begin{eqnarray}
\< \W_0, \PGp \Hb\> &=& \frac{1}{\|\W_a\|} \< \PPp \W_a, \PGp \Hb\>  \nonumber \\
                 &=&  \frac{1}{\|\W_a\|} \< \W_a, \PPp \PGp \Hb\> \nonumber \\
                 &=& \frac{1}{\|\W_a\|} \< \W_a, \PPp \Hb\> = \|\PPp \Hb\|_*. \nonumber
\end{eqnarray}
%Thus, $\W_0$ belongs to the sub-gradient set of the nuclear norm at $\Lb_\new$ and $\< \W_0, \PPp \Hb\> = \|\PPp \Hb\|_*$.
%
Let $\F_0 = - \sgn(\POp \Hb)$. Thus, $\PO \F_0=\zero$, $\|\F_0\|_\infty=1$ and so it belongs to the sub-gradient set of the 1-norm at $\Sb$. Also,
$$\< \F_0, \Hb\>= \< \F_0, \POp \Hb\>  = - \|\POp \Hb\|_1.$$
Thus,
\begin{align*}
  &\|\Lb_\new + \Hb_1\|_* + \lambda \|\Sb - \Hb\|_1 \\
\ge  &\|\Lb_\new + \PGp \Hb\|_* + \lambda \|\Sb - \Hb\|_1 \\
&\text{(using (\ref{eqH1}))}\\
  \ge & \|\Lb_\new\|_* + \lambda\|\Sb\|_1 + \<\U_\new \V_\new^* + \W_0, \PGp \Hb\> \\
      & - \lambda \<\sgn(\Sb) + \F_0, \Hb\>\\
  &\text{(by definition of sub-gradient)}\\
   = & \|\Lb_\new\|_* + \lambda\|\Sb\|_1 + \|\PPp \Hb\|_* + \lambda \|\POp \Hb\|_1   \\
    & + \<\U_\new \V_\new^*- \lambda\sgn(\Sb), \Hb\> \\
  & \text{(using $\W_0$ and $\F_0$ as defined above)}\\
 \ge& \|\Lb_\new\|_* + \lambda\|\Sb\|_1 + \|\PPp \Hb\|_* + \lambda \|\POp \Hb\|_1  \\
   & - \text{max}(\|\W\|, \|\F\|_\infty)(\|\PPp  \Hb\|_* + \lambda \|\POp \Hb\|_1)+ \lambda \<\PO \D,  \Hb\>\\
  & \text{(by the lemma's assumption and Proposition \ref{dual_norm_lem})}\\
%
%  \ge & \|\Lb_\new\|_* + \lambda  \|S\|_1 + (1-\text{max}(\|\W\|, \|\F\|_\infty)) \times \\   & \Bigl(\|\PPp \Hb\|_* + \lambda \|\POp \Hb\|_1\Bigr)  + \lambda \<\PO \D,  \Hb\>\\
%
 \ge & \|\Lb_\new\|_* + \lambda \|\Sb\|_1 + \frac{1}{10} \Bigl(\|\PPp \Hb\|_* +  \lambda \|\POp \Hb\|_1\Bigr) \\
  & -  {\lambda \over 4} \|\PO \Hb\|_F\\
  &\text{(by  Proposition \ref{dual_norm_lem} and assumption $\|\PO \D\|_F \le \frac{1}{4}$)}
\end{align*}
Observe now that
\begin{align*}
  \|\PO \Hb\|_F & \le \|\PO \PP \Hb\|_F  + \|\PO \PPp \Hb\|_F  \\
  & \le \frac{1}{4} \|\Hb\|_F + \|\PPp \Hb\|_F \\  %\ \ \text{(by assumption $\|\PO\PP\| \le 1/4$)}\\
  & \le \frac{1}{4}\|\PO \Hb\|_F + \frac{1}{4} \|\POp \Hb\|_F  + \|\PPp \Hb\|_F
\end{align*}
and, therefore,
\begin{align*}
 \|\PO \Hb\|_F & \le  \frac{1}{3}\|\POp \Hb\|_F  + \frac{4}{3}\|\PPp \Hb\|_F \\
     & \le \frac{1}{3}\|\POp \Hb\|_1  + \frac{4}{3}\|\PPp \Hb\|_*
\end{align*}
In conclusion,
\begin{align*}
&\|\Lb_\new + \PGp \Hb\|_* + \lambda \|\Sb - \Hb\|_1 \\
& \ge \|\Lb_\new\|_* + \lambda \|\Sb\|_1 + \frac{}{} \Bigl((\frac{1}{10}-\frac{\lambda}{3}) \|\PPp \Hb\|_* + \frac{\lambda }{60}
\|\POp \Hb\|_1\Bigr) \\
& > \|\Lb_\new\|_* + \lambda \|\Sb\|_1
\end{align*}
The last inequality holds because $\|\PO \PP\| < 1$ and this implies that $\Pi \cap \Omega = \{0\}$ and so at least one of $\PPp \Hb$ or $\POp \Hb$ is strictly positive for $\Hb \neq \zero$. Thus, the cost function is strictly increased by any feasible perturbation. Since the cost is convex, this proves the lemma.
\end{proof}

Lemma \ref{kktrc} is equivalently saying that $(\Lb_\new,\Sb,\Lb^* \G)$ is the unique solution to Modified-PCP (\ref{eq:sdp}) if there is a $\W$ satisfying:
\begin{equation}
\label{dual-certif-dg}
\begin{cases} \W \in \Pi^\perp, \\
  \|\W\| \le 9/10,\\
  \|\PO(\U_\new \V_\new^* -\lambda \sgn(\Sb) +  \W)\|_F \le \lambda/4,\\
  \|\POp(\U_\new \V_\new^* +  \W)\|_\infty < 9\lambda/10.\\
\end{cases}
\end{equation}

\subsection{Construction of the required dual certificate}
\label{golfing}
The golfing scheme is introduced by \cite{gross2011recovering,gross2010quantum}; here we use it with some modifications similar to those in \cite{rpca} to construct dual certificate. Assume that $\Omega \backsim \Ber(\rho_s)$ or equivalently, $\Omega^c \backsim \Ber(1-\rho_s)$. %??? define $\Omega {\backsim} \Ber(\rho_s)$

Notice that $\Omega^c$ can be generated as a union of $j_0$ i.i.d. sets $\{\bOmega_j\}_{j=1}^{j_0}$, where $\bOmega_j \stackrel{i.i.d}{\backsim} \Ber(q), 1\leq j\leq j_0$ with $q,j_0$ satisfying $\rho_s=(1-q)^{j_0}$. This is true because
$$\mathbb{P}((i,j)\in \Omega)=\mathbb{P}((i,j)\notin \bOmega_1\cup \bOmega_2\cup \cdots \bOmega_{j_0} )=(1-q)^{j_0}.$$
%\mathbb{P}((i,j)\notin \Omega^c)=
As there is overlap between $\bOmega_j's$, we have $q\geq (1-\rho_s)/j_0$.\\
Let $\W=\W^L+\W^S$, where $\W^L, \W^S$ are constructed similar to \cite{rpca} as:
\begin{itemize}
\item {\em Construction of $\W^L$ via the golfing scheme.} Let $\Y_0=0$, $$\Y_j=\Y_{j-1}+q^{-1}\Pc_{\bOmega_j}\PP(\U_\new \V_\new^*-\Y_{j-1}),$$ and $\W^L=\PPp \Y_{j_0}.$ Notice that $\Y_j \in \Omega^\perp$.

\item {\em Construction of $\W^S$ via the method of least squares.}
%As assumed, $\|\PO\PP\|<1/4$, i.e., $\|\PO\PP\PO\|<1/16$, and
Assume that $\|\PO\PP\| \le 1/4$. We prove that this holds in Lemma \ref{WSc} below. With this, $\|\PO\PP\PO\| = \|\PO\PP\|^2 \le 1/16$ and so $\| \PO-\PO\PP\PO \| \ge 1 - 1/16 > 0$. Thus this operator, which maps the subspace $\Omega$ onto itself, is invertible. Let $(\PO-\PO\PP\PO)^{-1}$ denote its inverse and let
$$\W^S=\lambda \PPp(\PO-\PO\PP\PO)^{-1}\text{sgn}(\Sb).$$
Using the Neumann series, notice that \cite{rpca}
$$(\PO-\PO\PP\PO)^{-1}\text{sgn}(\Sb)= \sum_{k \ge 0} (\PO \PP \PO)^k \sgn(\Sb).$$
\end{itemize}
Thus \cite{rpca},
$$\PO \W^S = \lambda \sgn(\Sb).$$
This follows because $(\PO - \PO \PP \PO)$ is an operator mapping $\Omega$ onto itself, and so $(\PO - \PO \PP \PO)^{-1}\sgn(\Sb) = \PO (\PO - \PO \PP \PO)^{-1}\sgn(\Sb)$ \footnote{This is also clear from the Neumann series}. With this, $\PO \W^S = \lambda \PO( \I- \PP) \PO (\PO - \PO \PP \PO)^{-1}\sgn(\Sb) = \lambda (\PO - \PO \PP \PO) (\PO - \PO \PP \PO)^{-1}\sgn(\Sb) = \lambda \sgn(\Sb)$.

%thus $W^S$ is in the set $\{\Y\in \Pi^\perp, \PO \Y=\lambda \text{sgn}(S)\}$.

Clearly, $\W = \W^L+\W^S$ is a dual certificate if
\begin{equation}
\label{dual-certif-use}
\begin{cases}
  \|\W^L + \W^S\| < 9/10,\\
  \|\PO(\U_\new \V_\new^* + \W^L)\|_F \le \lambda/4,\\
  \|\POp(\U_\new \V_\new^* + \W^L + \W^S)\|_\infty < 9\lambda/10.\\
\end{cases}
\end{equation}

%\subsection{Main Lemmas} \label{keylemma}
Next, we present the two lemmas that together prove that (\ref{dual-certif-use}) holds w.h.p..

\begin{lemma}
\label{WLc}
Assume $\Omega \sim \Ber(\rho_s)$. Let $j_0=1.3\lceil \log n_{(1)}\rceil$. Under the other assumptions of Theorem \ref{maintheorem}, the matrix $\W^L$ obeys, with probability at least $1-11n_{(1)}^{-10}$,
\begin{enumerate}
\item[(a)] $\|\W^L\| < 1/16$,
\item[(b)] $\|\PO(\U_\new \V_\new^* +  \W^L)\|_F < \lambda/4$,
\item[(c)] $\|\POp(\U_\new \V_\new^* + \W^L)\|_\infty < 2\lambda/5$.
\end{enumerate}
\end{lemma}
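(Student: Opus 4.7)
\textbf{Proof plan for Lemma \ref{WLc}.} My plan is to follow the golfing-scheme analysis of \cite{rpca} but with $T$ replaced by $\Pi$ and $\U\V^*$ replaced by $\U_\new\V_\new^*$, using the denseness assumptions (\ref{eq:PU})--(\ref{eq:UV}) on $[\G\ \U_\new]$ and $\V_\new$. First, I would introduce the residual matrices
\begin{equation}\nonumber
\Z_j := \PP(\U_\new\V_\new^* - \Y_j), \qquad j=0,1,\dots,j_0,
\end{equation}
with $\Z_0 = \U_\new\V_\new^*$ since $\U_\new\V_\new^* \in T_\new \subset \Pi$. A direct calculation using $\PP\Z_{j-1}=\Z_{j-1}$ shows
\begin{equation}\nonumber
\Z_j = (\PP - q^{-1}\PP\Pc_{\bOmega_j}\PP)\Z_{j-1}, \qquad \Y_{j_0} = \sum_{k=1}^{j_0} q^{-1}\Pc_{\bOmega_k}\Z_{k-1}.
\end{equation}
Since each $\bOmega_k \subseteq \Omega^c$, we have $\PO\Y_{j_0}=0$, which together with $\U_\new\V_\new^* = \Z_{j_0} + \PP\Y_{j_0}$ yields the key identity $\U_\new\V_\new^* + \W^L = \Z_{j_0} + \Y_{j_0}$ and hence $\PO(\U_\new\V_\new^* + \W^L) = \PO \Z_{j_0}$.

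For part (b), I would apply a Bernoulli-sampling bound of the type of Lemma \ref{POPTc0} (\cite[Theorem 4.1]{candes2009exact}) to the subspace $\Pi$: under (\ref{eq:PU})--(\ref{eq:PV}) and a sufficiently large $q$ (note $q \ge (1-\rho_s)/j_0$), it gives $\|\PP - q^{-1}\PP\Pc_{\bOmega_k}\PP\| \le 1/2$ with probability at least $1 - O(n_{(1)}^{-10})$ for each fixed $k$. Since $\bOmega_k$ is independent of $\Z_{k-1}$ (which is $\sigma(\bOmega_1,\dots,\bOmega_{k-1})$-measurable), iterating and taking a union bound over $k=1,\dots,j_0$ gives $\|\Z_{j_0}\|_F \le 2^{-j_0}\|\U_\new\V_\new^*\|_F$. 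With $j_0 = 1.3\lceil\log n_{(1)}\rceil$, $\|\U_\new\V_\new^*\|_F \le \sqrt{r_\new} \le \sqrt{n_{(2)}}$, and Assumption \ref{assrhosrhors}(d), this is easily below $\lambda/4 = 1/(4\sqrt{n_{(1)}})$, and part (b) follows from $\|\PO\Z_{j_0}\|_F \le \|\Z_{j_0}\|_F$.

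For parts (a) and (c), I would invoke two further concentration results, analogous to those in \cite{rpca}, again applied with $\Pi$ in place of $T$. For (a), write $\W^L = \PPp \Y_{j_0} = \sum_{k=1}^{j_0} q^{-1}\PPp \Pc_{\bOmega_k}\Z_{k-1}$ and apply a matrix-Bernstein bound to each summand to get $\|q^{-1}\PPp \Pc_{\bOmega_k}\Z_{k-1}\| \lesssim \sqrt{n_{(1)}\rho_r/\log^2 n_{(1)}}\,\|\Z_{k-1}\|_\infty + \|\Z_{k-1}\|$, then combine with a uniform bound $\|\Z_k\|_\infty \le (1/2)^k \|\U_\new\V_\new^*\|_\infty$. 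The bound (\ref{eq:UV}) on $\|\U_\new\V_\new^*\|_\infty$ plus the geometric series produce $\|\W^L\| < 1/16$ once $\rho_r$ is small enough per Assumption \ref{assrhosrhors}(a). For (c), use $\POp(\U_\new\V_\new^* + \W^L) = \POp\Z_{j_0} + \Y_{j_0}$; control $\|\POp\Z_{j_0}\|_\infty \le \|\Z_{j_0}\|_F < \lambda/4$ by (b), and bound $\|\Y_{j_0}\|_\infty$ via the scalar Bernstein inequality applied entrywise to the telescoping sum, using the inductive control $\|\Z_k\|_\infty \le (1/2)^k\|\U_\new\V_\new^*\|_\infty$ and (\ref{eq:UV}). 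The sum $\|\POp\Z_{j_0}\|_\infty + \|\Y_{j_0}\|_\infty$ is then below $2\lambda/5$.

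The main obstacle is verifying that the incoherence parameter $\rho_r$, which in our setting constrains $[\G\ \U_\new]$ and $\V_\new$ rather than $\U$ and $\V$, is small enough for the operator-norm and $\ell_\infty$-concentration bounds to close with the explicit constants required by Assumption \ref{rhosr}. In particular, the bound $\|\PP\e_i\e_j^*\|_F \le \sqrt{2\rho_r/\log^2 n_{(1)}}$ from (\ref{eq:PPeiej}) is the main ingredient that replaces the analogous bound for $T$ in \cite{rpca}, and I would propagate this through each concentration step to ensure the iterated bounds survive the factor of $j_0 \asymp \log n_{(1)}$ introduced by the union bound, and that the overall failure probability does not exceed $11 n_{(1)}^{-10}$.
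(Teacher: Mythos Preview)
Your proposal is correct and follows essentially the same golfing-scheme argument as the paper: defining the residuals $\Z_j$, establishing geometric decay in both $\|\cdot\|_F$ and $\|\cdot\|_\infty$ via the $\Pi$-analogues of Lemmas \ref{POPTc0} and \ref{teo:infty}, and controlling $\|\W^L\|$ and $\|\Y_{j_0}\|_\infty$ through the telescoping sum. Two minor refinements the paper uses that you may want to adopt when filling in constants: (i) the contraction factor is taken to be $\epsilon=\rho_r^{1/4}\le e^{-1}$ rather than $1/2$, which is what makes $\|\Z_{j_0}\|_F\le n_{(1)}^{-1.3}\sqrt{r}$ and closes (b) and (c) with the stated numerics; and (ii) for (a), the paper inserts $0=\PPp\Z_{k-1}$ to write $q^{-1}\PPp\Pc_{\bOmega_k}\Z_{k-1}=\PPp(q^{-1}\Pc_{\bOmega_k}-\I)\Z_{k-1}$ and then applies Lemma \ref{teo:sixthree} directly, avoiding the extra $\|\Z_{k-1}\|$ term in your bound.
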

This is similar to \cite[Lemma 2.8]{rpca}. The proof is in the Appendix.

\begin{lemma}
\label{WSc}
Assume $\Omega \sim \Ber(\rho_s)$, and the signs of $\Sb$ are independent of $\Omega$ and i.i.d. symmetric. Under the other assumptions of Theorem \ref{maintheorem},  with probability at least $1-11n_{(1)}^{-10}$, the following is true
\begin{enumerate}
\item[(a)] $\|\PO\PP\| \le 1/4$ and so $\W_S$ constructed earlier is well defined.
\item[(b)] $\|\W^S\| < 67/80$,
\item[(c)] $\|\POp \W^S\|_\infty < \lambda/2$.
\end{enumerate}
\end{lemma}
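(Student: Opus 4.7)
The plan is to prove parts (a), (b), (c) in that order, since (b) and (c) both rely on (a) to make the Neumann series for $\W^S$ converge.

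\textbf{Part (a).} The bound $\|\PO \PP\| \le 1/4$ will follow from a Bernoulli-sampling concentration result of the Cand\`es--Recht type (specifically Lemma \ref{POPTc0}, i.e.\ \cite[Theorem 4.1]{candes2009exact}, invoked in Assumption \ref{rhosr}). The key input is the uniform incoherence bound on $\Pi$, namely (\ref{eq:PPeiej}): $\|\PP \eab\|_F \le \sqrt{2\rho_r/\log^2 n_{(1)}}$. Applying this lemma with $\Omega^c \sim \Ber(1-\rho_s)$ yields $\|\PP - (1-\rho_s)^{-1}\PP\POp\PP\| \le \epsilon$ w.h.p.\ for a small $\epsilon$, hence
\[
\|\PP\PO\PP\| \;=\; \|\PP - \PP\POp\PP\| \;\le\; \rho_s + (1-\rho_s)\epsilon.
\]
Using $\|\PO\PP\|^2 = \|\PP\PO\PP\|$ together with $\rho_s \le 0.0156$ from Assumption \ref{rhosr}(b) and the smallness of $\rho_r$ from Assumption \ref{rhosr}(a), this is at most $1/16$, giving $\|\PO\PP\| \le 1/4$.

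\textbf{Part (b).} With $\|\PO\PP\PO\| \le 1/16$, the Neumann series defining $\W^S$ converges and I split
\[
\W^S \;=\; \lambda \PPp \sgn(\Sb) \;+\; \lambda \PPp (\PO\PP\PO)\sum_{k\ge 0}(\PO\PP\PO)^k \sgn(\Sb).
\]
For the first term, under the Bernoulli sign model $\sgn(\Sb)$ has exactly the distribution of the matrix $\Eb$ in Lemma \ref{sign2norm}; since Assumption \ref{rhosr}(b),(e) imply its hypotheses, $\|\sgn(\Sb)\| \le 0.5\sqrt{n_{(1)}}$ w.h.p., and therefore $\lambda\|\PPp\sgn(\Sb)\| \le \lambda\cdot 0.5\sqrt{n_{(1)}} = 1/2$. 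The ``tail'' term is the delicate one: a crude bound $\lambda\|\cdot\|_F \le \lambda\frac{1/16}{15/16}\sqrt{|\Omega|}$ is of order $\sqrt{\rho_s n_{(2)}}$, which is not a constant. Instead, I view the tail, conditional on $\Omega$, as a linear function of the i.i.d.\ signs $\sgn(\Sb)$ on $\Omega$ and apply a matrix Hoeffding/Bernstein inequality in the spirit of \cite[Lemma 2.9]{rpca}; the resulting operator-norm bound comes out proportional to $\sqrt{\rho_r}$ up to logarithmic factors, which is $\le 67/80 - 1/2$ under Assumption \ref{rhosr}(a),(b).

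\textbf{Part (c).} For any fixed $(a,b) \in \Omega^c$, write
\[
\W^S_{ab} \;=\; \langle \eab, \W^S\rangle \;=\; -\lambda\, \bigl\langle \PP \eab,\; \textstyle\sum_{k\ge 0}(\PO\PP\PO)^k \sgn(\Sb)\bigr\rangle,
\]
using that $\sum_k(\PO\PP\PO)^k \sgn(\Sb) \in \Omega$ while $\eab \perp \Omega$ and $\PPp \eab = \eab - \PP\eab$. Conditioning on $\Omega$, this is a linear combination of the i.i.d.\ signs on $\Omega$, whose $\ell_2$-norm of coefficients is at most $\bigl\|(\I-\PO\PP\PO)^{-1}\PO\PP\eab\bigr\|_F \le \tfrac{16}{15}\|\PP\eab\|_F \le \tfrac{16}{15}\sqrt{2\rho_r/\log^2 n_{(1)}}$ by (\ref{eq:PPeiej}). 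Scalar Hoeffding gives a sub-Gaussian tail of order $\exp(-c\lambda^2\log^2 n_{(1)}/\rho_r)$; a union bound over $\Omega^c$ (size at most $n_1 n_2$) then yields $\|\POp\W^S\|_\infty < \lambda/2$ w.h.p., provided $\rho_r$ is small enough in the sense of Assumption \ref{rhosr}(a).

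\textbf{Main obstacle.} The delicate step is the tail estimate in part (b): the naive Frobenius/operator-norm triangle inequality gives a bound that grows like $\sqrt{\rho_s n_{(2)}}$, so an operator-norm matrix concentration inequality is required, and the constants must be tracked carefully so that the combination of the $1/2$ from the leading term and the tail contribution falls below the required $67/80$ under the quantitative conditions in Assumption \ref{rhosr}.
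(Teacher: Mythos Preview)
Your plan for parts (a) and (c) is essentially what the paper does. For (a) the paper packages your computation as Lemma~\ref{POPTc} (Corollary~2.7 of \cite{rpca}), obtaining $\|\PO\PP\|^2\le\sigma:=\rho_s+0.2$ with probability $\ge 1-3n_{(1)}^{-10}$. For (c) the paper writes $(\POp\W^S)_{ij}=\lambda\langle\X(i,j),\Eb\rangle$ with $\X(i,j)=-(\PO-\PO\PP\PO)^{-1}\PO\PP(\e_i\e_j^*)$ and applies scalar Hoeffding plus a union bound over $n_1n_2$ entries, exactly as you outline.

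Part (b) is where your proposal drifts. The split $\W^S=\lambda\PPp\Eb+\lambda\PPp\mathcal{R}(\Eb)$ with $\mathcal{R}=\sum_{k\ge1}(\PO\PP\PO)^k$, and the bound $\lambda\|\Eb\|\le1/2$ via Lemma~\ref{sign2norm}, are indeed the starting point. But the tail bound is \emph{not} proportional to $\sqrt{\rho_r}$; the incoherence of $\Pi$ plays no role in this step. The paper uses an $\epsilon$-net plus scalar Hoeffding: take $1/2$-nets $N_1,N_2$ of the unit spheres so that $\|\mathcal{R}(\Eb)\|\le4\sup_{\x\in N_2,\y\in N_1}\langle\y,\mathcal{R}(\Eb)\x\rangle$; for each fixed $(\x,\y)$, conditional on $\Omega$, $\langle\y,\mathcal{R}(\Eb)\x\rangle=\langle\mathcal{R}(\y\x^*),\Eb\rangle$ is a Rademacher sum with variance proxy $\|\mathcal{R}(\y\x^*)\|_F^2\le\|\mathcal{R}\|^2$. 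On the event $\{\|\PO\PP\|\le\sigma\}$ one has $\|\mathcal{R}\|\le\sigma^2/(1-\sigma^2)$, a \emph{fixed numerical constant} depending only on $\rho_s$. The Hoeffding exponent $\propto t^2/\|\mathcal{R}\|^2$ then beats the net cardinality $|N_1||N_2|\le5^{n_1+n_2}$, yielding $\lambda\|\mathcal{R}(\Eb)\|\le27/80$ w.h.p., and $1/2+27/80=67/80$.

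By contrast, a direct matrix Bernstein on $\mathcal{R}(\Eb)=\sum_{(i,j)\in\Omega}\Eb_{ij}\,\mathcal{R}(\e_i\e_j^*)$ does not deliver your claimed $\sqrt{\rho_r}$ bound: even granting $\|\mathcal{R}(\e_i\e_j^*)\|_F\lesssim\sqrt{\rho_r}/\log n_{(1)}$, the crude variance estimate sums $|\Omega|\approx\rho_sn_1n_2$ such terms, so after multiplying by $\lambda=n_{(1)}^{-1/2}$ the resulting bound behaves like $\sqrt{\rho_s\rho_r\,n_{(2)}/\log n_{(1)}}$, which is not dimension-free. The net argument (equivalently, bounding the Rademacher variance by the operator norm $\|\mathcal{R}\|$ rather than by sums of per-entry Frobenius norms) is precisely what makes the step go through, and the final constant is governed by $\rho_s$, not $\rho_r$.
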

This is similar to \cite[Lemma 2.9]{rpca}. The proof is in the Appendix.

%\Hbere are some facts about the Assumption \ref{rhosr}.

%\begin{equation}
%\label{rhosrhor}
%\begin{cases}
%C_\epsilon = \min\{e^{-1}\rho_r^{-1/4},\frac{\sqrt{60}(1-e^{-1})}{16\sqrt{11}C_{03}\rho_r^{1/4}},3.079\},\\
%\rho_s \leq \left(1-\frac{\max\{\frac{60\rho_r^{1/2}}{C_\epsilon^2}, \frac{11C_{01}\rho_r^{1/2}}{C_\epsilon^2}, \frac{11 \log n_{(1)}}{n_{(2)}}\}}{\log n_{(1)}}\right)^{2\lceil \log n_{(1)}\rceil},\\
%\begin{array}{ll}
%\rho_s < \min&\Big\{\sqrt{\frac{(12\rho_r)^{1/2}}{\log^{1/2}n_{(1)}}+1}-\frac{(12\rho_r)^{1/4}}{\log^{1/4} n_{(1)}}-0.2,\\ &0.0312, 1- \frac{250C_{01}\rho_r}{\log n_{(1)}}\Big\}
%\end{array}
%\end{cases}
%\end{equation}

\section{Solving the Modified-PCP program and experiments with it}
\label{simu}

We first give below the algorithm used to solve modified-PCP. Next, we give recovery error comparisons for static simulated and real data. Finally we show some online robust PCA experiments, both on simulated and real data.

\subsection{Algorithm for solving Modified-PCP}
%Theorem \ref{maintheorem} shows that given partial knowledge $U_0$ about the subspace of the low rank matrix $\Lb_0$, the low-rank matrix $\Lb_\new$ and sparse matrix $S$ can be recovered from $\PGp M = \Lb_\new + \PGp S$, thus $\Lb$ by $M-S$.
We give below an algorithm based on the Inexact Augmented Lagrange Multiplier (ALM) method \cite{alm} to solve the modified-PCP program, i.e. solve (\ref{eq:sdp}). This algorithm is a direct modification of the algorithm designed to solve PCP in \cite{alm} and uses the idea of \cite{hale2008fixed,cai2010singular} for the sparse recovery step.
%Using the same ideas, along with an accurate recovery result for the basis pursuit denoising (BPDN) \cite{tropp} problem, it should be possible to prove that the output of the algorithm converges to the solution of modified-PCP. %, Bertsekas_colmm, Bertsekas_np

%More details can be found in \cite{alm}. %The general method of augmented Lagrange multipliers \cite{alm0} is used to solve problems:
%\begin{equation}
%\min f(\X), \text{ subject to } h(\X)=0
%\end{equation}
%where $f: \R^n \rightarrow \R$ and $h: \R^n \rightarrow \R^m$
For the modified-PCP program (\ref{eq:sdp}), the Augmented Lagrangian function is:
\begin{equation*}
\begin{array}{ll}
\mathbb{L}(\tL_\new,\tS,\Y,\tau) = &\|\tL_\new\|_* + \lambda \|\tS\|_1 + \langle \Y,  \M - \tL_\new-\tS \\ &- \G\tX^*\rangle+
\dfrac{\tau}{2}\|\M-\tL_\new- \tS - \G\tX^*\|_F^2,
\end{array}
\end{equation*}
Thus, with similar steps in \cite{alm}, we have following algorithm.
\begin{algorithm}[th]
\caption{{\bf Algorithm for solving Modified-PCP (\ref{eq:sdp})}}
\begin{algorithmic}[1]
\REQUIRE Measurement matrix $\M \in \R^{n_1 \times n_2}$, $\lambda=1/\sqrt{\max\{n_1, n_2\}}$, $\G$.
\STATE $\Y_0= \M/\max\{\|\M\|, \|\M\|_\infty/\lambda\}$; $\tS_0=0$; $\tau_0 > 0$; $v > 1$; $k = 0$.
\WHILE{not converged} %\STATE \COMMENT {Lines 4-5 solve $L_{k+1} =
%\arg\min\limits_{L} \| L\|_* + \lambda \|S_k\|_1 +  <\Y,\PGp M-\PGp S_k-L>   + \frac{\tau}{2} \|\PGp M - \PGp S_k- L \|_F^2$.}
 \STATE $\tS_{k+1} = {\mathfrak{S}}_{\lambda\tau_k^{-1}}[\M-\G\tX_{k}-\tL_{\new,k}+\tau_k^{-1} \Y_k]$.
 \STATE $(\tilde{\U},\tilde{\Sigmab},\tilde{\V}) =\mbox{svd}((I-\G\G^*) (\M- \tS_{k+1}+\tau_k^{-1} \Y_k))$;
\STATE $\tL_{\new,k+1} =
\tilde{\U}{\mathfrak{S}}_{\tau_k^{-1}}[\tilde{\Sigmab}]\tilde{\V}^T $.
%\STATE \COMMENT {Line 7 solves
%$S_{k+1} = \arg\min\limits_{S} \| L_{k+1}\|_* + \lambda \|S_k\|_1 +  <\Y,\PGp M-S_k-A_{k+1}>   + \frac{\tau}{2} \|\PGp M - A_{k+1} - S_k\|_F^2$.}
\STATE $\tX_{k+1}= \G^* (\M- \tS_{k+1}+\tau_k^{-1} \Y_k)$
\STATE $\Y_{k+1} = \Y_k + \tau_k (\M - \tS_{k+1}- \tL_{\new, k+1} - \G\tX_{k+1})$.
\STATE $\tau_{k+1}=\min(v \tau_{k}, \bar{\tau})$. \STATE $k \gets k+1$.
\ENDWHILE \ENSURE $\hat{\Lb}_\new=\tL_{\new,k}, \hat{\Sb}=\tS_k, \hat{L}=\M-\tS_k$.
\end{algorithmic}
\label{inexact_alm}
\end{algorithm}
In Algorithm \ref{inexact_alm}, Lines 3 solves $\tS_{k+1} =\arg\min\limits_{\tS} \|\tL_{\new,k}\|_* + \lambda \|\tS\|_1 + \langle \Y_k,  \M - \tL_{\new,k}-\tS - \G\tX_k^*\rangle+
\dfrac{\tau}{2}\|\M-\tL_{\new,k}- \tS - \G\tX_k^*\|_F^2$; Line 4-6 solve $[\tL_{\new, k+1}, \tX_{k+1}] =\arg\min\limits_{\tL_\new, \tX} \|\tL_{\new}\|_* + \lambda \|\tS_{k+1}\|_1 + \langle \Y_k,  \M - \tL_{\new}-\tS_{k+1} - \G\tX^*\rangle+
\dfrac{\tau}{2}\|\M-\tL_{\new}- \tS_{k+1} - \G\tX_k^*\|_F^2$. The soft-thresholding operator is defined as
\begin{equation}
\mathfrak{S}_{\epsilon}[x]=\left\{\begin{array}{ll} x-\epsilon,&\text{ if }x>\epsilon;\\x+\epsilon,&\text{ if }x<-\epsilon;\\ 0, & \text{ otherwise,} \end{array}\right.
\end{equation}
Parameters are set as suggested in \cite{alm}, i.e., $\tau_0=1.25/\|\M\|, v=1.5, \bar{\tau}=10^7\tau_0$ and iteration is stopped when $\|\M - \tS_{k+1}- \tL_{\new, k+1} - \G\tX_{k+1}\|_F/\|\M\|_F<10^{-7}$.

\subsection{Simulated data}
\label{simu_simulated_data}

The data was generated as follows. For the sparse matrix $\Sb$, we generated a support set of size $m$ uniformly at random and assigned values $\pm1$ with equal probability to entries in the support set.
We generated the matrix $[\G \ \U_\new]$ by orthonormalizing an $n_1\times (r_0+ r_\extra + r_\new)$ matrix with entries i.i.d. Gaussian $\mathcal{N}(0,1/n_1)$; we set $\U_0$ as the first $r_0$ columns of this matrix, $\G_\extra$ as the next $r_\extra$ columns and $\U_\new$ as the last $r_\new$ columns. Then, we set $\G = [\U_0, \ \G_\extra]$. This matrix has $r_G = r_0 + r_\extra$ columns. We generated a matrix $\Y_1$ of size $r_G\times d$ and a matrix $\Y_2$ of size $(r_0 + r_\new)\times n_2$ with entries i.i.d. $\mathcal{N}(0,1/n_1)$.
We set $\M_G=\G \Y_1$ as training data and $\M =[\U_0 \ \U_\new]\Y_2 + \Sb$. The matrix $\M_G$ is $n_1 \times d$ and the $\M$ is $n_1 \times n_2$. We computed $\G$ as the left singular vectors with nonzero singular values of $\M_G$ and this was used as the partial subspace knowledge for modified-PCP.

For modified-PCP, we solved (\ref{eq:sdp}) with $\M$ and $\G$ using Algorithm \ref{inexact_alm}. For PCP, we solved (\ref{eq:pcp}) with $\M$ using the Inexact Augmented Lagrangian Multiplier algorithm from \cite{alm}. This section provides a simulation comparison of what we conclude from the theoretical results. In the theorems, both modified-PCP and PCP use the same matrix $\M$, but modified-PCP is given extra information (partial subspace knowledge). In the first set of simulations, we also compare with PCP when it is also given access to the initial data $\M_G$, i.e. we also solve PCP using $[\M_G\ \M]$. We refer to this as PCP($[\M_G\ \M]$).% PCP($\M$) and

Sparse recovery error is calculated as $\|\Sb-\hat{\Sb}\|_F^2/\|\Sb\|_F^2$ averaged over 100 Monte Carlo trials. For the simulated data, we also compute the smallest value of $\rho_r$ required to satisfy the sufficient conditions -- (\ref{eq:PU}), (\ref{eq:PV}), (\ref{eq:UV}) for mod-PCP and (\ref{eq:PU2}), (\ref{eq:PV2}), (\ref{eq:UV2}) for PCP. We denote the respective values of $\rho_r$ by $\rho_r([\G\ \U_\new])$, $\rho_r(\V_\new)$, $\rho_r(\U_\new \V_\new)$, $\rho_r(\U)$, $\rho_r(\V)$ and $\rho_r(\U\V)$.
Also,
$$\rho_r(\text{mod-PCP})=\max\{\rho_r([\G\ \U_\new]), \rho_r(\V_\new), \rho_r(\U_\new \V_\new)\}$$ and
$$\rho_r(\text{PCP})=\max\{\rho_r(\U), \rho_r(\V), \rho_r(\U\V)\}.$$ %$\rho_r(\text{mod-PCP})$, $\rho_r(\text{PCP})$ are defined as

In Fig. \ref{rextra2}, we show comparisons with increasing number of extra directions $r_\extra$. We used $n_1=200$, $d=200$, $n_2=120$, $m=0.075n_1n_2$, $r=20$, $r_0=0.9r=18$, $r_\new=0.1r=2$ and $r_\extra$ ranging from $0$ to $n_2-r=100$. As we can see from Fig. \ref{spar_rextra2}, for $r_\extra < 60$, mod-PCP performs better than PCP with or without training data $\M_G$.
Fig. \ref{rhor_rextra2} shows that mod-PCP allows a larger value of $\rho_r$ (needs weaker assumptions) than PCP. %To explore how large $r_\extra$ needs to be before the mod-PCP constraint becomes harder to satisfy, we did the following experiment with larger $\frac{r_\extra}{r_0}$ and $\frac{r_\extra}{r_\new}$.  We fixed $r_0=r_\new=1$, $n_1=n_2=d=200$, $m=0.075n_1n_2$ and increased $r_\extra$ from $5$ to $120$. The results are in Fig. \ref{rextra1}. %Fig. \ref{rhor_rextra1} shows mod-PCP needs looser constraint when $r_\extra$ is less than $80$ and Fig. \ref{spar_rextra1} shows mod-PCP performs worse than PCP when $r_\extra$ larger than $45$.
Notice that the recovery error of PCP($[\M_G\ \M]$) is larger than that of PCP($\M$). This is because the rank of $[\M_G \ \M]$ is larger than that of $\M$ because of the extra directions. In the rest of the simulations, we only compare with PCP($\M$).

\begin{figure}[h!]
\centering
\begin{subfigure}{0.4\textwidth}
\includegraphics[width=\textwidth]{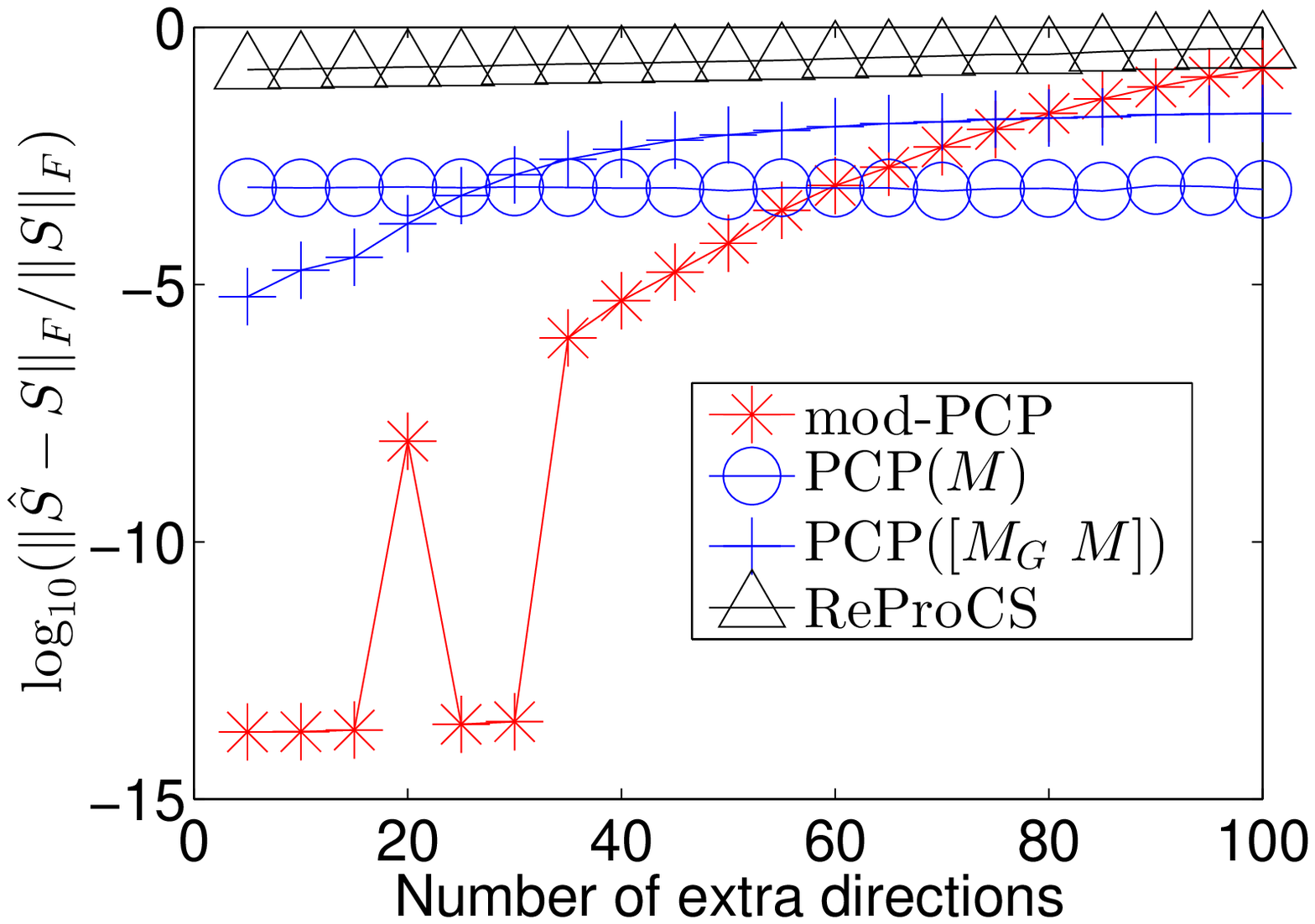}
\caption{Recovery result comparison}
\label{spar_rextra2}
\end{subfigure}
\begin{subfigure}{0.4\textwidth}
\includegraphics[width=\textwidth]{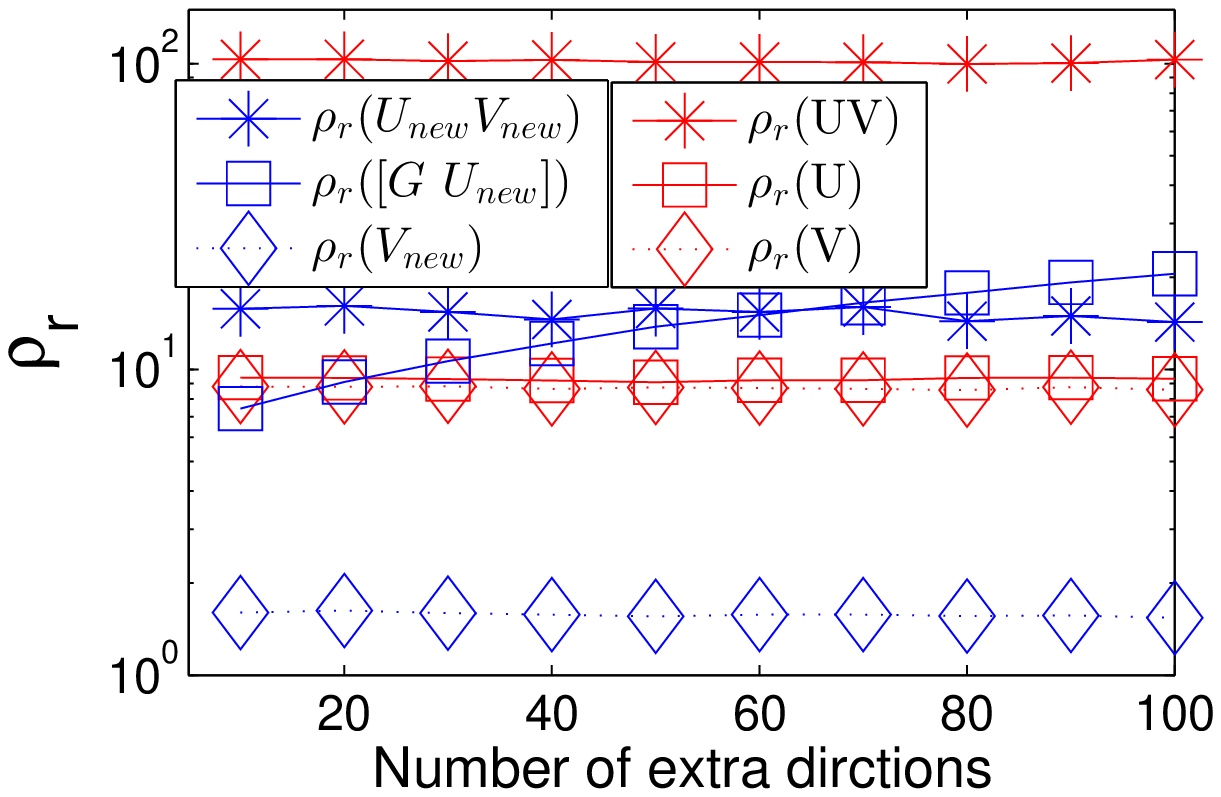}
\caption{Comparing the value of $\rho_r$}
\label{rhor_rextra2}
\end{subfigure}
\vspace{-0.1in}
\caption{\small{Comparison with increasing $r_\extra$ { ($n_1=200$, $d=200$, $n_2=120$, $m=0.075n_1n_2$, $r=20$, $r_0=18$, $r_\new=2$). In (b), we plot the value of $\rho_r$ needed to satisfy
(\ref{eq:PU}), (\ref{eq:PV}), (\ref{eq:UV}) and (\ref{eq:PU2}), (\ref{eq:PV2}), (\ref{eq:UV2}). We denote the respective values of $\rho_r$ by $\rho_r([\G\ \U_\new])$, $\rho_r(\V_\new)$, $\rho_r(\U_\new \V_\new)$, $\rho_r(\U)$, $\rho_r(\V)$ and $\rho_r(\U\V)$. Notice that $\rho_r(\U\V)$ is the largest, i.e. (\ref{eq:UV2}) is the hardest to satisfy.
Notice also that $\rho_r(\text{mod-PCP})=\max\{\rho_r([\G\ \U_\new]), \rho_r(\V_\new), \rho_r(\U_\new \V_\new)\}$ is significantly smaller than $\rho_r(\text{PCP})=\max\{\rho_r(\U), \rho_r(\V), \rho_r(\U\V)\}$. %even when $r_\extra$ is as large as $2r$.
%the value of $\rho_r$ needed to satisfy the three assumptions needed by modified-PCP and by PCP.
}}
%?? add back the plots for the 3 $\rho_r$ values -- see note in Section III
}
\vspace{-0.2in}
\label{rextra2}
\end{figure}

In Fig. \ref{rnew}, we show comparisons with increasing number of new directions $r_\new$ (or equivalently decreasing $r_0 = r-r_\new$). We used $n_1=200$, $d=200$, $n_2=120$, $m=0.075n_1n_2$, $r=30$, $r_\extra=5$ and $r_\new$ ranging from $1$ to $20$ (thus $r_0$ ranges from 29 to 10). As we can see, mod-PCP performs better than PCP.
%{\color{blue} }
%Fig. \ref{rhor_rnew} shows that mod-PCP needs looser constraint than PCP for $r_\new < 18$.}

\begin{figure}[h!]
\centering
%\begin{subfigure}{0.45\textwidth}
\includegraphics[width=0.45\textwidth]{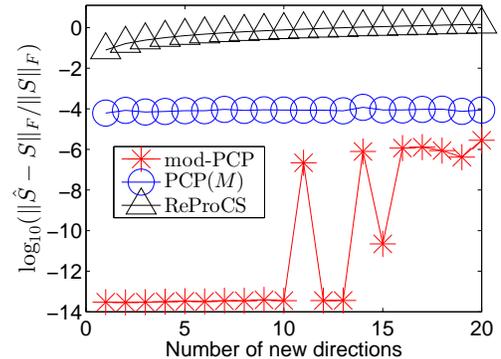} %,height=0.6\textwidth
%\caption{Recovery result comparison}
%\label{spar_rnew}
%\end{subfigure}
%%
%\begin{subfigure}{0.4\textwidth}
%\includegraphics[width=\textwidth]{images/rhor_rnew.eps}
%\caption{Constraint comparison}
%\label{rhor_rnew}
%\end{subfigure}
%\vspace{-0.1in}
\caption{\small{Comparison with increasing $r_\new$  ($n_1=200$, $d=200$, $n_2=120$, $m=0.075n_1n_2$, $r=30$, $r_\extra=5$)}.}%Remove PCP(MG M)??
\vspace{-0.2in}
\label{rnew}
\end{figure}

%To make the constraint comparison more clear, from now on, we plot $\rho_r(\text{mod-PCP})$ and $\rho_r(\text{PCP})$ only instead of $\rho_r([G\ \U_\new])$, $\rho_r(V_\new)$, $\rho_r(\U_\new V_\new)$, $\rho_r(\U)$, $\rho_r(V)$ and $\rho_r(\UV)$.

In Fig \ref{varc}, we show a comparison for increasing number of columns $n_2$. For this figure, we used $n_1=200, d=60,$ $r_G=r_0=18$, $r_\new = 2, m=0.075n_1n_2$, and $n_2$ ranging from 40 to 200. Notice that this is the situation where $n_2 \leq n_1$ so that $n_{(2)} = n_2$ and $n_{(1)} = n_1$. This situation typically occurs for time series applications, where one would like to use fewer columns to still get exact/accurate recovery. We compare mod-PCP and PCP. As we can see from Fig. \ref{spar_varc}, PCP needs many more columns than mod-PCP for exact recovery. Here we say exact recovery when $\|\Sb-\hat{\Sb}\|_F^2/\|\Sb\|_F^2$ is less than $10^{-6}$. Fig. \ref{rhor_varc} is the corresponding comparison of $\rho_r(\text{mod-PCP})$ and $\rho_r(\text{PCP})$ for this dataset and the conclusion is similar.

\begin{figure}[h!]
\centering
\begin{subfigure}{0.4\textwidth}
\includegraphics[width=\textwidth]{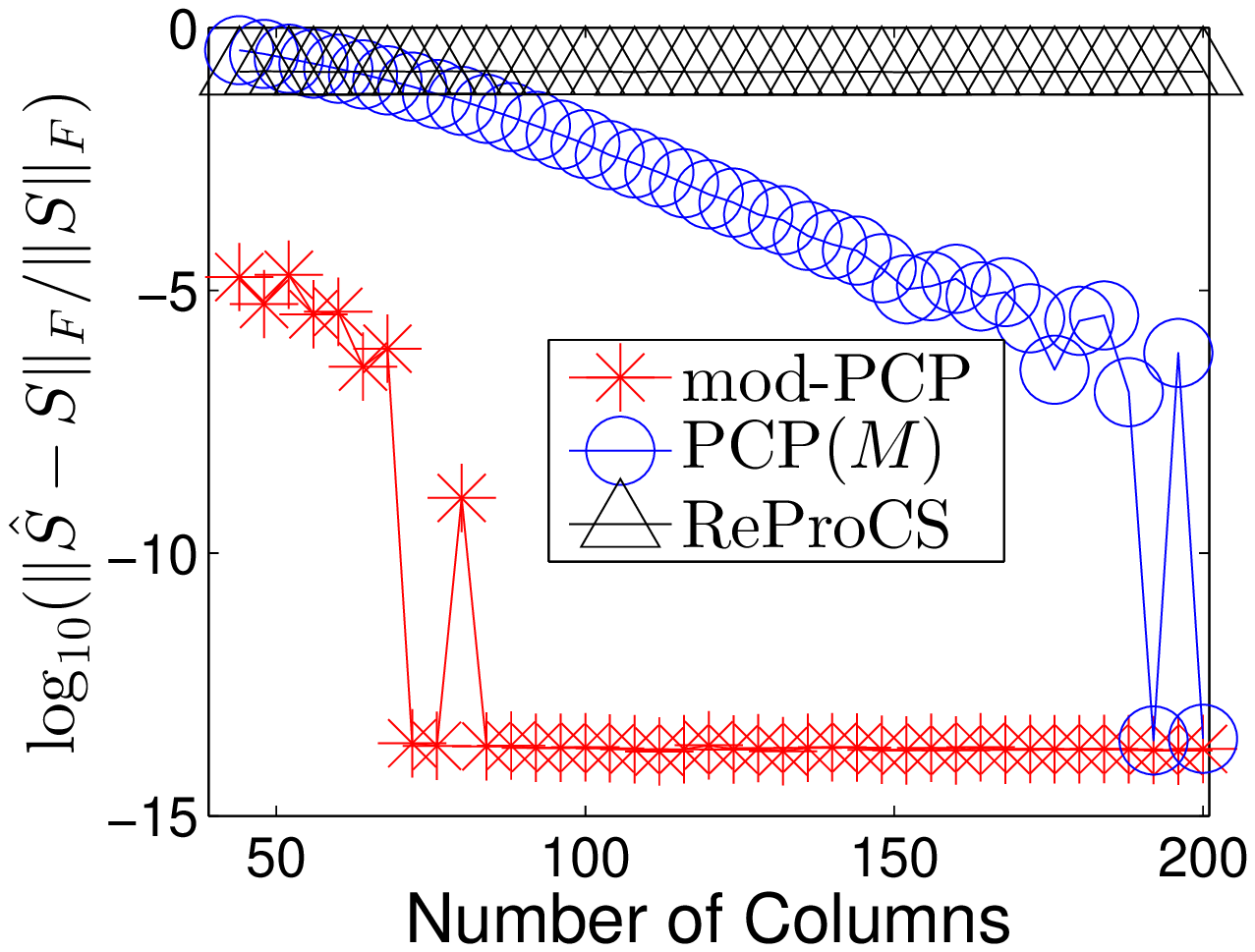}
\caption{Recovery result comparison}
\label{spar_varc}
\end{subfigure}
~
\begin{subfigure}{0.35\textwidth}
\centering
\includegraphics[width=\textwidth]{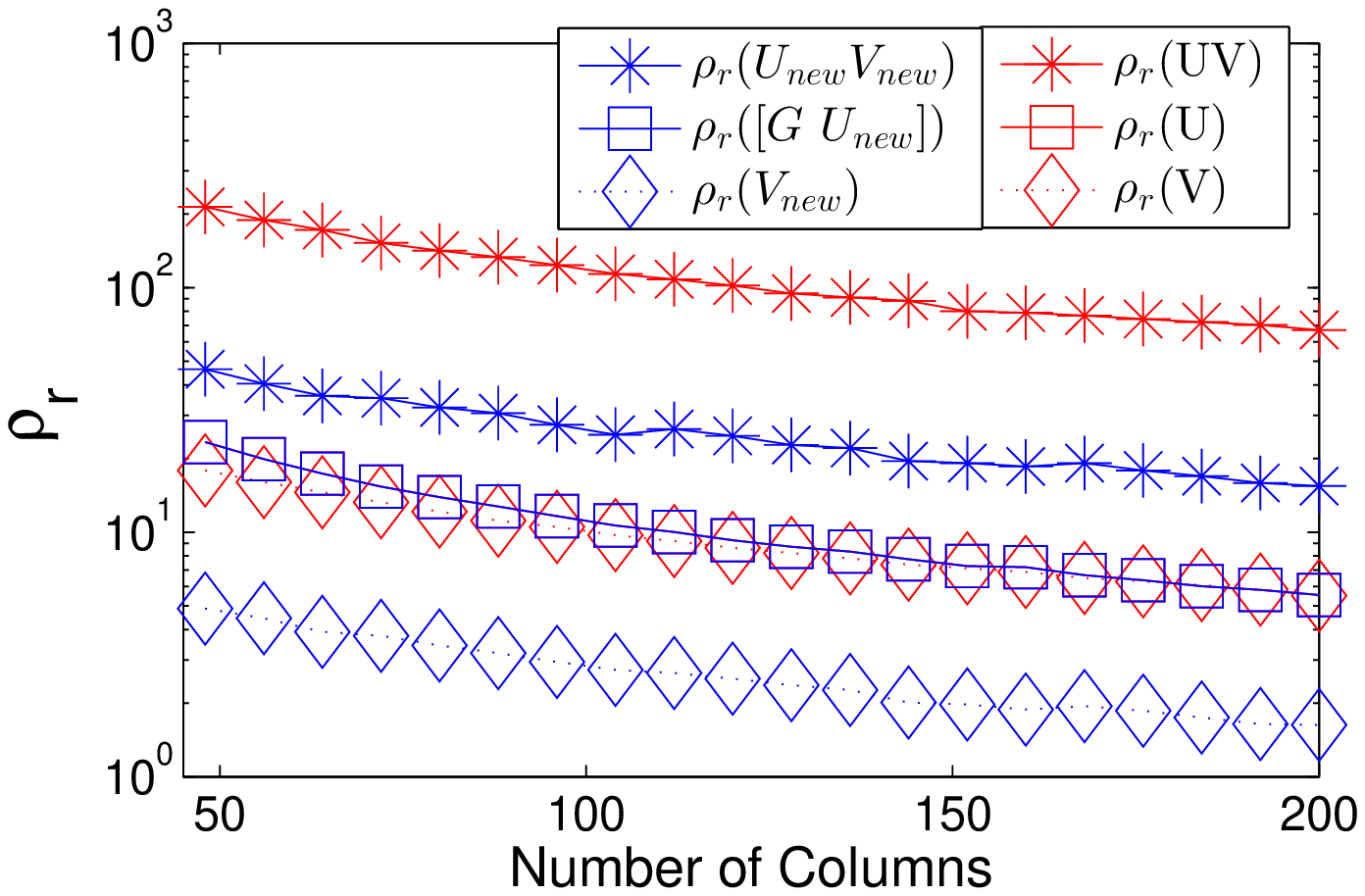}
\caption{Comparing the value of $\rho_r$}
\label{rhor_varc}
\end{subfigure}
\vspace{-0.1in}
\caption{\small{Comparison with increasing $n_2$ {($n_1=200, d=60,$ $r_G=r_0=18$, $r_\new = 2, m=0.075n_1n_2$).}}
%?? add back the plots for the 3 $\rho_r$ values -- see note in Section III
}
\vspace{-0.2in}
\label{varc}
\end{figure}

Finally we generated phase transition plots similar to those for PCP in \cite{rpca}. We used the approach outlined in \cite{rpca} to generate $\Lb, \Sb$ and $\M$ i.e. we let $n_1=n_2=400$ and $\Lb=\X\Y^*$, where $\X$ and $\Y$ are independent $n_1\times r$ i.i.d. $\mathcal{N}(0,1/n_1)$ matrix and independent $n_2 \times r$ i.i.d. $\mathcal(0,1/n_2)$ matrices respectively. The support $\Omega$ of $\Sb$ is of size $m$ and uniformly distributed and for $(i,j)\in \Omega$, $\Pb(\Sb_{ij}=1)=\Pb(\Sb_{ij}=-1)=1/2$. For mod-PCP, we used $r_\new =\lfloor 0.15 r \rfloor$, $r_{\extra}=\lfloor0.15r\rfloor$ and we generated $\G$ as follows. We let $\U_0$ be the first $(r-r_\new)$ columns of the orthonormalized $\X$, and we generated $\G_\extra$ as the first $r_\extra$ columns of the orthonomalized $(\Ib-\U\U^*)\X_1$. Here $\U$ is the matrix of left singular vectors of $\Lb$ and $\X_1$ is a $n_1\times 2r_\extra$ i.i.d. $\mathcal{N}(0,1/n_1)$ matrix. We set $\G = [\U_0, \ \G_\extra]$.

To show the advantages of mod-PCP with less columns, we also did a comparison with the same parameters above but with $n_1=400, n_2=200$.
%To get the plots, the simulations were repeated 10 times as also done in \cite{rpca}.
Fig. \ref{transition_rnew_015_rextra_015} shows the fraction of correct recoveries across 10 trials (as was also done in \cite{rpca}). Recoveries are considered correct if $\|\hat{L}-L\|_F/\|L\|_F\leq 10^{-3}$. %?? what are you plotting is it probability of correct recovery - how do you define correct recovery ??
As we can see from Fig. \ref{transition_rnew_015_rextra_015}, mod-PCP is always better than PCP since $r_\new$ and $r_\extra$ are small. But the difference is much more significant when $n_2 = n_1/2$ than when $n_2 = n_1$.
%$r_\new$ and $r_\extra$ are both small, mod-PCP will work with much larger $r$ than PCP with the same other parameters, and mod-PCP works a little better than PCP even with $r_\new$ as large as $0.75r$ when $r_\extra$ is small
%{\color{blue}}
\begin{figure}[h!]
\centering
\begin{subfigure}[b]{0.2\textwidth}
\includegraphics[width=\textwidth]{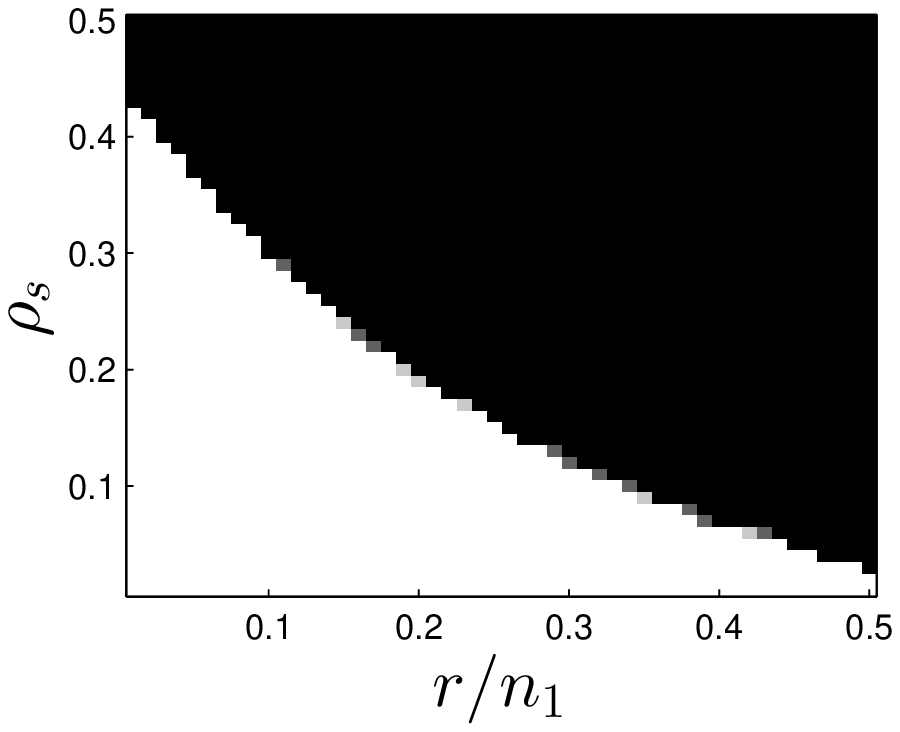}
\caption{mod-PCP,$n_2=400$}
\label{transition_n2_400_modpcp}
\end{subfigure}
~
\begin{subfigure}[b]{0.2\textwidth}
\includegraphics[width=\textwidth]{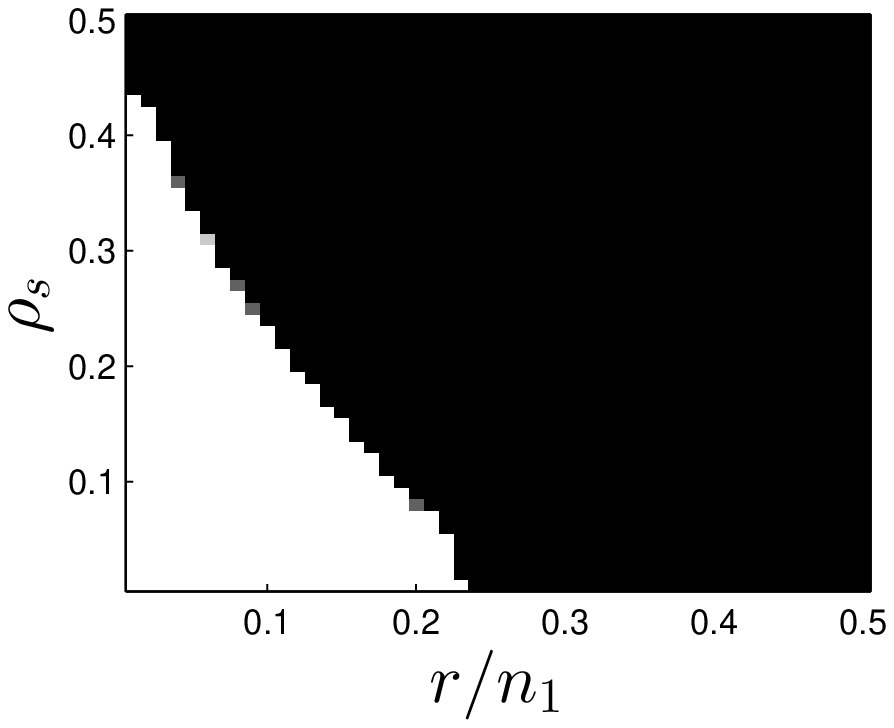}
\caption{PCP, $n_2=400$}
\label{transition_n2_400_pcp}
\end{subfigure}
~
\begin{subfigure}[b]{0.2\textwidth}
\includegraphics[width=\textwidth]{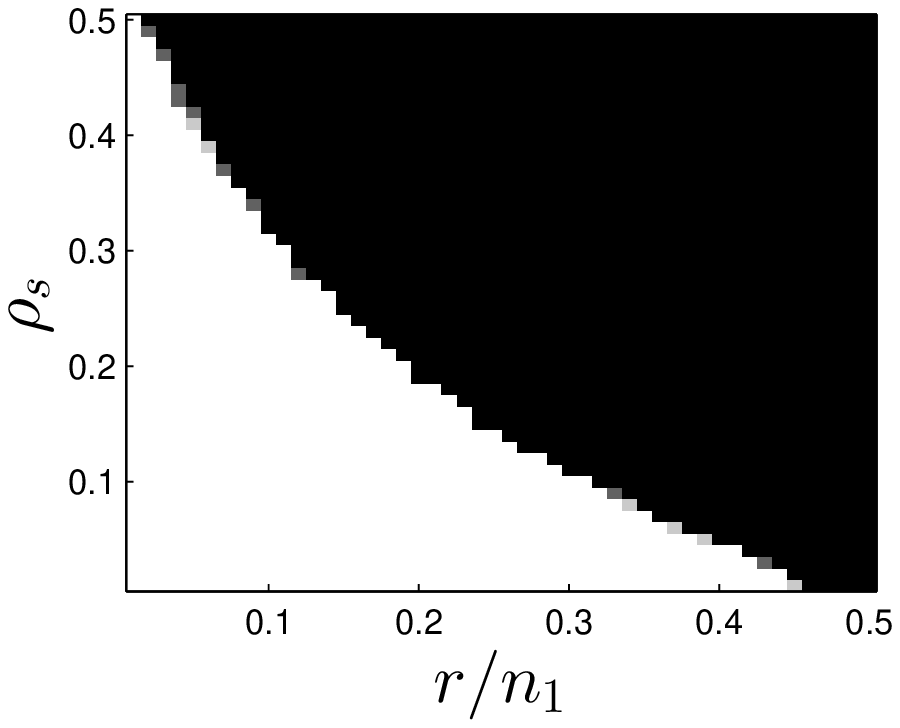}
\caption{mod-PCP, $n_2=200$}
\label{transition_n2_200_modpcp}
\end{subfigure}
~
\begin{subfigure}[b]{0.2\textwidth}
\includegraphics[width=\textwidth]{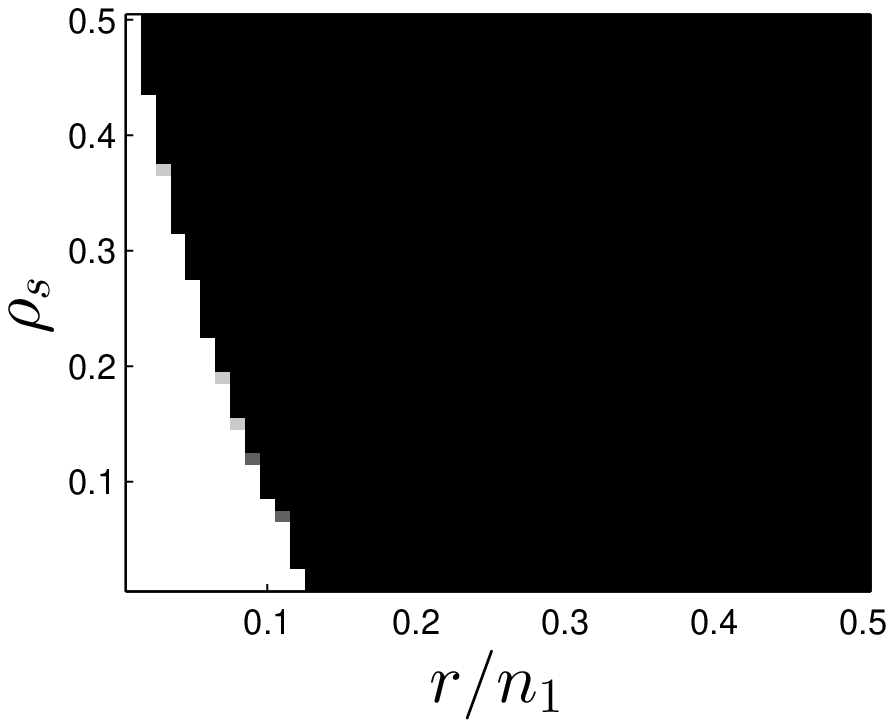}
\caption{PCP, $n_2=200$}
\label{transition_n2_200_pcp}
\end{subfigure}
\caption{Phase transition plots with $r_\new =\lfloor 0.15 r \rfloor$, $r_{\extra}=\lfloor0.15r\rfloor$, $n_1=400$}
\label{transition_rnew_015_rextra_015}
\vspace{-0.2in}
\end{figure}

\subsection{Real data (face reconstruction application)}
% The real data comparisons are for a face reconstruction / recognition application in the presence of outliers, e.g. eye-glasses or occlusions, that is also discussed in \cite{rpca}. %In this case, outlier-free training data consisting of face images taken under a few illumination conditions is used to obtain a partial subspace estimate. The test data consists of face images under different lighting conditions and with outliers.% for the test data matrix

As stated in \cite{rpca}, robust PCA is useful in face recognition to remove sparse outliers, like cast shadows, specularities or eyeglasses, from a sequence of images of the same face. As explained there, without outliers, face images arranged as columns of a matrix are known to form an approximately low-rank matrix. Here we use the images from the Yale Face Database \cite{yaleface} that is also used in \cite{rpca}. Outlier-free training data consisting of face images taken under a few illumination conditions, but all without eyeglasses, is used to obtain a partial subspace estimate. The test data consists of face images under different lighting conditions and with eyeglasses or other outliers.
For test data, the goal is to reconstruct a clear face image with the cast shadows, eyeglasses or other outliers removed. Thus, the clear face image should be a column of the estimated low-rank matrix while the cast shadows or eyeglasses should be a column of the sparse matrix.% (outliers' matrix).

Each image is of size $243\times 320$, which we reduce to $122\times 160$. All images are re-arranged as long vectors and a mean image is subtracted from each of them. The mean image is computed as the empirical mean of all images in the training data.  For the training data, $\M_G$, we use images of subjects with no glasses, which is 12 subjects out of 15 subjects. We keep four face images per subject -- taken with center-light, right-light, left-light, and normal-light --  for each of these 12 subjects. Thus the training data matrix $\M_G$ is ${19520\times48}$. We compute $\G$ by keeping its left singular vectors corresponding to $99\%$ energy. This results in $r_G = 38$.
% as its left $\hat{r}_0$ singular vectors, where $\hat{r}_0=\max_i \frac{\hat{\lambda}_i-\hat{\lambda}_{i+1}}{\hat{\lambda}_i}$. % with singular values above ?? what threshold and how did you pick it ?
%In this case, $M_G$ is full rank before subtracting their mean. Either keeping the largest eigenvalues taking up to $99.99\%$ energy or  will provide correct $r_0$.
We use another two face images per subject for each of the twelve subjects, some with glasses and some without, as the test data, i.e. the measurement matrix $\M$. Thus $\M$ is ${19520\times24}$. %=I-[\bar{\ell}, \bar{\ell}, \cdots, \bar{\ell}]

%{\color{blue}}
%?? what is DEC - also cite the paper mentioned in the Reviewers' response, say how you implement DEC ??

In the experiments, we compare modified-PCP with PCP \cite{rpca} and ReProCS \cite{rrpcp_allerton,rrpcp_tsp} and also with some of the other algorithms compared in \cite{rrpcp_tsp}: robust subspace learning (RSL) \cite{Torre03aframework}, which is a batch robust PCA algorithm that was compared against in \cite{rpca}, and GRASTA \cite{grass_undersampled}, which is a very recent online robust PCA algorithm. We also compare against Dense Error Correction (DEC) \cite{error_correction_PCP_l1,wright2009robust} since this first addressed this application using $\ell_1$ minimization.
To implement Dense Error Correction (DEC) \cite{error_correction_PCP_l1,wright2009robust}, we normalize each column of $\M_G$ to get the dictionary $(\mathbf{D})_{n_1\times 48}$, and we solve
$$(\hat{\x}_i,\hat{\s}_i)=\arg \min_{\tilde{\x},\tilde{\s}} \|\tilde{\x}\|_1+\|\tilde{\s}\|_1 \text{ subject to } \M_i=\mathbf{D}\tilde{\x} + \tilde{\s}$$
using YALL-1. Here $\M_i$ is the $i$th column of $\M$. The solution gives us $\hat{\s}_i$ and $\hat{\lb}_i=\mathbf{D}\hat{\x}_i$.

%?? edit this - use only $M$ for PCP and RSL.
For PCP and RSL, we use the test dataset only, i.e., $\M$, which is a ${19520\times24}$ matrix, as the measurement matrix. DEC, ReProCS and GRASTA are provided the same partial knowledge that mod-PCP gets.
Fig. \ref{realc} shows 3 cases where mod-PCP successfully removes the glasses into $(\hat{S})_i$ and gives the clearest estimate of the person's face without glasses as $(\hat{L})_i$. In the total 24 test frames, both mod-PCP and DEC remove the glasses (for those having glasses) or remove nothing (for those not having glasses) correctly in 14 of them, but the result of DEC has extra shadows in the face estimate. The other algorithms succeed for none of the 24 frames. Both ReProCS and GRASTA assume that the initial subspace estimate is accurate and ``slow subspace change" holds, neither of which happen here and this is the reason that neither of them work. RSL does not converge for this data set because the available number of frames is too small. The time taken by each algorithm is shown in Table \ref{speed}.

\begin{figure*}[h!]
\centering
\includegraphics[width=1.05\textwidth]{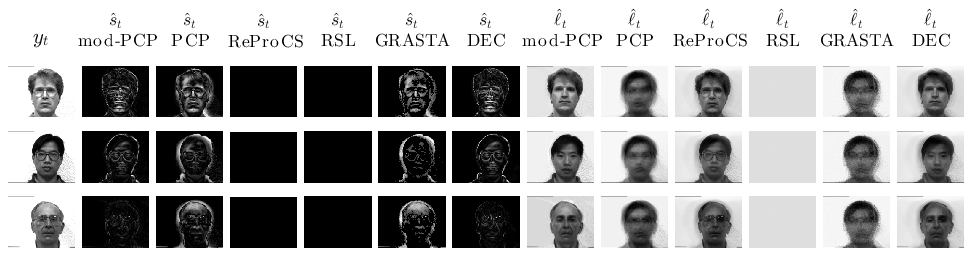}%\ \includegraphics[width=0.4\textwidth]{images/yaleface4.eps}
\caption{Yale Face Image result comparison}%{\small (From the first row to the last, the algorithms are: RSL \cite{Torre03aframework}, ReProCS, GRASTA \cite{he2012incremental}, PCP, mod-PCP.)}}
%\vspace{-0.2in}
\label{realc}
\end{figure*}

\begin{table*}[h!]
		\centering
		\begin{tabular}{c|c|c|| c| c| c| c| c| c|c}
		\hline\hline
		DataSet & Image Size  & Sequence Length & mod-PCP&PCP&ReProCS&GRASTA&RSL&DEC&GOSUS \cite{xu2013gosus}\\
        \hline
        Yale Face & $122\times 160$ & 48 + 24 & 2.7 sec&9.8 sec&0.5 sec&50.2 sec&141.7 sec&21.3 sec\\
		\hline
		Lake & $72\times 90$ & $1420+80$ &  2.2 sec&1.7 sec&9.3 sec&338.7 sec&26.7 sec& \\
		\hline
        Fig. \ref{online_modpcp}&$256\times 1$&200+2400&2.7 sec&6.2 sec&12.0 sec&5.7 sec&25.4 sec&&576.9 sec\\
		\hline
		Fig. \ref{online_modpcp_reprocs}&$256\times 1$&200+8000&9.7 sec&18.9 sec&24.8 sec&12.6 sec&67.7 sec&&1735.6 sec\\
        \hline
        Fig. \ref{online_reprocs}&$256\times 1$&200+8000&13.1 sec&18.7 sec&26.1 sec&12.7 sec&74.8 sec&&1972.5 sec\\
		\hline
		\end{tabular}
		\caption{\small Speed comparison of different algorithms. { (Sequence length refers to the length of sequence for training plus the length of sequence.)} %The length of sequence for training in this column is for ReProCS. For GRASTA, we used 100 frames for training in all cases. A comparison experiment for GRASTA shows that the separation performance is not improved if we used all frames available for training.?edit this\scriptsize
        \label{speed}
}
	\end{table*}

%RSL uses $\s_t + (-\hat{\Pbf}_{t-1}\hat{\Pbf}_{t-1}^*\s_t)+ (\Ib-\hat{\Pbf}_{t-1}\hat{\Pbf}_{t-1}^*)\lb_t$ to detect or soft-detect (and down-weight) the support of $\s_t$ by thresholding, thus it does not work because of the interference of $(-\hat{\Pbf}_{t-1}\hat{\Pbf}_{t-1}^*\s_t)$ especially when support size of $\s_t$ increases and magnitudes of $\s_t$ are small.?? how many of them had eye glasses ??

%{\color{blue}}
\subsection{Online robust PCA: simulated data comparisons}
%This does not satisfy the ``slow subspace change" assumption needed by ReProCS. (this ensures slow subspace change holds)
%To make the comparison complete, here we generate another dataset that satisfies this assumption.  and we also solved mod-PCP with $\G$ every $200$ columns
For simulation comparisons for online robust PCA, we generated data as explained in \cite{reprocs_correctness_result}. The data was generated using the model given in Section \ref{online_robust_pca}, with $n=256$, $J=3$, $r_0=40$, $t_0=200$ and $c_{j,\new}=4$, $c_{j,\old}=4$, for each $j=1,2,3$. The coefficients, $\ab_{t,*}=\Pbf_{j-1}^*\lb_t$ were i.i.d. uniformly distributed in the interval $[-\gamma,\gamma]$; the coefficients along the new directions, $\ab_{t,\new} := \Pbf_{j,\new}^*\lb_t$ generated i.i.d. uniformly distributed in the interval $[-\gamma_\new, \gamma_\new]$  (with a $\gamma_\new \le \gamma$) for the first $1700$ columns after the subspace change and i.i.d. uniformly distributed in the interval $[-\gamma,\gamma]$ after that. We vary the value of $\gamma_\new$; small values mean that ``slow subspace change" required by ReProCS holds. The sparse matrix $\Sb$ was generated in two different ways to simulate uncorrelated and correlated support change.
For partial knowledge, $\G$, we first did SVD decomposition on $[\lb_1, \lb_2, \cdots, \lb_{t_0}]$ and kept the directions corresponding to singular values larger than $\mathbf{E}(z^2)/9$, where $z\thicksim \text{Unif}[-\gamma_\new, \gamma_\new]$. We solved PCP and modified-PCP every $200$ frames by using the observations for the last 200 frames as the matrix $\M$.
The ReProCS algorithm of \cite{rrpcp_perf,reprocs_correctness_result} was implemented with $\alpha=100$. The averaged sparse part errors with three different sets of parameters over 20 Monte Carlo simulations are displayed in Fig. \ref{online_modpcp}, Fig. \ref{online_modpcp_reprocs}, and Fig. \ref{online_reprocs}, and the corresponding averaged time spent for each algorithm is shown in Table \ref{speed}.  For all three figures, we used $t_1 = t_0+6\alpha+1$, $t_2 = t_0+12\alpha+1$ and $t_3 = t_0+18\alpha+1$ and $\gamma = 5$.%, $t_{\max}=2600$

In the first case, Fig. \ref{online_modpcp}, we used $\gamma_\new = \gamma$ and so ``slow subspace change" does not hold. For the sparse vectors $\s_t$, each index is chosen to be in support with probability $0.0781$. The nonzero entries are uniformly distributed between $[20, 60]$. Since ``slow subspace change" does not hold, ReProCS does not work well. Since the support is generated independently over time, this is a good case for both PCP and mod-PCP. Mod-PCP has the smallest sparse recovery error.
In the second case, Fig. \ref{online_modpcp_reprocs}, we used $\gamma_\new = 1$ and thus ``slow subspace change" holds. For sparse vectors, $\s_t$, the support is generated in a correlated fashion. We used support size $s=5$ for each $\s_t$; the support remained constant for 25 columns and then moved down by $s=5$ indices. Once it reached $n$, it rolled back over to index one. Because of the correlated support change, PCP does not work. In this case, both mod-PCP and ReProCS work but PCP does not. %but ReProCS has smaller error.
In the third case, Fig. \ref{online_reprocs}, the parameters are the same as in the second case, except that the support size is $s=10$ in each column and it moves down by $s/2 = 5$ indices every 25 columns. In this case, the sparse vectors are much more correlated over time, resulting in sparse matrix $\Sb$ that is even more low rank, thus neither mod-PCP nor PCP work for this data. In this case, only ReProCS works.

Thus from simulations, modified-PCP is able to handle correlated support change better than PCP but worse than ReProCS. Modified-PCP also works when slow subspace change does not hold; this is a situation where ReProCS fails. Of course, modified-PCP, GRASTA and ReProCS are provided the same partial subspace knowledge $\G$ while PCP and RSL do not get this information.

\begin{figure*}[h!]
\centerline{
\begin{subfigure}[b]{0.33\textwidth}
\includegraphics[width=0.9\linewidth]{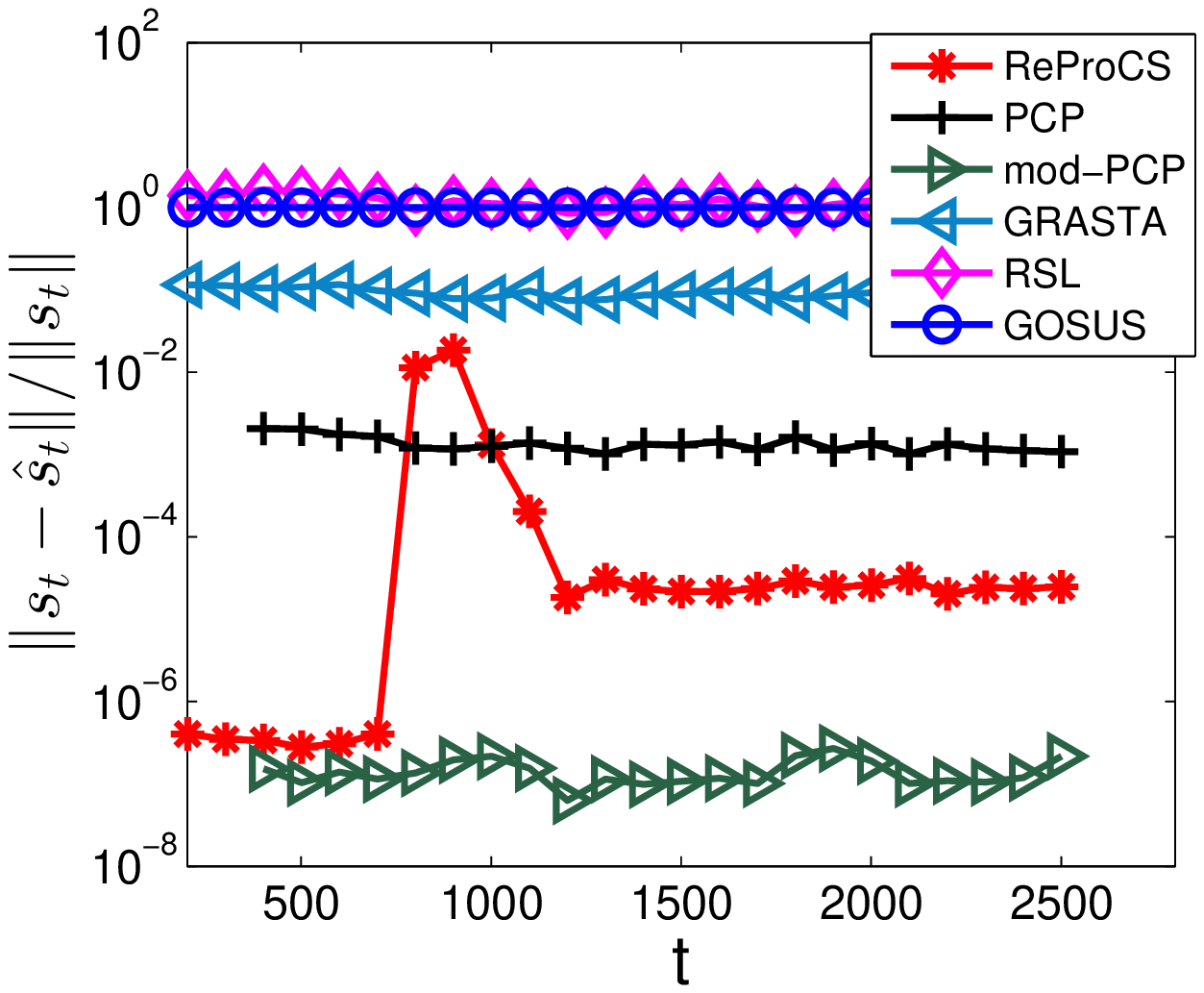}
\caption{Uniform distributed $\s_t$}
\label{online_modpcp}
\end{subfigure}
~
\begin{subfigure}[b]{0.33\textwidth}
\includegraphics[width=0.9\linewidth]{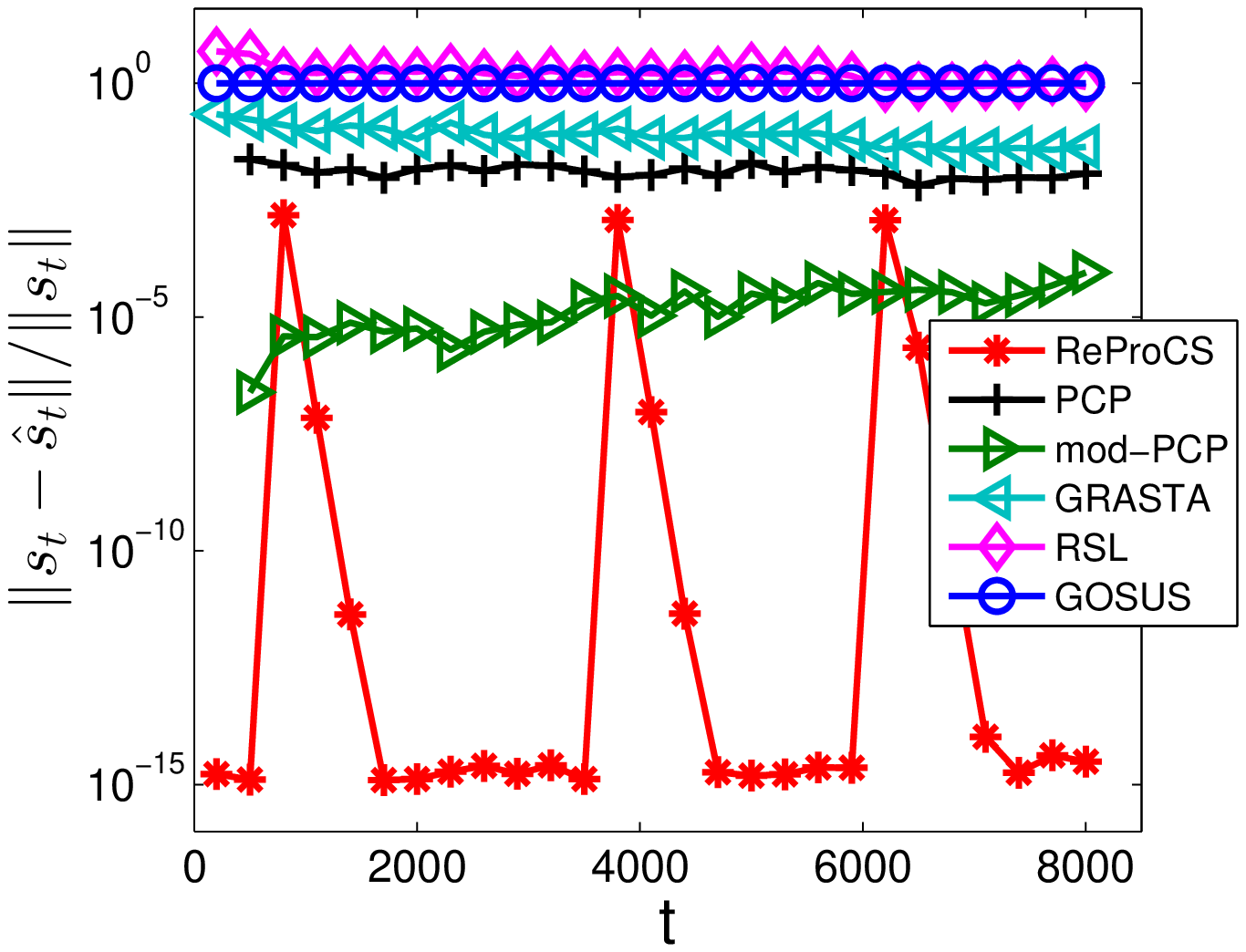}
\caption{Correlated $\s_t$ with small support size}
\label{online_modpcp_reprocs}
\end{subfigure}
~
\begin{subfigure}[b]{0.33\textwidth}
\includegraphics[width=0.9\linewidth]{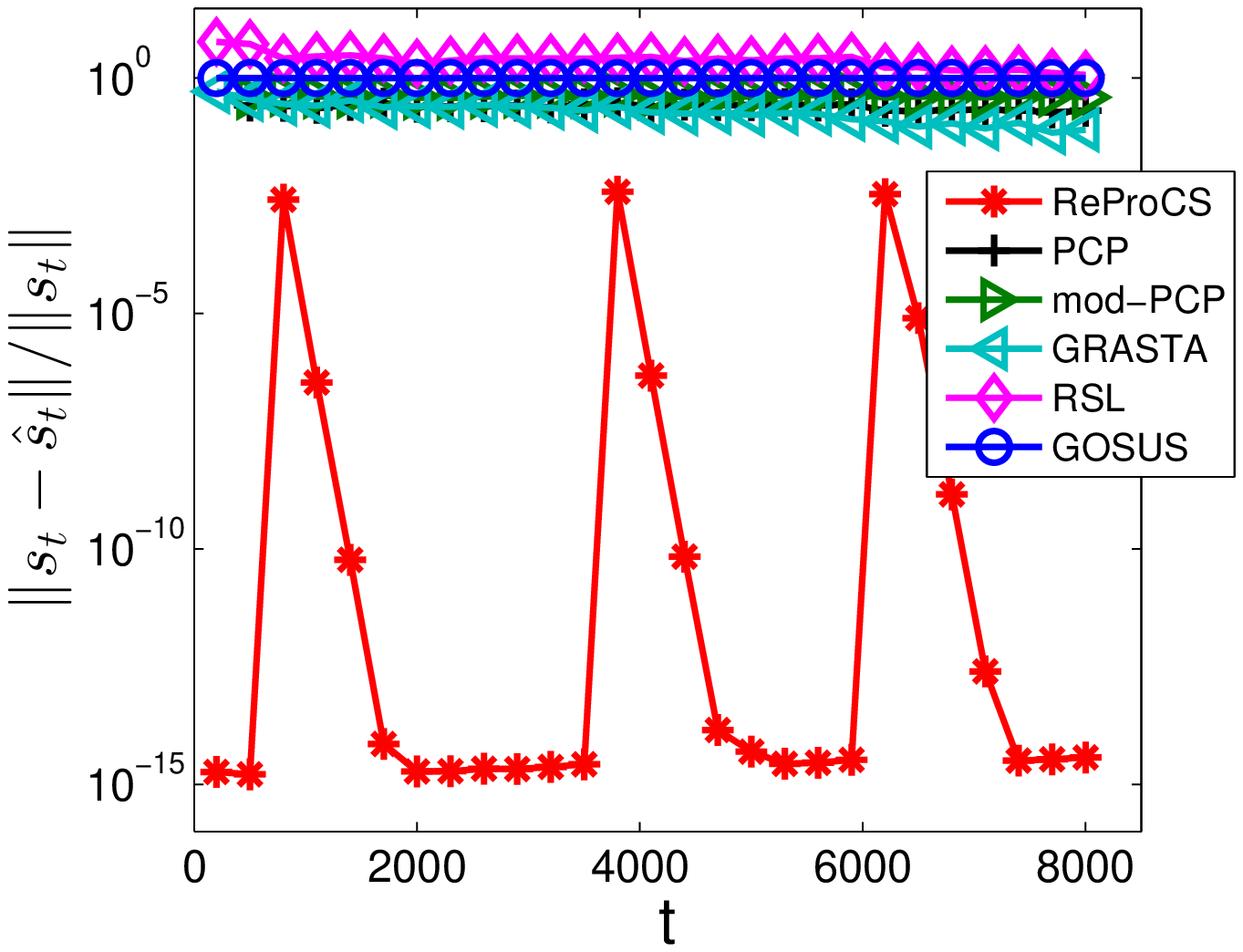}
\caption{Correlated $\s_t$ with large support size}
\label{online_reprocs}
\end{subfigure}
}
\caption{\small NRMSE of sparse part comparison with online model { ($n=256$, $J=3$, $r_0=40$, $t_0=200$, $c_{j,\new}=4$, $c_{j,\old}=4$, $j=1,2,3$)}
%?? Markersize needs to be much bigger. DO NOT USE GREEN COLOR
}
\label{online}
\vspace{-0.1in}
\end{figure*}

\subsection{Online robust PCA: comparisons for video layering} \label{real_online}
The lake sequence is similar to the one used in \cite{rrpcp_tsp}. The background consists of a video of moving lake waters. The foreground is a simulated moving rectangular object.
The sequence is of size $72\times 90 \times 1500$, and we used the first $1420$ frames as training data (after subtracting the empirical mean of the training images), i.e. $\M_G$. The rest 80 frames (after subtracting the same mean image) served as the background $\Lb$ for the test data. For the first frame of test data, we generated a rectangular foreground support with upper left vertex $(1,j_0)$ and lower right vertex $(i_1, 25+j_0)$, where $j_0 \sim \text{Unif}[1,30]$ and $i_1 \sim \text{Unif}[7,16]$, and the foreground moves to the right 1 column each time. Then we stacked each image as a long vector $\lb_t$ of size $6480\times1$. For each index $i$ belonging to the support set of foreground $\s_t$, we assign $(\s_t)_i = 185 - (\lb_t)_i$. We set $\M = \Lb + \Sb$.
For mod-PCP, ReProCS and GRASTA, we used the approach used in \cite{rrpcp_tsp} to estimate the initial background subspace (partial knowledge): do SVD on $\M_G$ and keep the left singular vectors corresponding to $95\%$ energy as the matrix $\G$. A few recovered frames are shown in Fig. \ref{lake}, and the averaged normalized mean squared error (NMSE) of the sparse part over $50$ Monte Carlo realizations is shown in Fig. \ref{lake_nmse}. The averaged time spent for each algorithm is shown in Table \ref{speed}. As can be seen, in this case, both mod-PCP and ReProCS perform almost equally well, with ReProCS being slightly better.
%with $\Lb = [\lb_{1421}, \lb_{1422}, \dots \lb_{1500}]$ and $\Sb:=[\s_{1421}, \s_{1422}, \dots \s_{1500}]$
%\scalebox{0.74}{
%\begin{tabular}{|l| c| c| c| c| c| c| c|}
%\hline
%&Mod-PCP&PCP&ReProCS&GRASTA&RSL\\
%\hline
%t(second)&9.6422&1.8890&17.5318&359.8147&369.3871\\
%\hline
%\end{tabular}
%}

%\subsection{$\rho_r$ comparison}
Next we compute the value of $\rho_r$ for the lake video sequence. %?? edit we first make training data $[\M_G]$ low rank by keeping the largest components corresponding to $99\%$ energy and make $\Lb$ low rank with the same procedure, so we have low rank matrices $\tilde{\M}_G$ and $\tilde{\Lb}$. Then
We calculated prior knowledge $\G$ as explained above.  We calculated the singular vectors $\U, \V$ by doing SVD decomposition on ${\Lb}$ and keeping all the directions with corresponding singular values larger than $10^{-10}$ (we choose $10^{-10}$ because it is the precision that MATLAB can achieve for SVD decomposition); calculate $\U_{\new}, \V_{\new}$ by doing SVD decomposition of $(I-\G\G^*){\Lb}$ and keeping all the directions with singular values larger than $10^{-10}$. With this, we get $\rho_r(\text{PCP})= 1.8584\times10^4$ and $\rho_r(\text{mod-PCP})= 1.7785\times10^4$.

We also calculate $\rho_r$ for fountain02 sequence, which can be found on http://changedetection.net/. The image size is $288\times432$, and we resize it to $96\times144$. For the first 600 background images we form a low rank matrix $[\M_G\ \Lb]$ by stacking each image as a column (the first 300 columns belong to $\M_G$ and the rest belong to $\Lb$). With the same steps for lake sequence, we get $\rho_r$(PCP) is $4.311\times10^4$ and $\rho_r$(mod-PCP) is $1.7866\times10^4$.

\begin{figure*}[h!]
\centering
\includegraphics[width=1.03\linewidth]{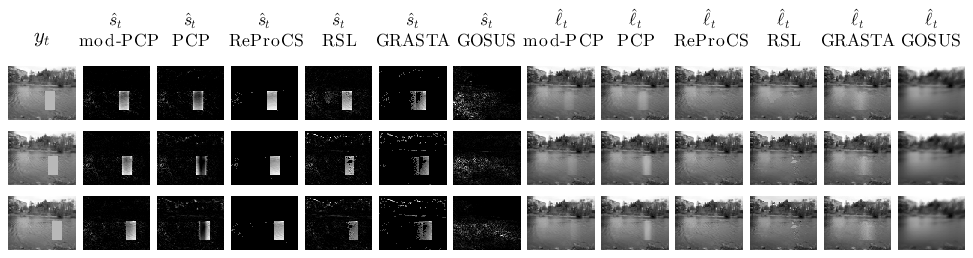}
\caption{\small{Lake sequence result comparison (columns $60,69,79$ are shown here. Note that in the last 2 rows, clearly there is missing part in $\s_t$ and corresponding extra part in $\lb_t$ the back detected by RSL)}.% ?? re-arrange the recovered frames in the correct order.
}
\label{lake}
\vspace{-0.1in}
\end{figure*}

\begin{figure}[h!]
\centering
\includegraphics[width=0.9\linewidth]{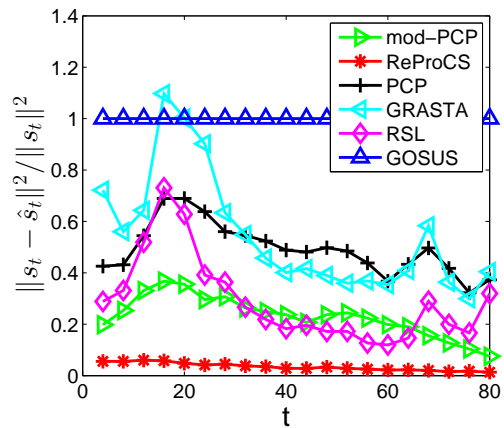}
\caption{\small{Lake sequence NMSE comparison.} %?? Markersize needs to be much bigger. DO NOT USE GREEN COLOR
}
\label{lake_nmse}
\vspace{-0.2in}
\end{figure}

\subsection{Comparison with Simulated Noisy Data}
In order to address an anonymous reviewer's comment, we have also added simulations with noisy data. We assume the measurement model
\begin{equation}
\M=\Lb+\Sb+\Z
\end{equation}
where $\Lb$ is low rank (with partial knowledge $\G$ similar to previous case), $\Sb$ is sparse and $\Z$ is a noise term with $\|\Z\|_F \leq \sigma$. Inspired by \cite{stablepcp}, we propose the following optimization problem to solve the problem:
\begin{equation}
\label{eq:sdp_noisy}
  \begin{array}{ll}
    \text{minimize}_{\tL_\new,\tS, \tX}   & \quad \|\tL_\new\|_* + \lambda \|\tS\|_1\\
    \text{subject to} & \quad \|\tL_\new + \G\tX^* + \tS -  \M \|_F \leq \sigma
  \end{array}
\end{equation}
with $\lambda = \sqrt{\max\{n_1, n_2\}}$.
To compare the result with stable PCP \cite{stablepcp}, we generated square matrices as stated in \cite[Section V]{stablepcp}, i.e., $n_1=n_2=200$, $r=10$, $r_\new=2$, $r_\extra=0$, $\rho_s=0.2$, $\Lb=\X\Y^*$ where $\X$ and $\Y$ are independent $n_1\times r$ i.i.d. $\mathcal{N}(0,1/n_1)$ matrices, and each entry of $\Sb$ is independently distributed, taking value $0$ with probability $1-\rho_s$ and uniformly distributed in $[-5,5]$ with probability $\rho_s$. We used the same suggested $\bar{\tau}$ for the stable mod-PCP solver as in \cite{stablepcp}. By varying $\sigma$ from $0.1$ to $1$, we got recovery errors over $50$ Monte Carlo simulations as shown in Fig. \ref{noisy_var_sigma}.% where oracle method means the least squares solution
%\begin{equation}
%\Lb_{\text{oracle}} = (\mathcal{P}_T\POp\mathcal{P}_T)^{-1}\mathcal{P}_T\POp(\M),\ \Sb_{\text{oracle}} = \PO(\M-\Lb_{\text{oracle}})
%\end{equation}
We plot the root-mean-squared (RMS) error which is defined in \cite{stablepcp} as the average of $\|\hat{\Lb}-\Lb\|_F/n$ for the low-rank matrix and of $\|\hat{\Sb}-\Sb\|_F/n$ for the sparse matrix.
%As we can see from Fig. \ref{noisy_var_sigma}, mod-PCP performs better than PCP.% and has similar low rank part errors with oracle method.

\begin{figure}[h!]
\centering
\includegraphics[width=0.8\linewidth]{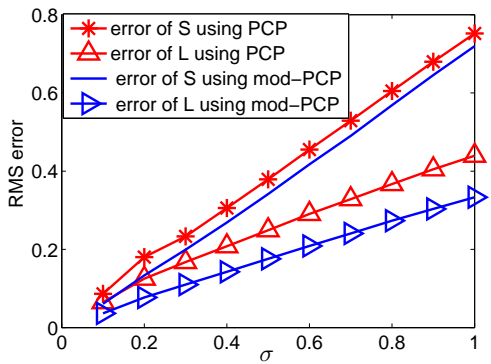}
\caption{\small{Noisy data RMS error comparison.}}
\label{noisy_var_sigma}
\end{figure}

\section{Conclusions}
In this work we studied the following problem. Suppose that we have a partial estimate of the column space of the low rank matrix $\Lb$. How can we use this information to improve the PCP solution? We proposed a simple modification of PCP, called {\em modified-PCP}, that allows us to use this knowledge. We derived its correctness result that allows us to argue that, when the available subspace knowledge is accurate enough, modified-PCP requires significantly weaker incoherence assumptions on the low-rank matrix than PCP. We also obtained a useful corollary (Corollary \ref{cor}) for the online or recursive robust PCA problem. Extensive simulation experiments and some experiments for a real application further illustrate these claims. Ongoing work includes studying the error stability of modified-PCP for online robust PCA. % in the noisy case and designing a fast and possibly fully recursive algorithms for solving modified-PCP.
Future work will include developing a fast and recursive algorithm for solving modified-PCP and using the resulting algorithm for various practical applications. Two applications that will be explored are (a) video layering, e.g. using the BMC dataset of \cite{bouwmans2014robust}, and (b) recommendation system design in the presence of outliers and missing data.
For getting a recursive algorithm, we will explore the use of ideas similar to those introduced in Feng et al's recent work on developing a recursive algorithm that asymptotically approximates the PCP solution \cite{xu_nips2013_2}.% (number of extra directions in it is very large)
%{\cb}

\appendix

\subsection{Derivation for (\ref{eqG})}
\label{probdef_proof}

Recall from Sec \ref{probdef} that $r_\new = \rank(\Lb_\new)$,
\begin{equation}
\Lb_\new=(\Ib-\G\G^*)\Lb  \stackrel{\text{SVD}}{=} \U_\new \Sigmab_\new \V_\new^*
\end{equation}
Let $\U_0$ be a basis matrix for $\Span(\Lb) \cap \Span(\G) = \Span(\U) \cap \Span(\G)$ with $r_0= \rank(\U_0)$
Thus, there exist rotation matrices $\Rb_1, \Rb_G$ and basis matrices $\U_1, \G_\extra$ such that
\begin{equation} \label{defUG1}
\U \Rb_1 = [\U_0 \ \U_1] \ \text{and} \ \G \Rb_G = [\U_0 \ \G_\extra]
\end{equation}
with $\G_\extra{}^* \U_1 = 0$. %Since both $\U \Rb_1$ and $\G \Rb_G$ are basis matrices, thus $\G_\extra{}^* \U_0 = 0$ and $\U_1^* \U_0 = 0$ also holds.

%(\Ib - \U_0 \U_0^*) \Lb =    = (\Ib - \U_0 \U_0^*) [\U_0 \ \U_1] \Rb_1^* \Sigmab \V^*
Clearly, $\rank(\U_1) = r_\new$ \footnote{This follows because $(\Ib- \G \G^*) \Lb = (\Ib - \U_0 \U_0^*) [\U_0 \ \U_1] \Rb_1^{-1} \Sigmab \V^* = [0 \ \U_1] \Rb_1^* \Sigmab \V^*$. Since $\rank([0 \ \U_1]) = \rank(\U_1)$ and all other matrices are full rank $r$, we get that $\rank(\U_1) = \rank(\Lb_\new)=r_\new$. Here we have used Sylvester's inequality on $\Lb_\new = [0 \ \U_1] (\Rb_1^* \Sigmab \V^*)$ to get that $\rank(\U_1) +r - r \le \rank(L_\new)=r_\new \le \min(\rank(\U_1),r)=\rank(\U_1)$.}.
Split the $r\times r$ matrix $\Rb_1$ as $\Rb_1 = [(\Rb_1)_0, \ (\Rb_1)_1]$ so that $(\Rb_1)_0$ contains the first $r_0$ columns and $(\Rb_1)_1$ contains the last $r_\new$ columns. Thus, %Using this, and the fact that $\G_\extra{}^* [\U_0 \ \U_1] = 0$ and $\U_0^* \U_1 = 0$, we get that
$$\Lb_\new = (\Ib - \U_0 \U_0^*) [\U_0 \ \U_1] \Rb_1^* \Sigmab \V^* =  \U_1 (\Rb_1)_1^* \Sigmab \V^*.$$
Let $((\Rb_1)_1^* \Sigmab \V^*) \stackrel{\text{SVD}}{=} \U_2 \Sigmab_2 \V_2^*$ denote its full SVD. Thus $\Lb_\new = \U_1 \U_2 \Sigmab_2 \V_2^*$. Comparing with the SVD of $\Lb_\new$ we get that
$\U_\new = \U_1 \U_2$ where $\U_2$ is a $r_\new \times r_\new$ unitary matrix; $\Sigmab_\new = \Sigmab_2$ and $\V_\new = \V_2$.
%By letting $\Rb_\new: = \U_2^{-1}= \U_2^*$ we get that $\U_1 = \U_\new \Rb_\new$ and so, using (\ref{defUG1}),
Thus,
\begin{eqnarray}
\U \Rb_1 = [\U_0 \ \U_\new \U_2^*] = [\U_0 \ \U_\new ]\left(\Ib\ \ \  \  \zero \atop \zero \ \ \U_2^*\right)
\end{eqnarray}
By taking $\Rb_U =\Rb_1 \left(\Ib\ \ \  \  \zero \atop \zero \ \ \U_2^*\right)^{-1}  = \Rb_1\left(\Ib\ \ \ \ \zero\atop\zero\ \ \U_2\right)$, we get
%(\ref{defUG}).
\begin{equation} \label{defUG}
\U \Rb_U = [\U_0 \ \U_\new] \ \text{and} \ \G \Rb_G = [\U_0 \ \G_\extra]
\end{equation}
Rearranging, we get (\ref{eqG}).

\subsection{Proof of Lemma \ref{unifber}} \label{proofunifber}
First we state and prove the following fact\footnote{This fact may seem intuitively obvious, however we cannot find a simpler proof for it than the one we give.}.
\begin{proposition}
\label{unif}
Assume $m_1 < m_2<n_1n_2$, we have $$\Pb_{\text{Unif}(m_1)}(\text{Success})\geq \Pb_{\text{Unif}(m_2)}(\text{Success}).$$
\end{proposition}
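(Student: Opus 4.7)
\textit{Proof plan.} My plan is to establish monotonicity by a coupling argument. First, I couple $\Omega_2 \sim \Unif(m_2)$ and $\Omega_1 \sim \Unif(m_1)$ on the same probability space so that $\Omega_1 \subseteq \Omega_2$ almost surely. Concretely, draw $\Omega_2 \sim \Unif(m_2)$ and then, conditionally on $\Omega_2$, let $\Omega_1$ be a uniformly random $m_1$-subset of $\Omega_2$. A short symmetry/counting check (each $m_1$-subset $A$ has probability $\binom{n_1n_2-m_1}{m_2-m_1}\binom{n_1n_2}{m_2}^{-1}\binom{m_2}{m_1}^{-1}=\binom{n_1n_2}{m_1}^{-1}$) confirms that marginally $\Omega_1 \sim \Unif(m_1)$. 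Let $\Sb_i := \mathcal{P}_{\Omega_i}\Sb_{\text{full}}$ with $\Sb_{\text{full}}$ the fixed sign matrix, and set $\M_i := \Lb + \Sb_i$; note that $\Sb_2 - \Sb_1 = \mathcal{P}_{\Omega_2\setminus\Omega_1}\Sb_{\text{full}}$ has support in $\Omega_2\setminus\Omega_1$.

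Second, I will show deterministically that if modified-PCP succeeds on $(\M_2,\Omega_2)$, then it succeeds on $(\M_1,\Omega_1)$. Contrapositively, suppose failure on $(\M_1,\Omega_1)$, witnessed by some feasible $(\tL_\new{}',\tS',\tX') \neq (\Lb_\new,\Sb_1,\Lb^*\G)$ with
\[
\|\tL_\new{}'\|_* + \lambda\|\tS'\|_1 \le \|\Lb_\new\|_* + \lambda\|\Sb_1\|_1.
\]
Define $\tS'' := \tS' + (\Sb_2 - \Sb_1)$. Then $\tL_\new{}' + \G(\tX')^* + \tS'' = \M_1 + (\Sb_2-\Sb_1) = \M_2$, so $(\tL_\new{}',\tS'',\tX')$ is feasible for modified-PCP on $\M_2$. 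By subadditivity of $\|\cdot\|_1$,
\[
\|\tS''\|_1 \le \|\tS'\|_1 + \|\Sb_2-\Sb_1\|_1 = \|\tS'\|_1 + \|\Sb_2\|_1 - \|\Sb_1\|_1,
\]
and hence $\|\tL_\new{}'\|_* + \lambda\|\tS''\|_1 \le \|\Lb_\new\|_* + \lambda\|\Sb_2\|_1$. To close the argument I must verify that $(\tL_\new{}',\tS'',\tX')$ differs from the ground truth $(\Lb_\new,\Sb_2,\Lb^*\G)$ of the $\M_2$ problem: if instead $\tL_\new{}'=\Lb_\new$ and $\G(\tX')^* = \G\Lb^*$, then full column rank of $\G$ gives $\tX' = \Lb^*\G$, and the feasibility equation on $\M_1$ then forces $\tS'=\Sb_1$, contradicting our choice. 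Thus $(\tL_\new{}',\tS'',\tX')$ is a feasible point distinct from the ground truth with no larger objective, so uniqueness fails on $(\M_2,\Omega_2)$.

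Third, the above shows the event-inclusion $\{\Succ(\Omega_2)\} \subseteq \{\Succ(\Omega_1)\}$ on the coupled space, so $\Pb_{\text{coup}}(\Succ(\Omega_2)) \le \Pb_{\text{coup}}(\Succ(\Omega_1))$; marginalizing gives $\Pb_{\Unif(m_2)}(\Succ) \le \Pb_{\Unif(m_1)}(\Succ)$, which is the proposition.

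The step I expect to require the most care is the deterministic monotonicity, and in particular the verification that the perturbed triple $(\tL_\new{}',\tS'',\tX')$ is genuinely distinct from the ground truth on $\M_2$: this is where the extra optimization variable $\tX$ must be handled correctly, and where full column rank of $\G$ is used. Everything else amounts to straightforward $\ell_1$ bookkeeping and the standard uniform-subset coupling.
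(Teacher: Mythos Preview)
Your proof is correct. Both your argument and the paper's rest on the same deterministic monotonicity fact (the ``elimination theorem''): if modified-PCP succeeds with support $\Omega_2$, it succeeds with any $\Omega_1\subseteq\Omega_2$. The paper obtains this by citing \cite[Theorem~2.2]{rpca}, whereas you prove it from scratch for modified-PCP via the contrapositive perturbation $\tS''=\tS'+(\Sb_2-\Sb_1)$; this makes your version self-contained and also handles the extra variable $\tX$ explicitly, which the cited theorem (stated for PCP) does not. To pass from the deterministic fact to the probability inequality, the paper uses a double-counting argument (each $m_1$-subset lies in the same number $b$ of $m_2$-subsets, so the fraction of ``good'' $m_1$-sets is at least the fraction of ``good'' $m_2$-sets), while you use the standard coupling (draw $\Omega_1$ uniformly inside $\Omega_2$). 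These two devices are equivalent: your marginal check $\binom{n_1n_2-m_1}{m_2-m_1}\binom{n_1n_2}{m_2}^{-1}\binom{m_2}{m_1}^{-1}=\binom{n_1n_2}{m_1}^{-1}$ is exactly the paper's computation of $b$. One small typo: in your distinctness check the expression ``$\G(\tX')^*=\G\Lb^*$'' should read $\G(\tX')^*=\G\G^*\Lb$ (equivalently $\tX'=\Lb^*\G$); with that correction the argument goes through as written.
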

%??  Proof is provided in Section \ref{proofunif}.
%${1,2, \dots n_1 n_2}$

There are a total of $n_1 n_2 \choose {m_2}$ size-$m_2$ subsets of the set of indices of an $n_1 \times n_2$ matrix. The probability of any one of them getting selected is $1/{n_1 n_2 \choose {m_2}}$ under the $\text{Unif}(m_2)$ model. Suppose that the algorithm succeeds for $k$ out of these $n_1 n_2 \choose {m_2}$ sets. Call these the ``good" sets. Then,
$$\Pb_{\text{Unif}(m_2)}(\text{Success}) = \frac{k}{ {n_1 n_2 \choose {m_2}} }.$$
By Theorem 2.2 of \cite{rpca}, the algorithm definitely also succeeds for all size-$m_1$ subsets of these $k$ ``good" size-$m_2$ sets. Let $k_1$ be the number of such size $m_1$ subsets.
Under the $\text{Unif}(m_1)$ model, the probability of any one such set getting selected is $\frac{1}{{n_1 n_2 \choose {m_1}}}$. Thus
$\Pb_{\text{Unif}(m_1)}(\text{Success}) = \frac{k_1}{ {n_1 n_2 \choose {m_1}} }.$
%the probability of \text{Success} under this model is at least $k_1/\choose{n}{m_1}$.

Now we need to lower bound $k_1$. %For a given size-$m_2$ set, there are a total of $\choose{m_2}{m_1}$ size-$m_1$ subsets. However, the same set can occur as a subset of two or more size $m_2$ sets and so $k_1 \neq  k {m_2} \choose{m_1}$ but is smaller.
There are a total of $n_1 n_2 \choose {m_2}$ size-$m_2$ sets and each of them has $m_2 \choose {m_1}$ subsets of size $m_1$. However, the total number of distinct size-$m_1$ sets is only $n_1 n_2 \choose {m_1}$. Because of symmetry, this means that in the collection of all size-$m_1$ subsets of all size-$m_2$ sets, a given set is repeated $b=\frac{ {n_1 n_2 \choose {m_2}}  {{m_2} \choose {m1}} }{ {n_1 n_2 \choose {m_1}} }$ times.

In the sub-collection of  size-$m_1$ subsets of the $k$ ``good" size-$m_2$ sets, the number of times a set is repeated is less than or equal to $b$. Also, the number of entries in this collection (including repeated ones) is $k {m_2 \choose m_1}$.
Thus, the number of distinct size-$m_1$ subsets of the ``good" sets is lower bounded by $\frac{k {m_2 \choose {m_1}}}{b}$, i.e. $k_1 \ge \frac{k {m_2 \choose m_1} }{b}.$ %= k \frac{ {\choose{m_2}{m_1}} {\choose{n}{m_1}} }{ {\choose{m_2}{m_1}} {\choose{n}{m_2}} }$.
Thus,
$$\Pb_{\text{Unif}(m_1)}(\text{Success})  \ge \frac{ k {{m_2} \choose {m_1}} {{n_1 n_2} \choose {m_1}} }{ {{n_1 n_2} \choose {m_1}} {{m_2} \choose {m_1}} {{n_1 n_2} \choose {m_2}} } = \Pb_{\text{Unif}(m_2)}(\text{Success}).$$ %= \frac{k}{ {{n_1 n_2} \choose {m_2}} }
%\ge \frac{k_1}{ {n_1 n_2 \choose {m_1}} }

%If we can show that $k1 \ge k$ we will be done. Denote the $k$ size-$m_2$ subsets by $\Omega_1, \Omega_2, \dots \Omega_k$. Let $\Theta_1 = (\Omega_1)_{1:m_2}$, $\Theta_2 = (\Omega_2 \ Theta_1)_1. To show this we need to show that there is at least one size-$m_1$ subset of each of the k size-$m_2$ sets that is not identical to the others???

\begin{proof}[Proof of Lemma \ref{unifber}]
Denote by $\Omega_0$ the support set. We have
\begin{align*}
  &\Pb_{\text{Ber}(\rho_0)}(\text{Success}) \\
  =& \sum_{k = 0}^{n_1n_2} \Pb_{\text{Ber}(\rho_0)}(\text{Success} \, | \, |\Omega_0| = k) \Pb_{\text{Ber}(\rho_0)}(|\Omega_0| = k)\\
  \le& \sum_{k = 0}^{m_0-1} \Pb_{\text{Ber}(\rho_0)}(|\Omega_0| = k) + \\
  &\sum_{k = m_0}^{n_1n_2}\Pb_{\text{Unif}(k)}(\text{Success}) \Pb_{\text{Ber}(\rho_0)}(|\Omega_0| = k)\\
  \le& \Pb_{\text{Ber}(\rho_0)}(|\Omega_0| < m_0) +
  \Pb_{\text{Unif}(m_0)}(\text{Success}),
\end{align*}
where we have used the fact that for $k \ge m_0$, $\Pb_{\text{Unif}(k)}(\text{Success} ) \le \Pb_{\text{Unif}(m_0)}( \text{Success} )$ by Proposition \ref{unif}, and that
the conditional distribution of $\Omega_0$ given its cardinality is
uniform. Thus,
\[
\Pb_{\text{Unif}(m_0)} (\text{Success}) \ge
\Pb_{\text{Ber}(\rho_0)}(\text{Success}) - \Pb_{\text{Ber}(\rho_0)}(|\Omega_0| <
m_0).
\]
%Take $\rho_0 = m_0/n_1n_2 + \epsilon$, where $\epsilon > 0$. %The conclusion
%follows from $\Pb_{\text{Ber}(p)}(|\Omega| < m) \le
%e^{-\frac{\epsilon_1n_2n_1n_2}{2p}}$.  In the other direction, the same
%reasoning gives
%\begin{align*}
%  \Pb_{\text{Ber}(p)}(\text{Success}) & \ge \sum_{k = 0}^{m} \Pb_{\text{Ber}(p)}(\text{Success} \, | \, |\Omega| = k) \Pb_{\text{Ber}(p)}(|\Omega| = k)\\
%  & \ge \Pb_{\text{Unif}(m)}(\text{Success}) \sum_{k = 0}^{m} \Pb_{\text{Ber}(p)}(|\Omega| = k)\\
%  & = \Pb_{\text{Unif}(m)}(\text{Success}) \Pb(|\Omega| \le m),
%\end{align*}
%and choosing $m$ such that $\Pb(|\Omega| > m)$ is exponentially small,
%establishes the claim.
Let random matrix $\X^{n_1\times n_2}$ be a matrix whose each entry is i.i.d. Bernoulli distributed as $\Pb(\X_{ij}=1)=\rho_0, \Pb(\X_{ij}=0)=1-\rho_0$. Then, under the Bernoulli model,  $|\Omega_0|=\sum_{i,j}\X_{ij}$,  $\E [\sum_{i,j}\X_{ij}] = \E[|\Omega_0|] =\rho_0n_1n_2$, and $0 \leq \X_{ij} \leq 1$. Thus by the Hoeffding inequality, we have
$$\Pb(\E[ \sum_{i,j}\X_{ij}] - \sum_{i,j}\X_{ij} \geq t) \leq \exp(-\frac{2t^2}{n_1n_2}).$$
As $\rho_0=\frac{m_0}{n_1n_2}+ \epsilon_0$, take $t=\epsilon_0 n_1n_2$, we have
$$ \Pb_{\text{Ber}(\rho_0)}(|\Omega_0| \leq m_0) =  \Pb(\sum_{i,j}\X_{ij} \leq m_0) \leq \exp(-2\epsilon_0^2n_1n_2).$$
Thus
$
\Pb_{\text{Unif}(m_0)} (\text{Success}) \ge \Pb_{\text{Ber}(\rho_0)}(\text{Success}) - \exp(-2\epsilon_0^2n_1n_2).
$
\end{proof}

\subsection{Proof of Lemma \ref{sign2norm}}
\label{proofsign2norm}

\begin{proof}
First, we state the theorem used in this proof.

\begin{lemma}\cite[Theorem 2(10a)]{furedi1981eigenvalues}
\label{FurediLemma}%, further that $\E a_{ii}=\nu$
For $n\times n$ matrix $\A$ with entries $a_{ij}$, let $a_{ij}, i\geq j$ be independent (not necessarily identically distributed) random variables bounded with a common bound $K$. Assume that for $i\geq j$, the $a_{ij}$ have a common expectation $\mu=0$ and variance $\sigma^2$. Define $a_{ij}$ for $i<j$ by $a_{ij}=a_{ji}.$ (The numbers $K, \mu, \sigma^2$ will be kept fixed as the matrix dimension $n$ will tend to infinity.) For $k$ satisfying $K^2k^6/(4\sigma^2n)<1/2$, we have
$$\Pb(\max_i(|\lambda_i(\A)|)>2\sigma\sqrt{n}+v)<\sqrt{n}\exp(-\frac{kv}{2\sigma\sqrt{n}+v}).$$
\end{lemma}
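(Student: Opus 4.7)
The plan is to reduce the rectangular operator-norm bound to an eigenvalue tail bound for a symmetric matrix so that Lemma \ref{FurediLemma} applies directly. Form the $(n_1+n_2)\times(n_1+n_2)$ symmetric matrix
\[
\A \ := \ \begin{bmatrix} \mathbf{0} & \Eb \\ \Eb^* & \mathbf{0} \end{bmatrix},
\]
whose nonzero eigenvalues are exactly $\pm\sigma_i(\Eb)$, so $\max_i|\lambda_i(\A)|=\|\Eb\|$. The entries of $\A$ on and above the diagonal are independent (the diagonal is deterministically zero, and each off-diagonal entry either equals an independent $\Eb_{ij}$ or is zero), they are bounded in absolute value by $K=1$, they have mean $\mu=0$, and the nonzero ones have variance $\sigma^2=\rho_s$. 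Hence the hypotheses of Lemma \ref{FurediLemma} hold with $n=n_1+n_2$.

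Next I would apply Lemma \ref{FurediLemma} with
\[
v \ := \ 0.5\sqrt{n_{(1)}} - 2\sqrt{\rho_s(n_1+n_2)},
\]
so that $2\sigma\sqrt{n}+v=0.5\sqrt{n_{(1)}}$. Using $n_1+n_2\le 2n_{(1)}$ one checks $v\ge \tfrac{1}{2}\bigl(1-2\sqrt{2}\sqrt{\rho_s}\bigr)\sqrt{n_{(1)}}\ge \tfrac{1}{2}(1-5.6561\sqrt{\rho_s})\sqrt{n_{(1)}}>0$, where positivity is guaranteed by the hypothesis $\rho_s<0.03$. The Füredi–Komlós bound then gives
\[
\Pb(\|\Eb\|\ge 0.5\sqrt{n_{(1)}})\ \le\ \sqrt{n_1+n_2}\,\exp\!\left(-\tfrac{2kv}{\sqrt{n_{(1)}}}\right).
\]
I would now choose $k$ just large enough to make the right-hand side $\le n_{(1)}^{-10}$; taking logs this requires
\[
k \ \ge\ \tfrac{\sqrt{n_{(1)}}}{2v}\bigl(10\log n_{(1)}+\tfrac{1}{2}\log(n_1+n_2)\bigr),
\]
which, after plugging in the lower bound for $v$ and using $n_{(1)}\le n_1+n_2$, is satisfied by some $k\le C\,\log(n_1+n_2)/(1-5.6561\sqrt{\rho_s})$ for an explicit modest constant $C$ (roughly $10.5$).

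Finally I must verify that this choice of $k$ meets the admissibility constraint $K^2 k^6/(4\sigma^2 n)<1/2$, i.e.\ $k^6<2\rho_s(n_1+n_2)$. Substituting the bound on $k$, this becomes
\[
\Bigl(\tfrac{C\log(n_1+n_2)}{1-5.6561\sqrt{\rho_s}}\Bigr)^{\!6}\ <\ 2\rho_s(n_1+n_2),
\]
which rearranges exactly to the stated hypothesis $(n_1+n_2)^{1/6}/\log(n_1+n_2)>10.5/\bigl(\rho_s^{1/6}(1-5.6561\sqrt{\rho_s})\bigr)$. The main obstacle is bookkeeping: picking the numerical constants in $v$ and $k$ so that both the probability bound $n_{(1)}^{-10}$ and the feasibility constraint $k^6<2\sigma^2 n$ close simultaneously with the precise constants $5.6561$ and $10.5$ that appear in the lemma; all other steps are direct applications of the symmetrization trick and of Lemma \ref{FurediLemma}.
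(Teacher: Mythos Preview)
Your proposal does not prove the stated lemma at all: it \emph{invokes} Lemma~\ref{FurediLemma} as a black box (``so that Lemma~\ref{FurediLemma} applies directly'') to bound $\|\Eb\|$. That is circular with respect to the statement you were asked to establish. What you have actually written is a proof sketch for Lemma~\ref{sign2norm}, and in that role it matches the paper's own argument in Appendix~\ref{proofsign2norm} almost step for step (symmetrize $\Eb$ into the $(n_1+n_2)\times(n_1+n_2)$ block matrix, apply the F\"uredi--Koml\'os tail bound with $K=1$, $\sigma^2=\rho_s$, and tune $v$ and $k$ against the hypothesis on $(n_1+n_2)^{1/6}/\log(n_1+n_2)$). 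The paper chooses $v=(0.3536-2\sqrt{\rho_s})\sqrt{n_1+n_2}$ and $k=\rho_s^{1/3}(n_1+n_2)^{1/6}$ explicitly, whereas you parameterize $v$ and $k$ slightly differently, but the content is the same.

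The actual proof of Lemma~\ref{FurediLemma} in the paper (Appendix~\ref{proofFurediLemma}) is a trace--moment argument and has nothing to do with symmetrizing a rectangular matrix: one bounds $\E[\max_i|\lambda_i(\A)|^k]\le \E[\mathrm{Trace}(\A^k)]$, expands the trace as a sum over closed walks of length $k$ on $K_n$, uses that any contributing walk must traverse every edge at least twice (since $\E a_{ij}=0$), bounds the number of such walks via the combinatorial estimate $W'(n,k,p)\le \binom{k}{2p-2}p^{2(k-2p+2)}2^{2p-2}$, and shows the dominant term is $p=k/2+1$ under the constraint $K^2k^6/(4\sigma^2 n)<1/2$, yielding $\E[\mathrm{Trace}(\A^k)]\le 2n(2\sigma\sqrt{n})^k$; Markov's inequality then gives the stated tail bound. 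None of these ingredients appear in your proposal.
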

Proof: see Appendix \ref{proofFurediLemma}.
This is a minor modification of the upper bound of \cite[Theorem 4]{achlioptas2001fast}, \cite[Theorem 1.4]{vu2005spectral}. The only change is that it allows the variance of $a_{ij}$ to be bounded by $\sigma^2$ instead of forcing it to be equal to $\sigma^2$.
%Note that in this theorem, variance is fixed to $\sigma^2$, but we have checked that the theorem holds for variance bounded by $\sigma^2$, and actually \cite[Theorem 4]{achlioptas2001fast}, \cite[Theorem 1.4]{vu2005spectral} used or stated similar results.

%Let $\Q_1, \Q_2$ be symmetric random matrices with each entry of their upper triangular part being  independent identically distributed according to the same distribution as $E_{ij}$.
Let
\begin{equation}
\A:=\left(\begin{array}{cc} 0&\Eb\\\Eb^*&0\end{array}\right)
\end{equation}
Notice that $\A$ is an $(n_1+n_2) \times (n_1+n_2)$ symmetric matrix that satisfies requirements of Lemma \ref{FurediLemma}. %All the entries of its upper triangular part are i.i.d. according to the same distribution as $E_{ij}$.
% (here $K, \mu, \sigma, v$ are defined in \cite{furedi1981eigenvalues})
By Lemma \ref{FurediLemma} with $K=1, \mu=0, \sigma=\sqrt{\rho_s}$ and setting $v=(0.3536-2\sqrt{\rho_s})\sqrt{n_1+n_2},$  and $k=\rho_s^{1/3}(n_1+n_2)^{1/6}$, we have
\begin{align*}
&P(\max_i |\lambda_i(\A)|>0.3536\sqrt{n_1+n_2})\\ \leq& \sqrt{n_1+n_2}\exp(-\frac{\rho_s^{1/3}(n_1+n_2)^{1/6}\cdot (0.3536-2\sqrt{\rho_s})\sqrt{n_1+n_2}}{0.3536\sqrt{n_1+n_2}})\\
\leq& (n_1+n_2)^{-10} < n_{(1)}^{-10}
\end{align*}
In the above,  $v>0$ because $\rho_s < 0.03$ and the second inequality holds because $\frac{(n_1+n_2)^{1/6}}{\log (n_1+n_2)}>\frac{10.5}{\rho_s^{1/3}(1-5.6561\sqrt{\rho_s})}.$
Clearly,
\begin{equation}
\|\A\|=\sqrt{\|\A\A^*\|}=\sqrt{\left\|\left(\begin{array}{cc} \Eb\Eb^*&0\\0&\Eb^*\Eb\end{array}\right)\right\|}=\sqrt{\|\Eb\Eb^*\|}=\|\Eb\|
\end{equation}

%Pick vectors $x,y$ with $\|x\|=1, \|y\|=1$, so that\footnote{It is always possible to find such vectors because $\|\Eb\|=\max_{x,y: \|x\|=1,\|y\|=1}x^*\Eby$.}
%$$\|\Eb\|=x^*\Eby,$$
%Then
%\begin{align*}
%2\|A\|
%&\geq[x^*\ y^*]A\left[\begin{array}{ll} x\\y\end{array}\right]\\
%&=[x^*\ y^*]\left(\begin{array}{ll} \Q_1&E\\E^*&\Q_2\end{array}\right)\left[\begin{array}{ll} x\\y\end{array}\right]\\
%&=x^*\Q_1x+y^*\Q_2y+x^*Ey+y^*E^*x\\
%&\geq x^*Ey+y^*E^*x\\
%&=2\|E\|
%\end{align*}
%Thus, $$\|E\|\leq \|A\|=\max_i |\lambda_i(A)|.$$
Therefore, we have
$P(\|\Eb\|>0.5\sqrt{n_{(1)}}) < n_{(1)}^{-10}.$
% when $$\frac{(n_1+n_2)^{1/6}}{\log (n_1+n_2)}>\frac{10.5}{(\rho_s)^{1/6}(1-5.6561\sqrt{\rho_s})}.$$
\end{proof}

\subsection{Implications of Assumption \ref{rhosr}}
We summarize here some important implications of Assumption \ref{rhosr}.

\begin{remark}
By Assumption \ref{rhosr}(a)(b)(c), we have
\begin{equation}
\label{rhos1}
\begin{array}{ll}
\rho_s &\leq 1- 1.5\max\left\{{60\rho_r^{1/2}}, 11C_{01}\rho_r^{1/2}, 0.11\right\}\\%\frac{11C_{01}\rho_r^{1/2}}{C_\epsilon^2},
&\leq 1- 1.5\max\left\{{60\rho_r^{1/2}}, 11C_{01}\rho_r^{1/2}, \frac{11 \log^2 n_{(1)}}{n_{(2)}}\right\}\\
&< \left(1-\frac{1.5\max\{{60\rho_r^{1/2}}, 11C_{01}\rho_r^{1/2}, \frac{11 \log^2 n_{(1)}}{n_{(2)}}\}}{1.5\log n_{(1)}}\right)^{1.5 \log n_{(1)}}\\%\frac{11C_{01}\rho_r^{1/2}}{C_\epsilon^2},
&< \left(1-\frac{\max\{{60\rho_r^{1/2}}, 11C_{01}\rho_r^{1/2}, \frac{11 \log^2 n_{(1)}}{n_{(2)}}\}}{\log n_{(1)}}\right)^{1.3\lceil \log n_{(1)}\rceil}%\frac{11C_{01}\rho_r^{1/2}}{C_\epsilon^2},
\end{array}
\end{equation}

The third inequality holds because $0 < 1.5\max\left\{{60\rho_r^{1/2}}, 0.11\right\}\leq 1.5\max\left\{60/10^2, 0.11\right\} < 1$; and for fixed constant $b>1$, $(1-x/b)^b > 1 - x$ whenever $x<1$. %???
%
%The third inequality holds because $0 < 1.5\max\left\{{60\rho_r^{1/2}}, 0.11\right\}\leq 1.5\max\left\{60/10^2, 0.11\right\} < 1$; for $1.5\log n_{(1)} \geq 10$, $1 - x < (1-\frac{x}{1.5\log n_{(1)}})^{1.5\log n_{(1)}}$ for $0<x<1$. %,\footnote{ $(\log f_1(x))'=\log(1-\frac{\epsilon}{x})+\frac{\epsilon}{x-\epsilon} \geq \log(1-\frac{\epsilon}{x}) + \log(1+\frac{\epsilon}{x-\epsilon})>0$, which implies $f_1(x)=\exp(\log f_1(x))$ is an increasing function.} $\lim_{x\rightarrow \infty} (1-\frac{\epsilon}{x})^x = \exp(-\epsilon)$, i.e., $f_1(x)\geq (1-\frac{\epsilon}{1000})^{1000}, x\geq 1000;$ and $f_2(\epsilon)=(1-\frac{\epsilon}{1000})^{1000}\exp(\epsilon)$ is a decreasing function for $0<\epsilon\leq 12.2$, \footnote{ $(\log f_2(\epsilon))'=\frac{1000}{\epsilon-1000}+1<0$, which implies $f_2(\epsilon)=\exp(\log f_2(\epsilon))$ is a decreasing function.} i.e., $f_2(\epsilon)\geq f_2(12.2)>0.9$ for $0<\epsilon\leq 12.2$.
The fourth inequality holds since $1.5\log n_{(1)}> 1.3 \lceil \log n_{(1)}\rceil$ for $n_{(1)}\geq 1024.$
\end{remark}

%\begin{equation}
%\label{rhos1}
%\begin{array}{ll}
%\rho_s &< 1 - 1.5 \max\left\{\frac{60\rho_r^{1/2}}{C_\epsilon^2}, 0.11\right\}\\%\frac{11C_{01}\rho_r^{1/2}}{C_\epsilon^2},
%&\leq 1 - \frac{1.3\lceil \log n_{(1)}\rceil}{\log n_{(1)}} \max\left\{\frac{60\rho_r^{1/2}}{C_\epsilon^2}, \frac{11\log^2 n_{(1)}}{n_{(2)}}\right\}\\%\frac{11C_{01}\rho_r^{1/2}}{C_\epsilon^2},
%&\leq \left(1-\frac{\max\{\frac{60\rho_r^{1/2}}{C_\epsilon^2}, \frac{11 \log^2 n_{(1)}}{n_{(2)}}\}}{\log n_{(1)}}\right)^{1.3\lceil \log n_{(1)}\rceil}%\frac{11C_{01}\rho_r^{1/2}}{C_\epsilon^2},
%\end{array}
%\end{equation}

\begin{remark}
By Assumption \ref{rhosr}(b)(c), we have
\begin{equation}
\label{rhos2}
\begin{array}{ll}
\rho_s \leq 0.0156 \leq 1- \frac{250C_{01}\rho_r}{\log n_{(1)}}.
\end{array}
\end{equation}
This follows since $n_{(1)} \geq \exp(253.9618C_{01}\rho_r)$ gives $\frac{250C_{01}\rho_r}{\log n_{(1)}} \leq 0.9844$, and so $1- \frac{250C_{01}\rho_r}{\log n_{(1)}} \geq 0.0156.$
\end{remark}

\subsection{Proof of Lemma \ref{WLc}}
\label{proofwlc}

The proof uses the following three lemmas.

\begin{lemma}\cite[Theorem 4.1]{candes2009exact}\cite[Theorem 2.6]{rpca}
  \label{POPTc0}
%  Suppose $\Omega_0$ is sampled from the Bernoulli model with  parameter $\rho_0$.
Suppose $\Omega_0 \sim  \text{Ber}(\rho_0)$.
  Then there is a numerical constant $C_{01}$ such that for all $\beta>1$,
  \begin{equation}
    \label{eq:rudelson}
    \|\PP - \rho_0^{-1} \PP \POzero \PP \| \le \epsilon_0,
  \end{equation}
with probability at least $1-3n_{(1)}^{-{\beta}}$ provided that $\rho_0 \geq C_{01} \, \epsilon_0^{-2} \, \frac{\beta\rho_r}{\log n_{(1)}}$.
\label{lem1}
\end{lemma}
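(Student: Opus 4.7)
The plan is to realize the operator $\PP - \rho_0^{-1}\PP \POzero \PP$ as a sum of independent, mean-zero, self-adjoint rank-one operators on matrix space, and then apply the non-commutative (operator) Bernstein inequality. Write $\POzero = \sum_{(i,j)} \delta_{ij}\, \langle \eab,\cdot\rangle \eab$ with $\delta_{ij} \stackrel{\text{iid}}{\sim} \text{Ber}(\rho_0)$, so that $\PP\POzero\PP \M = \sum_{(i,j)} \delta_{ij}\,\langle \PP\eab,\M\rangle\,\PP\eab$. On the other hand, resolving the identity on matrix space and composing with the projection twice gives $\PP = \sum_{(i,j)} \langle \PP\eab,\cdot\rangle\,\PP\eab$. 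Subtracting, I get
\begin{equation}
\PP - \rho_0^{-1}\PP\POzero\PP \;=\; \sum_{(i,j)} Y_{ij}\,\mathcal{T}_{ij}, \qquad \mathcal{T}_{ij}(\M) := \langle \PP\eab,\M\rangle\,\PP\eab,\quad Y_{ij} := 1 - \rho_0^{-1}\delta_{ij}.
\end{equation}

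Next I would verify the ingredients of the operator Bernstein inequality for the independent self-adjoint operators $Z_{ij} := Y_{ij}\mathcal{T}_{ij}$ acting on $\mathbb{R}^{n_1\times n_2}$ (viewed with the Frobenius inner product). Mean-zero is immediate from $\E[Y_{ij}]=0$. Since $\mathcal{T}_{ij}$ is PSD of rank one with $\|\mathcal{T}_{ij}\| = \|\PP\eab\|_F^2$, and by the incoherence consequence \eqref{eq:PPeiej}, $\|\PP\eab\|_F^2 \le \tfrac{2\rho_r}{\log^2 n_{(1)}}$, I obtain the uniform bound $\|Z_{ij}\| \le \rho_0^{-1}\|\mathcal{T}_{ij}\| \le \tfrac{2\rho_r}{\rho_0\log^2 n_{(1)}} =: R$. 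For the variance, since $\mathcal{T}_{ij}^2 = \|\PP\eab\|_F^2\,\mathcal{T}_{ij}$ and $\sum_{(i,j)}\mathcal{T}_{ij} = \PP$,
\begin{equation}
\Bigl\|\sum_{(i,j)} \E[Z_{ij}^2]\Bigr\| \;=\; \rho_0^{-1}(1-\rho_0)\,\Bigl\|\sum_{(i,j)} \|\PP\eab\|_F^2\,\mathcal{T}_{ij}\Bigr\| \;\le\; \rho_0^{-1}\,\max_{(i,j)}\|\PP\eab\|_F^2\;=:\;\sigma^2 \le R.
\end{equation}

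Applying the non-commutative Bernstein inequality (on a space of dimension at most $n_1 n_2 \le n_{(1)}^2$) yields
\begin{equation}
\Pb\bigl(\|\PP - \rho_0^{-1}\PP\POzero\PP\| \ge \epsilon_0\bigr) \;\le\; 2 n_{(1)}^2 \exp\!\left(-\,\frac{\epsilon_0^2/2}{\sigma^2 + R\epsilon_0/3}\right).
\end{equation}
Plugging in the bounds on $\sigma^2$ and $R$ reduces the exponent to a constant times $\epsilon_0^2\rho_0\log^2 n_{(1)}/\rho_r$ (for $\epsilon_0\le 1$). The hypothesis $\rho_0 \ge C_{01}\epsilon_0^{-2}\beta\rho_r/\log n_{(1)}$ makes this exponent at least $(\beta+2)\log n_{(1)}$ for a suitable numerical constant $C_{01}$, so the probability is bounded by $3n_{(1)}^{-\beta}$ (after absorbing the $2n_{(1)}^2$ prefactor). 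The only delicate step is choosing $C_{01}$ large enough to absorb both the dimension factor $n_{(1)}^2$ and the additive $R\epsilon_0/3$ term in the Bernstein denominator uniformly in $\beta>1$; once this book-keeping is done, the stated bound follows. This is essentially the argument of \cite[Thm.~4.1]{candes2009exact}, lifted verbatim to the subspace $\Pi$ (rather than $T$) using our $\Pi$-incoherence estimate \eqref{eq:PPeiej}.
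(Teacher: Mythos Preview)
The paper does not supply its own proof of this lemma; it simply cites \cite[Theorem~4.1]{candes2009exact} and \cite[Theorem~2.6]{rpca} and uses the result as a black box (the only adaptation being that the projection $\PP$ onto $\Pi$ replaces the usual $\mathcal{P}_T$, which is justified by the $\Pi$-incoherence bound \eqref{eq:PPeiej}). Your argument is correct and is precisely the standard operator/matrix Bernstein route used in those references, so there is nothing to compare beyond noting that you have written out what the paper leaves to citation.
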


\begin{lemma}\cite[Lemma 3.1]{rpca}
  \label{teo:infty}
  Suppose $\Z \in \Pi$ is a fixed matrix, and $\Omega_0 \sim
  \text{Ber}(\rho_0)$. Then
  \begin{equation}
    \label{eq:infty}
    \|\Z - \rho_0^{-1} \PP \Pc_{\Omega_0} \Z\|_\infty \le \epsilon_0 \|\Z\|_\infty
  \end{equation}
with probability at least $1-2n_{(1)}^{-11}$, provided that $\rho_0 \geq 60\, \epsilon_0^{-2} \, \frac{\rho_r}{\log n_{(1)}}$.
\end{lemma}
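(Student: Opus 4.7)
The plan is to prove Lemma~\ref{teo:infty} by applying scalar Bernstein's inequality entrywise and then union-bounding over all matrix indices.

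\emph{Step 1 (decomposition into independent zero-mean summands).} Fix a target entry index $(i,j)$ and let $\delta_{ab} := \mathbf{1}\{(a,b)\in \Omega_0\}$, so the $\delta_{ab}$ are i.i.d.\ $\Ber(\rho_0)$. Using self-adjointness of $\PP$ together with $\Z \in \Pi$ (hence $\PP\Z = \Z$), we have
\[
Z_{ij} = \langle \PP\Z, \e_i\e_j^*\rangle = \sum_{a,b} Z_{ab}(\PP\e_i\e_j^*)_{ab}, \qquad (\PP\PO\Z)_{ij} = \sum_{a,b} \delta_{ab} Z_{ab}(\PP\e_i\e_j^*)_{ab},
\]
so that subtracting gives the decomposition into mutually independent mean-zero summands
\[
(\Z - \rho_0^{-1}\PP\PO\Z)_{ij} = \sum_{a,b} X_{ab}, \qquad X_{ab} := Z_{ab}(\PP\e_i\e_j^*)_{ab}\bigl(1 - \rho_0^{-1}\delta_{ab}\bigr).
\]

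\emph{Step 2 (variance and sup-norm control).} Since $|Z_{ab}|\le\|\Z\|_\infty$ and $\mathrm{Var}(1-\rho_0^{-1}\delta_{ab}) = (1-\rho_0)/\rho_0 \le \rho_0^{-1}$, the incoherence-derived bound (\ref{eq:PPeiej}) from the Remark yields
\[
\sigma^2 := \sum_{a,b} \mathrm{Var}(X_{ab}) \le \frac{\|\Z\|_\infty^2 \|\PP\e_i\e_j^*\|_F^2}{\rho_0} \le \frac{2\rho_r\|\Z\|_\infty^2}{\rho_0 \log^2 n_{(1)}}.
\]
Using $|(\PP\e_i\e_j^*)_{ab}| \le \|\PP\e_i\e_j^*\|_F$,
\[
B := \max_{a,b}|X_{ab}| \le \frac{\|\Z\|_\infty\sqrt{2\rho_r}}{\rho_0 \log n_{(1)}}.
\]

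\emph{Step 3 (Bernstein plus union bound).} Applying scalar Bernstein's inequality to $\sum_{a,b} X_{ab}$ at deviation level $t = \epsilon_0 \|\Z\|_\infty$ gives
\[
\mathbb{P}\!\Bigl(|(\Z - \rho_0^{-1}\PP\PO\Z)_{ij}| > \epsilon_0 \|\Z\|_\infty\Bigr) \le 2\exp\!\left(-\frac{\epsilon_0^2 \|\Z\|_\infty^2/2}{\sigma^2 + B\epsilon_0\|\Z\|_\infty/3}\right).
\]
Substituting the Step~2 bounds and invoking the hypothesis $\rho_0 \ge 60\epsilon_0^{-2}\rho_r/\log n_{(1)}$ forces the exponent to be at least $13\log n_{(1)}$; a union bound over the $n_1 n_2 \le n_{(1)}^2$ index pairs then produces the stated failure probability $\le 2 n_{(1)}^{-11}$.

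\emph{Main obstacle.} The only delicate point is the arithmetic in Step~3. The Bernstein denominator has two contributions---a variance piece of order $\rho_r/(\rho_0\log^2 n_{(1)})$ and a sup-norm piece of order $\sqrt{\rho_r}/(\rho_0 \log n_{(1)})$---and the constant $60$ in the hypothesis on $\rho_0$ is sized precisely so that, after substitution, the variance term dominates and the exponent scales linearly in $\log n_{(1)}$. The decomposition step, the moment bounds through (\ref{eq:PPeiej}), and the final union bound are all routine.
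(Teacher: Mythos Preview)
Your overall architecture---entrywise decomposition, scalar Bernstein, union bound---is the right one and matches the argument behind \cite[Lemma~3.1]{rpca}. But there is a genuine gap in Step~2, and it causes Step~3 to fail with the stated constant.

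The problem is your bound on $B$. You control the entries of $\PP(\e_i\e_j^*)$ by its Frobenius norm, obtaining $|(\PP\e_i\e_j^*)_{ab}|\le \sqrt{2\rho_r}/\log n_{(1)}$ and hence $B\le \|\Z\|_\infty\sqrt{2\rho_r}/(\rho_0\log n_{(1)})$. With this $B$, the ratio of the two denominator pieces in Bernstein is
\[
\frac{B\,\epsilon_0\|\Z\|_\infty/3}{\sigma^2}\;\ge\;\frac{\epsilon_0\log n_{(1)}}{3\sqrt{2\rho_r}},
\]
which is large, not small: the sup-norm term \emph{dominates}, contrary to what you assert. Plugging in $\rho_0=60\epsilon_0^{-2}\rho_r/\log n_{(1)}$, the Bernstein exponent driven by $Bt/3$ is of order $\sqrt{\rho_r}/\epsilon_0$, with no $\log n_{(1)}$ factor at all; the union bound over $n_{(1)}^2$ entries then blows up rather than giving $2n_{(1)}^{-11}$.

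The fix is a sharper entrywise bound that exploits idempotence of the projection. Since $\PP=\PP^2=\PP^*$,
\[
(\PP\e_i\e_j^*)_{ab}=\langle \e_a\e_b^*,\PP\e_i\e_j^*\rangle=\langle \PP\e_a\e_b^*,\PP\e_i\e_j^*\rangle,
\]
and now Cauchy--Schwarz together with (\ref{eq:PPeiej}) applied to \emph{both} factors gives $|(\PP\e_i\e_j^*)_{ab}|\le 2\rho_r/\log^2 n_{(1)}$. This squares your bound and yields $B\le 2\rho_r\|\Z\|_\infty/(\rho_0\log^2 n_{(1)})$. With this $B$, the two denominator terms are of the same order, and under the hypothesis $\rho_0\ge 60\epsilon_0^{-2}\rho_r/\log n_{(1)}$ the exponent becomes $\frac{15\log n_{(1)}}{1+\epsilon_0/3}$, which (for $\epsilon_0<1$, the only interesting range) is large enough to survive the union bound and deliver the stated probability. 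This is the step the constant $60$ is actually calibrated for.
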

This is the same as Lemma 3.1 in \cite{rpca} except that we derive an explicit expression for the lower bound on $\rho_0$.  A proof for this can be found in the Appendix \ref{vershynin_bound}. %of  \cite{modpcp_arxiv}.
%Proof for it can be found in the arxiv version? in Sec \ref{proofinfty}.

\begin{lemma}\cite[Theorem 6.3]{candes2009exact}\cite[Lemma 3.2]{rpca}
\label{teo:sixthree}
Suppose $\Z$ is fixed, and $\Omega_0 \sim \text{Ber}(\rho_0)$. Then there is a constant $C_{03} > 0$ s.t.
  \label{teo:CR}
   \begin{equation}
    \label{eq:CR}
    \|(\Ib- \rho_0^{-1} \POzero) \Z\| \le C_{03} \sqrt{\frac{11 n_{(1)} \log n_{(1)}}{\rho_0}}
    \|\Z\|_\infty
  \end{equation}
  with probability at least $1-n_{(1)}^{-11}$, provided that $\rho_0 \geq \frac{
    11\log n_{(1)}}{n_{(2)}}$.%for $\beta>2$.% and any $C_3>1$.
\end{lemma}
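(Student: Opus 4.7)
The plan is to apply the matrix Bernstein inequality (Tropp 2012, or equivalently the noncommutative Bernstein inequality used in \cite{candes2009exact}) to the decomposition
$$(\Ib - \rho_0^{-1}\POzero)\Z \;=\; -\sum_{i,j} \xi_{ij}\,\Z_{ij}\,\e_i\e_j^*,$$
where $\xi_{ij} := \rho_0^{-1}\delta_{ij} - 1$ and $\delta_{ij} = \mathbf{1}_{\{(i,j)\in\Omega_0\}}$ are i.i.d.\ Bernoulli$(\rho_0)$ random variables. Each summand is a zero-mean, independent random matrix since $\E\,\xi_{ij}=0$, which is precisely the input required by matrix Bernstein.

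First I would bound the almost-sure operator norm of each summand, $\|\xi_{ij}\Z_{ij}\,\e_i\e_j^*\| \le \rho_0^{-1}|\Z_{ij}| \le \rho_0^{-1}\|\Z\|_\infty =: L$. Next, using $\mathrm{Var}(\xi_{ij}) = \rho_0^{-1}-1 \le \rho_0^{-1}$, I would bound the two matrix variances
$$\Bigl\|\sum_{i,j}\E[\xi_{ij}^2\,\Z_{ij}^2\,\e_i\e_j^*\e_j\e_i^*]\Bigr\| \;\le\; \rho_0^{-1}\max_i\sum_j \Z_{ij}^2 \;\le\; \rho_0^{-1}\, n_2\, \|\Z\|_\infty^2,$$
and symmetrically $\bigl\|\sum_{i,j}\E[\xi_{ij}^2\Z_{ij}^2\,\e_j\e_i^*\e_i\e_j^*]\bigr\| \le \rho_0^{-1}\, n_1\, \|\Z\|_\infty^2$; hence the Bernstein variance parameter is $\sigma^2 \le \rho_0^{-1}\, n_{(1)}\,\|\Z\|_\infty^2$.

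I would then apply matrix Bernstein in the form
$$\Pb\bigl(\|(\Ib-\rho_0^{-1}\POzero)\Z\| \ge t\bigr) \;\le\; (n_1+n_2)\exp\!\left(-\frac{t^2/2}{\sigma^2 + Lt/3}\right),$$
and take $t = C_{03}\sqrt{11\,n_{(1)}\log n_{(1)}/\rho_0}\;\|\Z\|_\infty$. The subgaussian part produces an exponent $-\tfrac{11}{2}C_{03}^2\log n_{(1)}$, and the hypothesis $\rho_0 \ge 11\log n_{(1)}/n_{(2)}$ combined with $n_{(1)} \ge n_{(2)}$ gives $\rho_0\, n_{(1)} \ge 11\log n_{(1)}$, so the subexponential correction $Lt/(3\sigma^2) \le \tfrac{C_{03}}{3}\sqrt{11\log n_{(1)}/(\rho_0 n_{(1)})} \le C_{03}/3$ is a bounded constant. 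Choosing $C_{03}$ so that $\tfrac{11C_{03}^2/2}{1+C_{03}/3} \ge 12$ then makes the resulting tail bound at most $(n_1+n_2)\,n_{(1)}^{-12} \le n_{(1)}^{-11}$.

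The main obstacle is bookkeeping rather than conceptual: to extract an explicit numerical $C_{03}$ (the excerpt notes that the original paper does not specify its value), one has to track constants through the Bernstein step carefully, and the factor $11$ inside $\sqrt{11\,n_{(1)}\log n_{(1)}/\rho_0}$ must be coordinated with the $11$ in the lower bound $\rho_0 \ge 11\log n_{(1)}/n_{(2)}$ so that the subexponential correction is genuinely absorbed into an $O(1)$ multiplicative factor in the denominator. An alternative route that avoids Bernstein entirely is the trace-moment method of Cand\`es--Recht \cite{candes2009exact}, which estimates $\E\,\trace\!\bigl(((\Ib-\rho_0^{-1}\POzero)\Z)((\Ib-\rho_0^{-1}\POzero)\Z)^*\bigr)^k$ for $k = \Theta(\log n_{(1)})$ and applies Markov; this is the route taken in the cited Theorem~6.3 but requires a more delicate combinatorial argument rather than an off-the-shelf matrix concentration inequality.
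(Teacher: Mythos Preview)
The paper does not supply its own proof of this lemma; it is simply quoted from \cite[Theorem~6.3]{candes2009exact} and \cite[Lemma~3.2]{rpca} and then used. Your matrix Bernstein argument is correct and is in fact the streamlined route now standard for this estimate (it is essentially the proof behind \cite[Lemma~3.2]{rpca}). The original Cand\`es--Recht argument in \cite{candes2009exact} predates the off-the-shelf matrix Bernstein inequality and instead bounds $\E\,\trace\bigl((\cdot)(\cdot)^*\bigr)^k$ for $k$ of order $\log n_{(1)}$ via a combinatorial expansion and applies Markov, exactly as you describe in your final paragraph; both approaches yield the same conclusion with unspecified absolute constant $C_{03}$, but yours is shorter and makes the role of the hypothesis $\rho_0 \ge 11\log n_{(1)}/n_{(2)}$ (controlling the subexponential term) more transparent.

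One small bookkeeping slip: the final inequality $(n_1+n_2)\,n_{(1)}^{-12} \le n_{(1)}^{-11}$ is not quite right, since $n_1+n_2 = n_{(1)}+n_{(2)} > n_{(1)}$; you only get $2n_{(1)}^{-11}$. Simply strengthen the requirement on $C_{03}$ to $\tfrac{11C_{03}^2/2}{1+C_{03}/3} \ge 13$ (or anything exceeding $12 + \log 2/\log n_{(1)}$) and the stated probability $1-n_{(1)}^{-11}$ follows. This does not affect the argument.
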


In the following proof, we take
%\begin{equation}
%\label{r}
%\mu r = \rho_r n_{(2)} (\log n_{(1)})^{-2},
%\end{equation}% where $C_r\leq 1$,C_\epsilon(\mu  r (\log n_{(1)})^2/n_{(2)})^{1/4} =
\begin{equation}
\label{epsilon}
\epsilon= (\rho_r)^{1/4} \ \text{and} \ q=1-\rho_s^{\frac{1}{1.3\lceil \log n_{(1)}\rceil}}
\end{equation}
%and $q=1-\rho_s^{\frac{1}{1.3\lceil \log n_{(1)}\rceil}}$.
Notice from our assumption on $\rho_r$ given in Assumption \ref{assrhosrhors} that
 $$\epsilon \le (10^{-4})^{1/4}  \leq e^{-1}.$$

%and
%\begin{equation}
%\label{qc}
%q \geq \max\Big\{\frac{\max\{60, 11C_{01}\}\rho_r}{\epsilon^{2} \log n_{(1)}}, \frac{11 \log n_{(1)}}{n_{(2)}}\Big\},
%\end{equation}
%i.e.,
%\begin{equation}
%1-\rho_s^{1/j_0}\geq \max\Big\{\frac{\max\{60, 11C_{01}\}\rho_r^{1/2}}{C_\epsilon^2 \log n_{(1)}}, \frac{11 \log n_{(1)}}{n_{(2)}}\Big\}.
%\end{equation}

Let $\Z_j=\U_\new \V_\new^*-\PP \Y_j$. Clearly, $\Z_j \in \Pi$.  From the definition of $\Y_j$, notice that $\Y_j \in \Omega^\perp$,
 $$\Y_j=\Y_{j-1}+q^{-1}\POj \Z_{j-1}, \ \text{and}$$
$$\Z_j=(\PP-q^{-1}\PP \Pc_{\bOmega_j}\PP)\Z_{j-1}.$$
Clearly, $\bOmega_j$ and $\Z_{j-1}$ are independent. Using (\ref{rhos1}) and (\ref{epsilon}), $q\geq \frac{60\sqrt{\rho_r}}{\log n_{(1)}}$. Thus, by Lemma \ref{teo:infty}
\begin{equation}
\label{zj_infty}
\|\Z_j\|_\infty\leq \epsilon^j \|\U_\new \V_\new^*\|_\infty,
\end{equation}
with probability at least $1-2jn_{(1)}^{-11}$.
By Lemma \ref{POPTc0} and $q\geq \frac{11C_{01}\sqrt{\rho_r}}{\log n_{(1)}}$, which follows from (\ref{rhos1}),
\begin{equation}
\label{zj_f}
\|\Z_j\|_F \leq \epsilon \|\Z_{j-1}\|_F\leq \epsilon^j\|\U_\new \V_\new^*\|_F=\epsilon^j\sqrt{r}
\end{equation}
with probability at least $1-3jn_{(1)}^{-11}$.

%\subsection{Proof of Lemma \ref{WSc}}
%
{\bf Proof of (a)}
\begin{proof}
As
\begin{equation}
\label{Yj0}
\Y_{j_0}=\sum_{j=1}^{j_0} q^{-1}\Pc_{\bOmega_j}\Z_{j-1},
\end{equation}
and $\PPp \Z_j=0$, so we have, with probability at least $1- 3j_0n_{(1)}^{-11}$,
\begin{align*}
  \|\W^L\| =& \|\PPp \Y_{j_0}\| \le \sum_{j=1}^{j_0}  \|q^{-1} \PPp \POj \Z_{j-1}\|\\
  =& \sum_{j=1}^{j_0}  \|\PPp (q^{-1} \POj \Z_{j-1} - \Z_{j-1})\|\\
  \le& \sum_{j=1}^{j_0} \|q^{-1} \POj \Z_{j-1} - \Z_{j-1}\|\\
  \le& C_{03} \sqrt{\frac{11 n_{(1)}\log n_{(1)}}{q}} \sum_{j=1}^{j_0} \|\Z_{j-1}\|_\infty\\
     &\text{(using Lemma \ref{teo:sixthree} and $q\geq \frac{11\log n_{(1)}}{n_{(2)}}$ by (\ref{rhos1}))}\\
  \le&  C_{03} \sqrt{\frac{11 n_{(1)}\log n_{(1)}}{q}} \sum_{j=1}^{j_0} \epsilon^{j-1} \|\U_\new \V_\new^*\|_\infty\\
     &\text{(using Lemma \ref{teo:infty} and $q\geq \frac{60\rho_r^{1/2}}{\log n_{(1)}}$ by (\ref{rhos1}))}\\   %\frac{60\rho_r}{\epsilon^{2}\log n_{(1)}}=
  <&  C_{03} (1-\epsilon)^{-1} \sqrt{\frac{11 n_{(1)} \log n_{(1)}}{q}} \|\U_\new \V_\new^*\|_\infty\\
  \le&  C_{03} (1-\epsilon)^{-1}\sqrt{\frac{11\rho_r}{q\log n_{(1)}}}\\
  & \text{(using $\|\U_\new \V_\new^*\|_\infty\leq \sqrt{\frac{\rho_r}{n_{(1)}\log^2 n_{(1)}}}$ by (\ref{eq:UV}))}\\
  \le&  \frac{\sqrt{11}C_{03}\rho_r^{1/4}} {\sqrt{60}(1-e^{-1})} \\
  & \text{(using $q\geq \frac{60\sqrt{\rho_r}}{\log n_{(1)}}$ by (\ref{rhos1}) and $\epsilon\leq e^{-1}$)}\\ %(\ref{epsilon}) and   by Assu. \ref{assrhosrhors}(a)
  \leq&  \frac{1}{16}\\%\left(\frac{\mu  r (\log n_{(1)})^2}{n_{(2)}}\right)^{1/4}
  & \text{(using $\rho_r \leq 7.2483\times10^{-5}C_{03}^{-4}$ by Assu. \ref{assrhosrhors}(a))}
\end{align*}
The fourth step holds with probability at least $1-j_0n_{(1)}^{-11}$ by applying Lemma \ref{teo:sixthree} $j_0$ times; the fifth holds with probability at least $1-2j_0n_{(1)}^{-11}$ by applying Lemma \ref{teo:infty} $j_0$ times for each $\Z_j$ (similar to \eqref{zj_infty}).
Since $j_0 = 1.3 \log n_{(1)} < n_{(1)}$ (for $n_{(1)}$ satisfying Assumption  \ref{assrhosrhors}), the result follows.

%; the seventh holds by (\ref{eq:UV}); the eighth holds by (\ref{rhos1}) and (\ref{epsilon}), and the ninth hold by Definition \ref{defrhosrhor}(a). Clearly $(a)$ holds with probability at least $1-3j_0n_{(1)}^{- 11} \geq 1- 3n_{(1)}^{-10}$.% by taking $C_\epsilon\rho_r^{1/4} \leq \frac{\sqrt{60}(1-e^{-1})}{5.6561\sqrt{11}C_{03}}$.\\
\end{proof}

{\bf Proof of (b)}
\begin{proof}
Since $\PO \Y_{j_0}= 0$, we have $$\PO(\U_\new \V_\new^*+\PPp \Y_{j_0})=\PO(\U_\new \V_\new^* - \PP \Y_{j_0})= \PO(\Z_{j_0}),$$
and by (\ref{zj_f}), (\ref{epsilon}) and (\ref{rhos1}) ($q\geq \frac{11C_{01}\sqrt{\rho_r}}{\log n_{(1)}}$), we have
\begin{equation}
\label{zj0f}
\|\PO(\Z_{j_0})\|_F\le \|\Z_{j_0}\|_F \leq \epsilon^{j_0}\sqrt{r}\leq e^{-1.3\log n_{(1)}}\sqrt{r}=\frac{\sqrt{r}}{n_{(1)}^{1.3}},
\end{equation}
with probability at least $1-3{j_0}n_{(1)}^{-11}$. Thus, when $\frac{\sqrt{r}}{n_{(1)}^{0.8}}<\frac{1}{4}$, e.g. $n_{(1)}\geq 102$, Lemma \ref{WLc}(b) holds with probability at least $1-3n_{(1)}^{-10}$.  % by (\ref{qc}), where $C_q>1$ is some constant.\\
\end{proof}

{\bf Proof of (c)}

\begin{proof}
Recall that $\U_\new \V_\new^*+\W^L=\Z_{j_0}+\Y_{j_0}$, $\POp \Y_{j_0}=\Y_{j_0}$. From above,
\begin{equation}
\label{zj0_infty}
\|\Z_{j_0}\|_\infty \leq \|\Z_{j_0}\|_F \leq \frac{\sqrt{r}}{n_{(1)}^{1.3}} < \frac{\lambda}{8}
\end{equation}
by (\ref{zj0f}) with probability at least $(1-3n_{(1)}^{-10})$ when $\frac{\sqrt{r}}{n_{(1)}^{0.8}}<\frac{1}{8}$, e.g. $n_{(1)}\geq 1024$. Thus, we only need to show $\|\Y_{j_0}\|_\infty \leq \frac{11\lambda}{40}$. We have, with probability at least $1-2j_0n_{(1)}^{-11}$,
%Recall that $$\Y_{j_0}=\sum_{j=1}^{j_0} q^{-1}\Pc_{\bOmega_j}\Z_{j-1},$$ thus we have
\begin{equation}
\label{yj0_infty}
\begin{array}{lll}
\|\Y_{j_0}\|_\infty &\le & q^{-1} \sum_{j=1}^{j_0}  \|\POj \Z_{j-1}\|_\infty \\
  & \le &q^{-1} \sum_{j=1}^{j_0} \|\Z_{j-1}\|_\infty\\
  & \le &q^{-1} \sum_{j=1}^{j_0} \epsilon^{j-1} \|\U_\new \V_\new^*\|_\infty\\
  &&\text{(using Lemma \ref{teo:infty} and $q\geq \frac{60\rho_r^{1/2}}{\log n_{(1)}}$ by (\ref{rhos1}))}\\   %\frac{60\rho_r}{\epsilon^{2}\log n_{(1)}}=
  & \le& q^{-1} \sum_{j=1}^{j_0} \epsilon^{j-1} \sqrt{\frac{\rho_r}{n_{(1)}\log^2 n_{(1)}}}\\
  && \text{(using $\|\U_\new \V_\new^*\|_\infty\leq \sqrt{\frac{\rho_r}{n_{(1)}\log^2 n_{(1)}}}$ by (\ref{eq:UV}))}\\
  %& = \frac{1-\epsilon^{j_0}}{1-\epsilon}C_{01}^{-1}\frac{\epsilon^2}{\sqrt{n_{(1)}}}\\
  & \le& \frac{\lambda}{60(1-e^{-1})} < \frac{11\lambda}{40} \\
  & &\text{(using $q\geq \frac{60\sqrt{\rho_r}}{\log n_{(1)}}$ by (\ref{rhos1}) and $\epsilon\leq e^{-1}$ by (\ref{epsilon}))}
%  & <& \frac{11\lambda}{40}
\end{array}
\end{equation}
%The first step follows from (\ref{\Yj0}),
The third step follows from Lemma \ref{teo:infty} with probability at least $1-2j_0n_{(1)}^{-11}$. %, the forth from (\ref{eq:UV}), the fifth from (\ref{epsilon}).
Thus, Lemma \ref{WLc}(c) holds with probability at least $1-2n_{(1)}^{-10}$.%(\ref{zj0_infty}) and (\ref{yj0_infty}) together hold with probability at least $1-2j_0n_{(1)}^{-11}-3n_{(1)}^{-10}$. By taking $C_\epsilon \leq \sqrt{\frac{60(1-e^{-1})}{8}}=2.3518$,\\

To sum up, %given $\rho_r$, by taking $$C_\epsilon = \min\{e^{-1}\rho_r^{-1/4},\frac{\sqrt{60}(1-e^{-1})}{5.6561\sqrt{11}C_{03}\rho_r^{1/4}},2.3518\},$$ for any $\rho_s$ satisfying
%\begin{equation}
%\label{WLcc}
%%\begin{cases}
%1-\rho_s^{\frac{1}{1.3\lceil \log n_{(1)}\rceil}}\geq \max\Big\{\frac{\max\{60, 11C_{01}\}\rho_r^{1/2}}{C_\epsilon^2 \log n_{(1)}}, \frac{11 \log n_{(1)}}{n_{(2)}}\Big\},
%%\end{cases}
%\end{equation}
with the assumptions in Lemma \ref{WLc}, we have (a), (b), (c) of Lemma \ref{WLc} hold with probability at least $1-11n_{(1)}^{-10}$.
\end{proof}

\subsection{Proof of Lemma \ref{WSc}}
\label{proofwsc}

The proof uses the following lemma.

\begin{lemma}\cite[Corollary 2.7]{rpca}
  \label{POPTc}
  Assume that $\Omega_0 \sim \text{Ber}(\rho_0)$, $\Lb$ satisfies (\ref{eq:PU}), (\ref{eq:PV}) and (\ref{eq:UV}), then there is a numerical constant $C_{01}$ such that for all $\beta>1$,
   $$\|\Pc_{\Omega_0}\PP\|^2 \le
  \rho_0 + \epsilon_0,$$ with probability at least $1-3n_{(1)}^{-\beta}$ provided that $1-\rho_0 \geq C_{01} \, \epsilon_0^{-2} \, \frac{\beta\rho_r}{\log n_{(1)}}$.
\end{lemma}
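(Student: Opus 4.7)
The plan is to derive Lemma \ref{POPTc} as a direct consequence of Lemma \ref{POPTc0} by applying that result not to $\Omega_0$ but to its complement $\Omega_0^c$. Since $\Omega_0 \sim \Ber(\rho_0)$ implies $\Omega_0^c \sim \Ber(1-\rho_0)$, Lemma \ref{POPTc0} applied with parameter $1-\rho_0$ in place of $\rho_0$ yields, for any $\beta>1$,
\[
\|\PP - (1-\rho_0)^{-1}\PP\Pc_{\Omega_0^c}\PP\| \le \epsilon_0
\]
with probability at least $1-3n_{(1)}^{-\beta}$, provided $1-\rho_0 \ge C_{01}\epsilon_0^{-2}\beta\rho_r/\log n_{(1)}$, which is exactly the hypothesis of Lemma \ref{POPTc}.

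Next I would convert the bound on $\Pc_{\Omega_0^c}$ into one on $\Pc_{\Omega_0}$ using $\Pc_{\Omega_0^c}=\mathcal{I}-\Pc_{\Omega_0}$. Substituting gives $\PP\Pc_{\Omega_0^c}\PP = \PP - \PP\Pc_{\Omega_0}\PP$, so the inequality above simplifies (after multiplying through by $1-\rho_0$ and collecting terms) to
\[
\|\PP\Pc_{\Omega_0}\PP - \rho_0\PP\| \le (1-\rho_0)\epsilon_0 \le \epsilon_0.
\]

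Finally, I would use the identity $\|\Pc_{\Omega_0}\PP\|^2 = \|(\Pc_{\Omega_0}\PP)^*(\Pc_{\Omega_0}\PP)\| = \|\PP\Pc_{\Omega_0}\PP\|$, which holds because $\Pc_{\Omega_0}$ is an orthogonal projection (so $\Pc_{\Omega_0}^*\Pc_{\Omega_0} = \Pc_{\Omega_0}$) and $\PP$ is self-adjoint. Combining this with the triangle inequality,
\[
\|\Pc_{\Omega_0}\PP\|^2 = \|\PP\Pc_{\Omega_0}\PP\| \le \|\PP\Pc_{\Omega_0}\PP - \rho_0\PP\| + \rho_0\|\PP\| \le \rho_0 + \epsilon_0,
\]
which is the desired bound, and the failure probability inherits the $3n_{(1)}^{-\beta}$ from the single application of Lemma \ref{POPTc0}.

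There is no serious obstacle here; the argument is a one-shot reduction to Lemma \ref{POPTc0} via complementation. The only point that requires a moment of care is verifying that the operator-norm identity $\|\Pc_{\Omega_0}\PP\|^2 = \|\PP\Pc_{\Omega_0}\PP\|$ is applicable (it is, because $\PP\Pc_{\Omega_0}\PP = (\Pc_{\Omega_0}\PP)^*(\Pc_{\Omega_0}\PP)$) and that substituting $\epsilon_0' = \epsilon_0$ in the complementary version of Lemma \ref{POPTc0} produces exactly the hypothesis stated on $1-\rho_0$. Everything else is routine algebra.
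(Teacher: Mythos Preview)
Your proposal is correct and matches the paper's approach exactly: the paper states only that the lemma ``is a direct corollary of Lemma \ref{POPTc0} \ldots\ It follows by replacing $\Omega$ by $\Omega_0^c$,'' and you have carried out precisely that complementation argument with the details filled in. The algebraic steps (substituting $\Pc_{\Omega_0^c}=\I-\Pc_{\Omega_0}$, extracting the factor $(1-\rho_0)$, and using $\|\Pc_{\Omega_0}\PP\|^2=\|\PP\Pc_{\Omega_0}\PP\|$) are all correct.
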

This is a direct corollary of Lemma \ref{lem1} stated earlier. It follows by replacing $\Omega$ by $\Omega_0^c$ in Lemma \ref{lem1}.

{\bf Proof of (a)}

Let $\Eb := \sgn(\Sb)$. Recall from the assumption in this lemma that $\Eb$ satisfies the assumptions of Lemma \ref{sign2norm}.%its entries are i.i.d. $+1, -1,0$ w.p. $\rho_s/2,\rho_s/2,1-\rho_s$.

By taking $\Omega_0 = \Omega$, $\rho_0=\rho_s$, $\epsilon_0=0.2,$ and $\beta =10$ in Lemma \ref{POPTc}, and using (\ref{rhos2}), we get
\begin{align}
\label{POPT}
\|\PO\PP\|^2 \le  \sigma:=\rho_s+0.2,
\end{align}
with probability at least $1-3n_{(1)}^{-10}$. Thus, using the bound on $\rho_s$ from (\ref{rhos2}), we get that $\|\PO\PP\|^2 \le 0.22 < 1/4$.
%provided (\ref{rhos2}). This follows since $1-\rho_s \geq 250C_{01}\rho_r(\log n_{(1)})^{-1}$ by Assumption \ref{assrhosrhors}.% for some numerical constant $C_{01}>0$, which is assumed in Assumption \ref{rhosr}.

{\bf Proof of (b)}
\begin{proof}
Note that
\begin{align*}
  \W^S & = \PPp  \lambda \Eb + \PPp  \lambda  \sum_{k \ge 1}
  (\PO\PP\PO)^k \Eb\\
  & := \PPp \W_0^S + \PPp \W_1^S.
\end{align*}
By Assumption \ref{rhosr}(b)(e) and Lemma \ref{sign2norm}, we have
\[
\|\Eb\| \le 0.5\sqrt{n_{(1)} }
\]
with probability at least $1-n_{(1)}^{-10}$. Since $\lambda=1/\sqrt{n_{(1)}}$, we have
$$\|\PPp \W_0^S\| \le \|\W_0^S\| = \lambda\|\Eb\| \leq 0.5,$$
with probability at least $1-n_{(1)}^{-10}$.

Let $\mathcal{R} = \sum_{k \ge 1} (\PO\PP\PO)^{k}$. Let $N_1, N_2$ denote $1/2$-nets for $\mathbb{\Sb}^{n_1-1}, \mathbb{\Sb}^{n_2-1}$ where $\mathbb{\Sb}^{n_1-1}$ is a unit Euclidean sphere in $\mathbb{R}^{n_1}$.%???.
A subset $N$ of $\mathbb{R}^{n_1}$ is referred to as a $\xi$-net, if and only if, for every $\y \in \mathbb{R}^{n_1}$, there is a $\y_1 \in N$ for which $\|\y-\y_1\| \le \xi$ (here we used the Euclidean distance metric) \cite{vershynin2010introduction}. %??? %Thus a $1/2$-net means that each point in the full space can be approximated to within

By \cite[Lemma 5.2]{vershynin2010introduction}, the cardinality of the 1/2-nets $N_1$ and $N_2$ is $5^{n_1}$ and $5^{n_2}$ respectively.

% of size at most $5^{n_1}, 5^{n_2}$ respectively (\cite[Lemma 5.2]{vershynin2010introduction}). For a metric space $(\Y,d)$ and some constant $\xi>0$, a subset $\mathcal{N}_\xi$ of $\Y$ is called an $\xi$-net of $\Y$ if every point $y_1\in \Y$ can be approximated to within $\xi$ by some point $y_2\in \mathcal{N}_\xi$, i.e., $d(y_1, y_2)\leq \xi$ \cite{vershynin2010introduction}. %??? define the net carefully.
By \cite[Lemma 5.4]{vershynin2010introduction},
\begin{eqnarray}
\|\mathcal{R}(\Eb)\| &=& \sup_{\x \in \mathbb{\Sb}^{n_2-1}, \y \in \mathbb{\Sb}^{n_1-1}} \<\y, \mathcal{R}(\Eb) \x\>  \nonumber \\
          & \le & 4 \sup_{\x\in N_2, \y \in N_1} \<\y, \mathcal{R}(\Eb) \x\>.
\end{eqnarray}
For a fixed pair $(\y,\x)$ of unit-normed vectors in $N_1 \times N_2$, define the random variable
\[
\X(\x,\y) := \<\y, \mathcal{R} (\Eb) \x\> = \<\mathcal{R}(\y\x^*), \Eb\>.
\]
Conditional on $\Omega = \text{supp}(\Eb)$, the signs of $\Eb$ are i.i.d.~symmetric and Hoeffding's inequality gives
\[
\Pb(|\X(\x,\y)| > t \, | \, \Omega) \le 2 \exp\Bigl(-\frac{2t^2}{
  \|\mathcal{R}(\y\x^*)\|_F^2}\Bigr).
\]
Now since $\|\y\x^*\|_F = 1$, the matrix $\mathcal{R}(\y\x^*)$ obeys $
\|\mathcal{R}(\y\x^*)\|_F \le \|\mathcal{R}\|$ and, therefore,
\[
\Pb\Bigl(\sup_{\x \in N_2, \y \in N_1} |\X(\x,\y)| > t \, | \, \Omega\Bigr) \le 2|N_1| |N_2|
\exp\Bigl(-\frac{2t^2}{\|\mathcal{R}\|^2}\Bigr).
\]
On the event $\{\|\PO\PP\| \le \sigma\}$, %(\ref{POPT}),
\[
\|\mathcal{R}\| \le \sum_{k \ge 1} \sigma^{2k} = \frac{\sigma^2}{1-\sigma^2}
\]
and, therefore, letting $\gamma =\frac{1-\sigma^2}{2\sigma^2}$, we have,
\[
\begin{array}{ll}
&\Pb(\lambda \|\mathcal{R}(\Eb)\| > \frac{27}{80}) \\
\le & \Pb(\lambda \|\mathcal{R}(\Eb)\| > \frac{27}{80}, \|\PO\PP\| \leq \sigma) + \Pb(\|\PO\PP\| > \sigma)\\
\le & \Pb\Bigl(\sup_{\x \in N_2, \y \in N_1} 4|\X(\x,\y)| > \frac{27\sqrt{n_{(1)}}}{80} \, | \, \|\PO\PP\| \leq \sigma\Bigr) +   \\& \Pb(\|\PO\PP\| > \sigma)\\
\le & 2|N_1| |N_2| \exp\Bigl(-\frac{27^2n_{(1)}\gamma^2}{12800}\Bigr) + \Pb(\|\PO\PP\| > \sigma)\\
\le &2 \times 5^{2n_{(1)}} \exp\Bigl(-\frac{27^2n_{(1)}\gamma^2}{12800}\Bigr) + 3n_{(1)}^{-10}\\
%    & \text{(As Lemma \ref{POPTc} and $1-\rho_s \geq 250C_{01}\rho_r(\log n_{(1)})^{-1}$ by (\ref{rhos2}))}\\
\le &2\exp\Bigl(-n_{(1)}(0.0570\gamma^2-\log 25)\Bigr) + 3n_{(1)}^{-10}\\
    & \text{(as $\sigma=\rho_s+0.2\leq 0.2156, \Rightarrow$ $0.0570\gamma^2-\log 25\geq 2.7773$)}\\
\le & 5n_{(1)}^{-10} \text{ (when $2.7773n_{(1)}\geq 10\log n_{(1)}$, e.g., $n_{(1)}\geq 10$.)}
\end{array}
\]
Thus
\[
\|\W^S\| \le 67/80,
\]
with probability at least $1-5 n_{(1)}^{-10}$. % when $2.7773n_{(1)}\geq 10\log n_{(1)}$, e.g., $n_{(1)}\geq 10$, $0.0570\gamma^2-\log 25\geq 2.7773$.
\end{proof}
% and $\gamma=\frac{1-\sigma^2}{2\sigma^2}\geq 10.2565,  0.0570\times10.2565^2-\log25=2.7773$.
%For example, we can take $\rho_s=0.0030$, then $5.6561\sqrt{\rho_s}=0.2191$, $\sigma=0.0030$, $\gamma=5.5555\times 10^{4}$, thus we can take $t=0.0309$, and we have
%\[
%\begin{array}{ll}
%\Pb(\lambda \|\mathcal{R}(\Eb)\| > t) \le& 2 \exp\Bigl(-1.473\times10^6n_{(1)}\Bigr) + 3n_{(1)}^{-\frac{8.1244\log n_{(1)}}{10^{11}C_{01}\rho_r}}\\
%\le& 2 \exp
%\end{array}
%\]
%is at most $5n_{(1)}^{-10}$ when $\rho_r \leq \frac{8.1244n_{(1)}}{10^{12}C_{01}}$ or $n_{(1)}$ is large enough.

{\bf Proof of (c)}
\begin{proof}
Observe that
\[
\begin{array}{ll}
 \POp \W^S  &= \lambda \POp (\I-\PP) (\PO - \PO \PP \PO)^{-1} \Eb\\
&= - \lambda \POp\PP (\PO - \PO \PP \PO)^{-1} \Eb
\end{array}
\]

Let $\W_3^S:= \POp \W^S$. Clearly, for $(i,j) \in \Omega$, $(\W_3^S)_{i,j} = 0$ and for $(i,j) \in \Omega^c$, $(\W_3^S)_{i,j} = (- \lambda \PP (\PO - \PO \PP \PO)^{-1} \Eb)_{i,j}$.
%Thus, for $(i,j) \in \Omega$, $(\POp \W^S)_{i,j} = 0$ and for $(i,j) \in \Omega^c$ and for $(i,j) \in \Omega^c$, $(\POp \W^S)_{i,j} = \W^S_{ij} = (- \lambda \PP (\PO - \PO \PP \PO)^{-1} \Eb)_{i,j}$. ???

For $(i,j) \in \Omega^c$, it can be rewritten as
\[
\begin{array}{ll}
(\W_3^S)_{ij} &= \<\e_i, \W_3^S \e_j\> = \<\e_i \e_j^*, \W_3^S\>\\
&= \<\e_i \e_j^*, -\lambda \PP \PO (\PO - \PO \PP \PO)^{-1} \Eb\> \\
&=\lambda \<\X(i,j), \Eb\>
\end{array}
\]
%and
%\[
%\begin{array}{ll}
%\W^S_{ij} = \lambda \<\X(i,j), \Eb\>,
%\end{array}
%\]
where $\X(i,j):= -(\PO - \PO \PP \PO)^{-1} \PO \PP(\e_i\e_j^*)$.  Conditional on $\Omega = \text{supp}(\Eb)$, the signs
of $\Eb$ are i.i.d.~symmetric, and Hoeffding's inequality gives
\[
\Pb( |(\W_3^S)_{ij}| > t \lambda \, | \, \Omega) \le 2
\exp\Bigl(-\frac{2t^2}{ \|\X(i,j)\|_F^2}\Bigr),
\]
and, thus,
\[
\Pb\Bigl(\sup_{i,j \in \Omega^c} |(\W_3^S)_{ij}| > t \lambda \, | \, \Omega\Bigr) \le 2n_{1}n_{2}
\exp\Bigl(-\frac{2t^2}{\sup_{i,j} \|\X(i,j)\|_F^2}\Bigr).
\]
Since \eqref{eq:PPeiej} holds, on the event $\{\|\PO\PP\| \le \sigma\}$, we have
\[
\|\PO \PP(\e_i\e_j^*)\|_F \le \|\PO \PP\| \|\PP(\e_i\e_j^*)\|_F \le \sigma
\sqrt{2\rho_r/\log^2n_{(1)}}
\]
On the same event, $\|(\PO -
\PO \PP \PO)^{-1}\| \le (1-\sigma^2)^{-1}$ and, therefore,
\[
\|\X(i,j)\|^2_F \le \frac{2\sigma^2}{(1-\sigma^2)^2} \, \frac{\rho_r}{\log^2n_{(1)}}.
\]
Then unconditionally, letting $\gamma = \frac{(1-\sigma^2)^2}{2\sigma^2}$, we have
%\With the same steps in \cite[Proof of Lemma 2.9]{rpca}, we have
\[
\begin{array}{ll}
&\Pb\Bigl(\| \POp \W^S\|_\infty > \frac{\lambda}{2}\Bigr) = \Pb\Bigl(\| \W_3^S\|_\infty > \frac{\lambda}{2}\Bigr) \\
& \le  2n_{(1)}n_{(2)}\exp\left(-\frac{\log^2 n_{(1)}\gamma^2}{4\rho_r}\right)+ \Pb(\|\PO\PP\| \ge \sigma)\\
& \le 2n_{(1)}^{-\frac{\log n_{(1)} \gamma^2}{4\rho_r}+2} + 3n_{(1)}^{-10}\\
%& \text{(As Lemma \ref{POPTc} and $1-\rho_s \geq 250C_{01}\rho_r(\log n_{(1)})^{-1}$ by (\ref{rhos2}), similar to (\ref{POPT}).)}\\
& \le 5n_{(1)}^{-10}
\end{array}
\]
The last bound follows since $\sigma=\rho_s+0.2\leq 0.2156$ by (\ref{rhos2}) and so $\gamma\geq 9.7798$; and $n_{(1)}\geq \exp(0.5019\rho_r)$ by Assumption \ref{rhosr}(c).
%This proves that the claim holds with probability at least $1-5n_{(1)}^{-10}$
%as by (\ref{rhos2}), $$\rho_s+0.2\leq \frac{\sqrt{(12\rho_r)^{1/2}+\log^{1/2}n_{(1)}}-(12\rho_r)^{1/4}}{\log^{1/4} n_{(1)}},$$ which implies $\frac{\log n_{(1)} \gamma^2}{4\rho_r}\geq 12$.

%when %$\mu r < \rho_r n_{(2)} (\log n_{(1)})^{-2}$ and
%$\frac{\log n_{(1)} \gamma^2}{16\rho_r}\geq 12$, i.e.,
To sum up, with the assumption in Lemma \ref{WSc}, we have (a), (b) in Lemma \ref{WSc} hold with probability at least $1-10n_{(1)}^{-10}$.
%when $\rho_s$ together with some constant $c_\sigma$ satisfies
%\begin{equation}
%\label{WScc}
%\begin{array}{ll}
%\rho_s\leq \min&\Big\{\frac{\sqrt{(3\rho_r)^{1/2}+\log^{1/2}n_{(1)}}-(3\rho_r)^{1/4}}{\log^{1/4} n_{(1)}}-c_\sigma,\\ &\frac{1}{\sqrt{45}}-c_\sigma, \frac{1}{64C_{04}^2}\Big\}
%\end{array}
%\end{equation}

\end{proof}

\subsection{Proof of Lemma \ref{FurediLemma}}
\label{proofFurediLemma}
\begin{proof}
The proof is the same as that given in \cite[Section 2]{vu2005spectral}. We rewrite it to clarify that variance of $a_{i,j}$ bounded by $\sigma^2$ also works.
%Most of the proof is the same as \cite[Section 2]{vu2005spectral}.

As we know
$$\sum_{i=1}^n\lambda_i(\A)^k=\text{Trace}(\A^k),$$
we have
$$\sum_{i=1}^n\E(\lambda_i(\A)^k)=\E(\text{Trace}(\A^k)).$$
When $k$ is even, $\lambda_i(\A)^k$ are non-negative. Thus
$$\E(\max_i(|\lambda_i(\A)|^k) \leq \sum_{i=1}^n\E(\lambda_i(\A)^k) = \E(\text{Trace}(\A^k)).$$
Notice that
\begin{equation}
\text{Trace}\A^k = \sum_{i_1=1}^n\cdots\sum_{i_k=1}^n a_{i_1i_2}a_{i_2i_3}\cdots a_{i_{k-1}i_k}a_{i_ki_1},
\end{equation}
so we have
\begin{equation}
\E(\text{Trace}\A^k) = \sum_{i_1=1}^n\cdots\sum_{i_k=1}^n \E a_{i_1i_2}a_{i_2i_3}\cdots a_{i_{k-1}i_k}a_{i_ki_1}.
\end{equation}
For $1\leq p \leq k$, denote by $E(n,k,p)$ the sum of $\E a_{i_1i_2}a_{i_2i_3}\cdots a_{i_{k-1}i_k}a_{i_ki_1}$ over all sequences $i_1, i_2, \cdots, i_k$ such that $|\{i_1, i_2, \cdots, i_k\}|=p$ (i.e., $p$ different indices). As the $\E a_{ij}=0$, if some $a_{ij}$ in the product $a_{i_1i_2}a_{i_2i_3}\cdots a_{i_{k-1}i_k}a_{i_ki_1}$ has multiplicity one, then the expectation of the whole product is 0. When $p>(k/2)+1$, by pigeon hole principle, there must exist an $a_{ij}$ with multiplicity one. Thus $E(n,k,p)=0$ when $p>(k/2)+1$.

Note that a product $a_{i_1i_2}a_{i_2i_3}\cdots a_{i_{k-1}i_k}a_{i_ki_1}$ defines a closed walk
$$(i_1i_2)(i_2i_3)\cdots(i_{k-1}i_k)(i_ki_1)$$
of length $k$ on the complete graph $K_n$ on $\{1, \cdots, n\}$ (here we allow loops in $K_n$). If a product is non-zero, then any edge in the walk should appear at least twice. Denote by $W(n,k,p)$ the number of walks in $K_n$ using $k$ edges and $p$ vertices where each edge in the walk is used at least twice.

For a walk $W$ with $p$ vertices, denote by $V(W)=v_1, v_2, \cdots, v_p$ the ordered sequence. For graph $K_n$ with $n$ vertices, there are $n(n-1)\cdots(n-p+1)$ different ordered sequence. Denote by $W'(n,k,p)$ the number of walks with fixed sequence. Clearly, $$W(n,k,p)=n(n-1)\cdots(n-p+1)W'(n,k,p).$$

\begin{lemma}\cite{furedi1981eigenvalues}\cite[Lemma 2.1]{vu2005spectral}\cite[Problem 1.33]{lovasz1993combinatorial}
We have $$W'(n,k,p)\leq \left(k\atop 2p-2\right)p^{2(k-2p+2)}2^{2p-2}.$$
\end{lemma}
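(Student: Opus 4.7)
The plan is to establish this bound via an injective encoding of the qualifying walks, which is the classical Wigner--F\"uredi--Koml\'os moment-method technique. Fix the ordered vertex sequence $v_1,\ldots,v_p$ and consider any walk of length $k$ on these vertices that is closed and uses every distinct edge at least twice. I would extract a canonical spanning tree $T$ on the $p$ vertices by declaring $T$ to be the set of edges whose first traversal in the walk is also the step at which a previously unvisited vertex is reached. Since the walk visits every $v_i$, the tree $T$ has exactly $p-1$ edges; since every edge (and therefore every tree edge) is used at least twice, each tree edge has both a ``first-traversal'' position and a (strictly later) ``last-traversal'' position.

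Let $\mathcal{S}\subset\{1,\ldots,k\}$ consist of the $p-1$ first-traversal positions and the $p-1$ last-traversal positions of the tree edges, a set of exactly $2p-2$ indices; this selection accounts for the binomial factor $\binom{k}{2p-2}$. At each position in $\mathcal{S}$ I would record one bit distinguishing the first-traversal role from the last-traversal role, which contributes the factor $2^{2p-2}$. At each of the remaining $k-2p+2$ positions the walk takes an arbitrary step among the $p$ visited vertices; I would crudely bound the information needed to reconstruct each such step by an ordered pair of endpoints in $\{v_1,\ldots,v_p\}$, yielding the $p^{2(k-2p+2)}$ factor. This associates to each walk a triple (subset of positions, bit string, label sequence) whose total count is precisely the right-hand side of the claimed bound.

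The core of the argument, and the principal obstacle, will be verifying that the constructed map is injective, i.e.\ that the original walk can be reconstructed uniquely from this triple. The plan is to show that the interleaving of first- and last-traversal positions among the $2p-2$ distinguished indices forms a nested bracket structure that determines the rooted tree $T$ uniquely (via the standard bijection between balanced bracketings and plane trees), after which the walk is replayed step by step: at each distinguished position the bit together with $T$ determines the next edge, and at each free position the ordered-pair label pins down the next edge directly. The deliberate looseness of $p^2$ rather than $p$ at free positions is what absorbs the ambiguity that the source vertex of a free step need not always be recoverable from the data preceding it in this scheme, and this slack is what makes the injectivity proof go through without subtle case analysis. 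Everything else reduces to routine counting and involves no probabilistic input beyond what has already been used in Lemma~\ref{sign2norm}.
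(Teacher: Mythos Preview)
The paper does not supply its own proof of this lemma; it is quoted directly from the cited references.  Your encoding is the classical F\"uredi--Koml\'os scheme and the codeword count matches the stated bound exactly, so the whole argument rests on the injectivity of the map from walks to codewords.

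Your injectivity plan has a genuine gap.  You assert that the sequence of first/last bits at the $2p-2$ distinguished positions forms a \emph{nested} bracket structure which recovers the rooted tree $T$ via the standard Dyck-word/plane-tree bijection, and that the walk can then be replayed using $T$.  Neither assertion is correct.  Take the closed walk
\[
v_1\,v_2\,v_1\,v_3\,v_1\,v_2\,v_1\,v_3\,v_1
\]
of length $8$ on three vertices (each of the two edges used four times).  The discovery tree is the star centred at $v_1$.  The first-traversal positions are $1,3$; the last-traversal positions are $6$ (for the edge $(v_1,v_2)$) and $8$ (for $(v_1,v_3)$).  Matching each tree edge's first traversal to its last gives the pairs $(1,6)$ and $(3,8)$, which is a \emph{crossing} matching, not a nested one.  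Moreover, although the F/L word read in order is $\mathrm{FFLL}$, a legitimate Dyck word, the plane tree it encodes under the standard bijection is the three-vertex path, not the star that is the actual discovery tree.  So the bracket structure neither is guaranteed to be non-crossing nor determines $T$ even when it happens to be a Dyck word.

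The underlying difficulty is at L-steps: knowing only the single bit ``last traversal'' and the current vertex does not pin down the next vertex when the current vertex is incident to several tree edges that have not yet been closed, and your codeword records no further information at those positions.  The $p^2$ slack you place at free steps helps only when a free step immediately follows the ambiguous L-step (its first coordinate then reveals the L-step's endpoint); it does not resolve a run of consecutive L- or F-steps.  In the cited proofs this is handled either by choosing the $p-1$ ``return'' positions differently (so that at each such step the destination is forced to be the tree-parent of the current vertex) or by recording the destination vertex at the return steps themselves; you should consult those references for a complete decoding argument.
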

As $|a_{ij}|\leq K$, we have, for any $l\geq2$, $$\E(|a_{ij}|^l)\leq K^{l-2}\E(|a_{ij}|^2)\leq K^{l-2}\sigma^2.$$
With $p$ vertices, there are at least $p-1$ different $a_{ij}$'s, denoted by $\{a_{i_1j_1}, a_{i_2j_2}, \cdots, a_{i_{m}j_{m}}\}, m\geq p-1$, and each of them has multiplicity at least $2$, so we have
\begin{align*}
&\E(a_{i_1i_2}a_{i_2i_3}\cdots a_{i_{k-1}i_k}a_{i_ki_1}) \\
=&\E(a_{i_1j_1}^{l_1} a_{i_2j_2}^{l_2} \cdots a_{i_{m}j_{m}}^{l_m})\\
\leq& K^{k-(2p-2)}\E(a_{i_1j_1}^{2} a_{i_2j_2}^{2} \cdots a_{i_{p-1}j_{p-1}}^{2})\\
\leq& K^{k-(2p-2)}\sigma^{2p-2}
\end{align*}
Thus, we have
\[
\begin{array}{ll}
&E(n,k,p)\\
\leq& \sigma^{2p-2}K^{k-(2p-2)} W(n,k,p) \\
\leq& \sigma^{2p-2}K^{k-(2p-2)}n(n-1)\cdots(n-p+1)\left(k\atop 2p-2\right)p^{2(k-2p+2)}2^{2p-2}\\
\equiv& S(n,k,p)
\end{array}
\]
And
\begin{align*}
&\ \frac{S(n,k,p-1)}{S(n,k,p)}\\
=&\ \frac{K^2}{4\sigma^2(n-p+1)}\frac{\left(k\atop 2p-4\right)}{\left(k\atop 2p-2\right)}\frac{(p-1)^{2(k-2p+4)}}{p^{2(k-2p+2)}}\\
=&\ \frac{K^2}{4\sigma^2(n-p+1)}\frac{(2p-3)(2p-4)}{(k-2p+3)(k-2p+4)}\frac{(p-1)^{2(k-2p+4)}}{p^{2(k-2p+2)}}\\
\leq&\  \frac{K^2}{4\sigma^2n}\frac{k^2}{1}\frac{p^{2(k-2p+4)}}{p^{2(k-2p+2)}}\ \ \text{(because $p\leq k/2+1$)}\\
\leq&\  \frac{K^2k^6}{4\sigma^2n}
\end{align*}
Thus for $k\leq (\frac{\sigma}{K})^{1/3}(2n)^{1/6}$, $S(n,k,p-1)\leq \frac{1}{2}S(n,k,p)$. So
\begin{align*}
\E (\text{Trace}(\A^k)) &=\sum_{p=1}^{k/2+1}E(n,k,p)\\
&\leq \sum_{p=1}^{k/2+1}S(n,k,p)\\
&\leq 2S(n,k,k/2+1)\\
&=2\sigma^kn(n-1)\cdots(n-k/2)2^k\\
&\leq 2n(2\sigma\sqrt{n})^k
\end{align*}
By Markov's inequality, we have
\begin{align*}
&\ \Pb(\max_i(|\lambda_i(\A)|)\geq 2\sigma\sqrt{n}+v)\\
=&\ \Pb(\max_i(|\lambda_i(\A)|^k)\geq (2\sigma\sqrt{n}+v)^k)\\
\leq&\  \frac{\E(\max_i(|\lambda_i(\A)|^k))}{(2\sigma\sqrt{n}+v)^k}\\
\leq&\  \frac{2n(2\sigma\sqrt{n})^k}{(2\sigma\sqrt{n}+v)^k}\\
=&\  2n(1-\frac{v}{2\sigma\sqrt{n}+v})^k\\
\leq&\  2n\exp(-\frac{kv}{2\sigma\sqrt{n}+v})
\end{align*}
The last inequality holds for $0<\frac{v}{2\sigma\sqrt{n}+v}<1$, i.e., $v>0$. (Because for $0<x<1$, $(1-x)^k\leq \exp(-kx)$ $\Leftrightarrow$ $1-x \leq \exp(-x)$, which is easy to check. )
\end{proof}

\subsection{Bound on $\|\Eb\|$ by \cite{vershynin2010introduction}}
\label{vershynin_bound}

In \cite{rpca}, they need $\|\Eb\|<0.25\sqrt{n_{(1)}}$ with large probability. Here we derive the condition needed for $\|\Eb\|<\alpha\sqrt{n_{(1)}}, 0<\alpha<1$, with large probability.

By \cite[Lemma 5.36]{vershynin2010introduction}, and assume $\delta=\frac{\alpha}{\sqrt{\rho_s}}-1>1$, we only need to prove
$$\|\frac{1}{n_1\rho_s}\Eb^*\Eb-\Ib\|\leq \max(\delta,\delta^2)=\delta^2$$
%$$\|\frac{1}{n_1\rho_s}E^*E-I\|\leq 9$$
with required probability.
By \cite[Lemma 5.4]{vershynin2010introduction}, for a $\frac{1}{4}$-net $\mathcal{N}$ of the unit sphere $S^{n-1}$, we have
\begin{align*}
&\ \|\frac{1}{n_1\rho_s}\Eb^*\Eb-\Ib\| \\
\leq &\ 2\max_{x\in\N}|\langle(\frac{1}{n_1\rho_s}\Eb^*\Eb-\Ib)x,x\rangle| \\
= &\ 2\max_{x\in \N}|\frac{1}{n_1\rho_s}\|\Eb x\|^2-1|.
\end{align*}
Thus we only need to prove $$\max_{x\in \N}|\frac{1}{n_1\rho_s}\|\Eb x\|^2-1|\leq \frac{\delta^2}{2}$$ with required probability.
By \cite[Lemma 5.2]{vershynin2010introduction}, we can choose the net $\N$ so that it has cardinality $|\N|\leq 9^{n_2}$.

As we know, for any unit norm vector $\x\in C^{n_2}$ and any fixed $\rho_s\in(0,1)$, $\{\frac{\Eb_i\x}{\sqrt{\rho_s}}\}_{i=1}^{n_1}$ are bounded by $\frac{\sum_{j=1}^{n_1}|\x_j|}{\sqrt{\rho_s}}$, thus they are sub-gaussian.
By \cite[Lemma 5.14]{vershynin2010introduction}, we have $\{\frac{|\Eb_i\x|^2}{\rho_s}\}_{i=1}^{n_1}$ are sub-exponential.
As $$\E \frac{|\Eb_i\x|^2}{\rho_s}= \|\x\|^2=1, i=1,2,\cdots,n_1,$$ thus by \cite[Remark 5.18]{vershynin2010introduction}, $\{\frac{|\Eb_i\x|^2}{\rho_s}-1\}_{i=1}^{n_1}$ are independent centered sub-exponential random variables and $\|\frac{|\Eb_i\x|^2}{\rho_s}-1\|_{\psi_1}\leq 2K_x$, where
$$K_x=\sup_{p\geq 1} p^{-1}(\E\frac{|\Eb_i\x|^{2p}}{\rho_s})^{1/p},$$
i.e., $$(\E\frac{|\Eb_i\x|^{2p}}{\rho_s})^{1/p}\leq K_xp, \ \forall p\geq 1,$$
Defined by \cite[(5.15)]{vershynin2010introduction}.

Let $$B_i=\frac{|\Eb_i\x|^2}{\rho_s}-1,\ \ i=1, 2, \cdots, n_1,$$ then $$\E B_i=0, (\E B_i^p)^{1/p}\leq 2K_xp, \forall p\geq1$$ and for $t\leq \frac{1}{4eK_x}$, we have

\begin{align*}
\E \exp(tB_i)&=1 + t\E B_i + \sum_{p=2}^\infty \frac{t^p\E B_i^{p}}{p!}\\
&\leq 1+\sum_{p=2}^\infty \frac{t^p2^pK_x^pp^p}{p!}\\
&\leq 1+\sum_{p=2}^\infty (2etK_x)^p\\
&\leq 1 + (2etK_x)^2\\
&\leq \exp(4e^2t^2K_x^2)
\end{align*}
the second inequality holds because $p!\geq (p/e)^p$; the third inequality holds because $2etK_x\leq 1/2$. Thus
 $$\E\exp(t\sum_{i=1}^{n_1}B_i)\leq \exp(4n_1e^2t^2K_x^2).$$
By Markov inequality, we have
\begin{align*}
\Pb(\frac{1}{n_1}\sum_{i=1}^{n_1}B_i \geq \frac{\delta^2}{2})&= \Pb(\exp(\frac{\tau}{n_1}\sum_{i=1}^{n_1}B_i )\geq \exp(\tau\delta^2/2))\\
&\leq e^{-\tau \delta^2/2}\E \exp(\frac{\tau}{n_1}\sum_{i=1}^{n_1}B_i )\\
&\leq e^{-\tau \delta^2/2 + 4e^2\tau^2K_x^2/n_1}
\end{align*}
when $\frac{\tau}{n_1}\leq \frac{1}{4eK_x}$, i.e., $\tau \leq \frac{n_1}{4eK_x}$.
Take $\tau=\min\{\frac{n_1\delta^2}{16e^2K_x^2},\frac{n_1}{4eK_x}\}$, we have
\begin{align*}
&\ \ \Pb(|\frac{1}{n_1\rho_s}\|\Eb \x\|^2-1|\geq \frac{\delta^2}{2})\\
&=\Pb(\frac{1}{n_1}\sum_{i=1}^{n_1}B_i \geq \frac{\delta^2}{2})\\
&\leq \exp({-\tau \delta^2/2 + 4e^2\tau^2K_x^2/n_1})\\
&\leq \exp({-\tau \delta^2/2 + \tau\delta^2/4})\\
&\leq \exp(-\min\{\frac{n_1\delta^4}{64e^2K_x^2},\frac{n_1\delta^2}{16eK_x}\})\\
&= \exp(-\frac{n_1\delta^2}{16eK_x}\min\{\frac{\delta^2}{4eK_x},1\})
\end{align*}
Let $$K=\sup_{\x\in \N} K_x,$$ then $$\Pb(\max_{\x\in \N}|\frac{1}{n_1\rho_s}\|\Eb \x\|^2-1|\geq \frac{\delta^2}{2})\leq 9^{n_2} \exp(-\frac{n_1\delta^2}{16eK}\min\{\frac{\delta^2}{4eK},1\}),$$
where $\delta^2=(\frac{\alpha}{\sqrt{\rho_s}}-1)^2=\frac{(\alpha-\sqrt{\rho_s})^2}{\rho_s}$.

So far the loose bound on $K$ we can get is $n_2/\rho_s$, so the best we can get is
\begin{align*}
&\ \ \Pb(\max_{\x\in \N}|\frac{1}{n_1\rho_s}\|\Eb \x\|^2-1|\geq \frac{\delta^2}{2})\\
&\leq 9^{n_2} \exp(-\frac{(\alpha-\sqrt{\rho_s})^2n_1}{16en_2}\min\{\frac{(\alpha-\sqrt{\rho_s})^2}{4en_2},1\})\\
&=9^{n_2} \exp\left(-\frac{(\alpha-\sqrt{\rho_s})^4n_1}{64e^2n_2^2}\right).
\end{align*}
Together with \cite[Lemma 5.36]{vershynin2010introduction}, we can get bound on $\|\Eb\|$. If we take $n_2=c\log n_1$ for some constant $c$, we have
\begin{align*}
&\ \ \Pb(\|\Eb\| \leq \alpha \sqrt{n_1})\\
 &= \Pb(\max_{\x\in \N}|\frac{1}{n_1\rho_s}\|\Eb \x\|^2-1|\geq \frac{\delta^2}{2})\\
 &\leq \exp\left(-\frac{(\alpha-\sqrt{\rho_s})^4n_1}{64e^2 c^2\log^2n_1}+c\log9\log n_1\right),
\end{align*}
which gives what we want when $n_1$ is large enough,
 $$-\frac{(\alpha-\sqrt{\rho_s})^4n_1}{64e^2 c^2\log^2n_1}+c\log9\log n_1 \leq - 10 \log n_1,$$
 i.e., $$\frac{n_1}{\log^3 n_1} \geq \frac{(\alpha-\sqrt{\rho_s})^4}{64e^2(10+c\log 9) c^2}$$
 %(because for any $\epsilon \in (0,1)$, we have $\log n_1\leq n_1^{\epsilon}$ for large enough $n_1$).
 But if $n_2$ is the order of $n_1$ or larger, we don't have the result with large probability.

\bibliographystyle{IEEEbib}
\bibliography{tipnewpfmt_kfcsfullpap}

\begin{thebibliography}{10}

\bibitem{modpcp_isit}
J.~Zhan and N.~Vaswani,
\newblock ``Robust pca with partial subspace knowledge,''
\newblock in {\em IEEE Intl. Symp. on Information Theory (ISIT)}, 2014.

\bibitem{error_correction_PCP_l1}
J.~Wright and Y.~Ma,
\newblock ``Dense error correction via l1-minimization,''
\newblock {\em IEEE Trans. on Info. Th.}, vol. 56, no. 7, pp. 3540--3560, 2010.

\bibitem{rpca}
E.~J. Cand{\`e}s, X.~Li, Y.~Ma, and J.~Wright,
\newblock ``Robust principal component analysis?,''
\newblock {\em Journal of ACM}, vol. 58, no. 3, 2011.

\bibitem{rpca2}
V.~Chandrasekaran, S.~Sanghavi, P.~A. Parrilo, and A.~S. Willsky,
\newblock ``Rank-sparsity incoherence for matrix decomposition,''
\newblock {\em SIAM Journal on Optimization}, vol. 21, 2011.

\bibitem{novel_m_estimator}
T.~Zhang and G.~Lerman,
\newblock ``A novel m-estimator for robust pca,''
\newblock {\em arXiv:1112.4863v1}, 2011.

\bibitem{mccoy_tropp11}
M.~McCoy and J.~Tropp,
\newblock ``Two proposals for robust pca using semidefinite programming,''
\newblock {\em arXiv:1012.1086v3}, 2010.

\bibitem{outlier_pursuit}
H.~Xu, C.~Caramanis, and S.~Sanghavi,
\newblock ``Robust pca via outlier pursuit,''
\newblock {\em IEEE Tran. on Information Theorey}, vol. 58, no. 5, May 2012.

\bibitem{RMD}
D.~Hsu, S.~M Kakade, and T.~Zhang,
\newblock ``Robust matrix decomposition with sparse corruptions,''
\newblock {\em Information Theory, IEEE Transactions on}, vol. 57, no. 11, pp.
  7221--7234, 2011.

\bibitem{noisy_undersampled_yuan}
M.~Tao and X.~Yuan,
\newblock ``Recovering low-rank and sparse components of matrices from
  incomplete and noisy observations,''
\newblock {\em SIAM Journal on Optimization}, vol. 21, no. 1, pp. 57--81, 2011.

\bibitem{agarwal2012noisy}
A.~Agarwal, S.~Negahban, and M.~J Wainwright,
\newblock ``Noisy matrix decomposition via convex relaxation: Optimal rates in
  high dimensions,''
\newblock {\em The Annals of Statistics}, vol. 40, no. 2, pp. 1171--1197, 2012.

\bibitem{seidel2013prost}
F.~Seidel, C.~Hage, and M.~Kleinsteuber,
\newblock ``prost: A smoothed lp-norm robust online subspace tracking method
  for realtime background subtraction in video,''
\newblock {\em arXiv preprint arXiv:1302.2073}, 2013.

\bibitem{xu2013gosus}
J.~Xu, V.~K Ithapu, L.~Mukherjee, J.~M Rehg, and V.~Singh,
\newblock ``Gosus: Grassmannian online subspace updates with
  structured-sparsity,''
\newblock in {\em Computer Vision (ICCV), 2013 IEEE International Conference
  on}. IEEE, 2013, pp. 3376--3383.

\bibitem{bouwmans2014robust}
T.~Bouwmans and E.~Zahzah,
\newblock ``Robust pca via principal component pursuit: A review for a
  comparative evaluation in video surveillance,''
\newblock {\em Computer Vision and Image Understanding}, vol. 122, pp. 22--34,
  2014.

\bibitem{rrpcp_perf}
C.~Qiu, N.~Vaswani, B.~Lois, and L.~Hogben,
\newblock ``Recursive robust pca or recursive sparse recovery in large but
  structured noise,''
\newblock {\em IEEE Trans. Info. Th.}, August 2014.

\bibitem{alm}
Z.~Lin, M.~Chen, and Y.~Ma,
\newblock ``The augmented lagrange multiplier method for exact recovery of
  corrupted low-rank matrices,''
\newblock {\em arXiv preprint arXiv:1009.5055}, 2010.

\bibitem{hale2008fixed}
E.~T Hale, W.~Yin, and Y.~Zhang,
\newblock ``Fixed-point continuation for $\ell_1$-minimization: Methodology and
  convergence,''
\newblock {\em SIAM Journal on Optimization}, vol. 19, no. 3, pp. 1107--1130,
  2008.

\bibitem{cai2010singular}
J.~Cai, E.~J Cand{\`e}s, and Z.~Shen,
\newblock ``A singular value thresholding algorithm for matrix completion,''
\newblock {\em SIAM Journal on Optimization}, vol. 20, no. 4, pp. 1956--1982,
  2010.

\bibitem{modcsjournal}
N.~Vaswani and W.~Lu,
\newblock ``Modified-cs: Modifying compressive sensing for problems with
  partially known support,''
\newblock {\em IEEE Trans. Signal Processing}, September 2010.

\bibitem{candes2009exact}
E.~J Cand{\`e}s and B.~Recht,
\newblock ``Exact matrix completion via convex optimization,''
\newblock {\em Foundations of Computational mathematics}, vol. 9, no. 6, pp.
  717--772, 2009.

\bibitem{rrpcp_allerton}
C.~Qiu and N.~Vaswani,
\newblock ``Real-time robust principal components' pursuit,''
\newblock in {\em Allerton}, 2010.

\bibitem{rrpcp_tsp}
H.~Guo, C.~Qiu, and N.~Vaswani,
\newblock ``An online algorithm for separating sparse and low-dimensional
  signal sequences from their sum,''
\newblock {\em IEEE Trans. Sig. Proc.}, 2014.

\bibitem{xu_nips2013_1}
J.~Feng, H.~Xu, and S.~Yan,
\newblock ``Online robust pca via stochastic optimization,''
\newblock in {\em Adv. Neural Info. Proc. Sys. (NIPS)}, 2013.

\bibitem{xu_nips2013_2}
J.~Feng, H.~Xu, S.~Mannor, and S.~Yan,
\newblock ``Online pca for contaminated data,''
\newblock in {\em Adv. Neural Info. Proc. Sys. (NIPS)}, 2013.

\bibitem{furedi1981eigenvalues}
Z.~F{\"u}redi and J.~Koml{\'o}s,
\newblock ``The eigenvalues of random symmetric matrices,''
\newblock {\em Combinatorica}, vol. 1, no. 3, pp. 233--241, 1981.

\bibitem{vershynin2010introduction}
R.~Vershynin,
\newblock ``Introduction to the non-asymptotic analysis of random matrices,''
\newblock {\em arXiv preprint arXiv:1011.3027}, 2010.

\bibitem{gross2010quantum}
D.~Gross, Y.~Liu, S.~T Flammia, S.~Becker, and J.~Eisert,
\newblock ``Quantum state tomography via compressed sensing,''
\newblock {\em Physical review letters}, vol. 105, no. 15, pp. 150401, 2010.

\bibitem{subgradient}
J.~Hiriart-Urruty and C.~Lemaréchal,
\newblock ``Fundamentals of convex analysis,'' 2001.

\bibitem{lewis2003mathematics}
A.~S Lewis,
\newblock ``The mathematics of eigenvalue optimization,''
\newblock {\em Mathematical Programming}, vol. 97, no. 1-2, pp. 155--176, 2003.

\bibitem{watson1992characterization}
G~A. Watson,
\newblock ``Characterization of the subdifferential of some matrix norms,''
\newblock {\em Linear Algebra and its Applications}, vol. 170, pp. 33--45,
  1992.

\bibitem{recht2010guaranteed}
B.~Recht, M.~Fazel, and P.~A Parrilo,
\newblock ``Guaranteed minimum-rank solutions of linear matrix equations via
  nuclear norm minimization,''
\newblock {\em SIAM review}, vol. 52, no. 3, pp. 471--501, 2010.

\bibitem{operatornorm}
J.~Fessler,
\newblock ``Linear operators and adjoints,''
\newblock {\em \url{http://web.eecs.umich.edu/~fessler/course/600/l/l06.pdf}},
  p.~12.

\bibitem{gross2011recovering}
D.~Gross,
\newblock ``Recovering low-rank matrices from few coefficients in any basis,''
\newblock {\em Information Theory, IEEE Transactions on}, vol. 57, no. 3, pp.
  1548--1566, 2011.

\bibitem{yaleface}
P.~N. Belhumeur, J.~P Hespanha, and D.~Kriegman,
\newblock ``Eigenfaces vs. fisherfaces: Recognition using class specific linear
  projection,''
\newblock {\em IEEE Trans. Pattern Anal. Machine Intell.}, vol. 19, no. 7, pp.
  711--720, 1997.

\bibitem{Torre03aframework}
F.~De~La Torre and M.~J. Black,
\newblock ``A framework for robust subspace learning,''
\newblock {\em International Journal of Computer Vision}, vol. 54, pp.
  117--142, 2003.

\bibitem{grass_undersampled}
J.~He, L.~Balzano, and A.~Szlam,
\newblock ``Incremental gradient on the grassmannian for online foreground and
  background separation in subsampled video,''
\newblock in {\em IEEE Conf. on Comp. Vis. Pat. Rec. (CVPR)}, 2012.

\bibitem{wright2009robust}
John Wright, Allen~Y Yang, Arvind Ganesh, Shankar~S Sastry, and Yi~Ma,
\newblock ``Robust face recognition via sparse representation,''
\newblock {\em IEEE Trans. Patt. Anal. Mach. Intell. (PAMI)}, vol. 31, no. 2,
  pp. 210--227, 2009.

\bibitem{reprocs_correctness_result}
B.~Lois and N.~Vaswani,
\newblock ``A correctness result for online robust pca,''
\newblock {\em Submitted to IEEE Transaction on Information Theory}, 2014.

\bibitem{stablepcp}
Z.~Zhou, X.~Li, J.~Wright, E.~Candes, and Y.~Ma,
\newblock ``Stable principal component pursuit,''
\newblock in {\em IEEE Intl. Symp. on Information Theory (ISIT)}. IEEE, 2010,
  pp. 1518--1522.

\bibitem{achlioptas2001fast}
D.~Achlioptas and F.~McSherry,
\newblock ``Fast computation of low rank matrix approximations,''
\newblock in {\em Proceedings of the thirty-third annual ACM symposium on
  Theory of computing}. ACM, 2001, pp. 611--618.

\bibitem{vu2005spectral}
V.~H Vu,
\newblock ``Spectral norm of random matrices,''
\newblock in {\em Proceedings of the thirty-seventh annual ACM symposium on
  Theory of computing}. ACM, 2005, pp. 423--430.

\bibitem{lovasz1993combinatorial}
L{\'a}szl{\'o} Lov{\'a}sz,
\newblock {\em Combinatorial problems and exercises}, vol. 361,
\newblock American Mathematical Soc., 1993.

\end{thebibliography}

\end{document}